\newif\ifwww
\newif\ifarxiv
\newif\ifcolor
\newif\ifhide
\begin{document}

\newtheorem{claim}{Claim}
\newtheorem{observation}[claim]{Observation}

\ifarxiv
\newtheorem{theorem}{Theorem}
\newtheorem{definition}{Definition}
\newtheorem{example}{Example}
\newtheorem{lemma}{Lemma}
\newtheorem{corollary}[claim]{Corollary}
\newtheorem{proposition}[claim]{Proposition}
\fi

\ifwww
\title{Selling a Single Item with Negative Externalities}
\subtitle{To Regulate Production or Payments?}
\fi
\arxiv{\title{Selling a Single Item with Negative Externalities\\To Regulate Production or Payments?}}

\ifwww
\author{Tithi Chattopadhyay}
\affiliation{%
  \institution{Princeton University}
  \streetaddress{Center for Information Technology Policy, 303 Sherrerd Hall}
  \city{Princeton}
  \state{New Jersey}
  \postcode{08544}
}
\email{tithic@princeton.edu}

\author{Nick Feamster}
\affiliation{%
  \institution{Princeton University}
  \streetaddress{Computer Science Department, 35 Olden Street}
  \city{Princeton}
  \state{New Jersey}
  \postcode{08544}
}
\email{feamster@princeton.edu}

\author{Matheus V. X. Ferreira}
\orcid{0000-0002-7695-026X}
\affiliation{%
  \institution{Princeton University}
  \streetaddress{Computer Science Department, 194 Nassau Street}
  \city{Princeton}
  \state{New Jersey}
  \postcode{08540}
}
\email{mvxf@cs.princeton.edu}

\author{Danny Yuxing Huang}
\affiliation{%
  \institution{Princeton University}
  \streetaddress{Computer Science Department, 35 Olden Street}
  \city{Princeton}
  \state{New Jersey}
  \postcode{08544}
}
\email{yuxingh@cs.princeton.edu}

\author{S. Matthew Weinberg}
\affiliation{%
  \institution{Princeton University}
  \streetaddress{Computer Science Department, 194 Nassau Street}
  \city{Princeton}
  \state{New Jersey}
  \postcode{08540}
}
\email{smweinberg@princeton.edu}
\fi

\ifarxiv
\author{
  Tithi Chattopadhyay\\
  Center for Information Technology Policy\\
  Princeton University\\
  Princeton, NJ 08540 \\
  \texttt{tithic@princeton.edu}\\
    \And
    Nick Feamster\\
    Department of Computer Science\\
    Princeton University\\
    Princeton, NJ 08540 \\
    \texttt{feamster@princeton.edu}
    \And
    Matheus V. X. Ferreira\\
    Department of Computer Science\\
    Princeton University\\
    Princeton, NJ 08540 \\
    \texttt{mvxf@cs.princeton.edu}
    \And
    Danny Yuxing Huang\\
    Department of Computer Science\\
    Princeton University\\
    Princeton, NJ 08540 \\
    \texttt{yuxingh@cs.princeton.edu}
    \And
    S. Matthew Weinberg\\
    Department of Computer Science\\
    Princeton University\\
    Princeton, NJ 08540 \\
    \texttt{smweinberg@princeton.edu}
}
\fi

\ifwww
\renewcommand{\shortauthors}{T. Chattopadhyay et al.}
\fi

\newcommand{\indicator}[1]{\mathbb I(#1)}
\newcommand{\expectation}[2]{\mathbb E_{#1}[#2]}
\newcommand{\rev}{\textsc{Prof}}
\newcommand{\ext}{\textsc{Ext}}
\newcommand{\risk}{\textsc{risk}}
\newcommand{\hygiene}{h^*}
\newcommand{\pvalue}{\textsc{value}}
\newcommand{\effort}{\textsc{effort}}
\newcommand{\supp}{\textsc{supp}}
\newcommand{\APPROX}{\textsc{Approx}}
\newcommand{\SKILL}{\textsc{Skill}}
\newcommand{\FINE}{\textsc{Fine}}
\newcommand{\BLOWUP}{\textsc{Blowup}}
\newcommand{\COST}{\textsc{Cost}}
\newcommand{\INV}{\textsc{Inv}}
\newcommand{\HEAVY}{\textsc{Heavy}}
\newcommand{\loss}{\ensuremath{\ell}}

\newcommand{\danny}[1]{\textcolor{blue}{#1}}

\newcommand{\nick}[1]{ \textcolor{red}{NF:} \textcolor{blue}{[#1]}}
\newcommand{\mattnote}[1]{\textcolor{blue}{#1}}

\newcommand{\matheusnote}[1]{
\ifhide
\else
\textcolor{orange}{#1}
\fi
}

 \newcommand{\hide}[1]{{\iffalse #1 \fi}}
 
\newcommand{\short}[1]{#1}
\newcommand{\complete}[1]{{\iffalse #1 \fi}}

\newcommand{\note}[1]{\textcolor{red}{#1}}

\arxiv{\maketitle}

\begin{abstract}
We consider the problem of regulating products with negative externalities to a third party that is neither the buyer nor the seller, but where both the buyer and seller can take steps to mitigate the externality. The motivating example to have in mind is the sale of Internet-of-Things (IoT) devices, many of which have historically been compromised for DDoS attacks that disrupted Internet-wide services such as Twitter~\cite{krebs, guardian}. Neither the buyer (i.e., consumers) nor seller (i.e., IoT manufacturers) was known to suffer from the attack, but both have the power to expend effort to secure their devices. We consider a regulator who regulates payments (via fines if the device is compromised, or market prices directly), or the product directly via mandatory security requirements.

Both regulations come at a cost---implementing security requirements increases production costs, and the existence of fines decreases consumers' values---thereby reducing the seller's profits. The focus of this paper is to understand the \emph{efficiency} of various regulatory policies. That is, policy A is more efficient than policy B if A more successfully minimizes negatives externalities, while both A and B reduce seller's profits equally.

We develop a simple model to capture the impact of regulatory policies on a buyer's behavior.  {In this model, we show that for \textit{homogeneous} markets---where the buyer's ability to follow security practices is always high or always low---the optimal (externality-minimizing for a given profit constraint) regulatory policy need regulate \emph{only} payments \emph{or} production.}
In arbitrary markets, by contrast, we show that while the optimal policy may require regulating both aspects, there is always an approximately optimal policy which regulates just one.
\end{abstract}

\ifwww
\begin{CCSXML}
<ccs2012>
 <concept>
  <concept_id>10003752.10010070.10010099.10010101</concept_id>
  <concept_desc>Theory of computation~Theory and algorithms for application domains~Algorithmic game theory and mechanism design~Algorithmic mechanism design</concept_desc>
  <concept_significance>500</concept_significance>
 </concept>
 <concept>
  <concept_id>10002978.10003029.10003031</concept_id>
  <concept_desc>Security and privacy~Human and societal aspects of security and privacy~Economics of security and privacy</concept_desc>
  <concept_significance>300</concept_significance>
 </concept>
</ccs2012>
\end{CCSXML}

\ccsdesc[500]{Theory of computation~Theory and algorithms for application domains~Algorithmic game theory and mechanism design~Algorithmic mechanism design}
\ccsdesc[300]{Security and privacy~Human and societal aspects of security and privacy~Economics of security and privacy}
\fi

\keywords{Mechanism Design and Approximation, Auction Design, Negative Externalities, Tragedy of the Commons, Policy and Regulation.}

\www{\maketitle}

\matheusnote{Purple is only rendered on the arxiv version. Brown is only rendered on the www version. (the flags are set on top of paper.tex)}
\sloppy
\section{Introduction}\label{sec-introduction}
The Tragedy of the Commons is a well-documented phenomenon where agents act in their own personal interests, but their collective action brings detriments to the common good~\cite{hardin1968tragedy}. One motivating example that we will keep referencing in the paper is the sale of Internet-of-Things (IoT) devices, such as Internet-connected cameras, light bulbs, and refrigerators. Recent years have seen a proliferation of these ``smart-home'' devices, many of which are known to contain security vulnerabilities that have been exploited to launch high-profile attacks and disrupt Internet-wide services such as Twitter and Reddit~\cite{krebs, guardian}. Both the owners and manufacturers of IoT devices have the ability to protect the common good (i.e., Internet-wide service for all users) from being attacked by securing their devices, but have little incentive to do so. For the manufacturers, implementing security features, such as using encryption or having no default passwords, introduces extra engineering cost~\cite{august2016market}. Similarly, security practices, such as regularly updating the firmware or using complex and difficult-to-remember passwords, can be a costly endeavor for the consumers~\cite{choi2010network,dancingpigs}. The results of their actions cause a negative \textit{externality}, where Internet service is disrupted for other users.

One way to reduce the negative externality is regulation. In the context of IoT sales, a regulator can, for instance, set minimum security standards for the manufacturers or impose fines on owners of hacked IoT devices that engage in attacks. Fines could come in a few forms: direct levies on the consumer, or indirect monetary incentives. For instance, ISPs could offer discounts to users whose networks have not displayed any signs of malicious activities. One might argue that such penalty-based policies could be too futuristic, but it is worth noting that similar practices are being adopted in other industries to mitigate negative externalities~\cite{10.2307/1803378}. One example is the levying of fines on users (such as cars and factories) that cause pollution~\cite{10.2307/2950868}. While there are a lot of practicalities that have to be kept in mind and the decision of when to implement consumer/user fines depends on various factors, this is certainly one of the various policy alternatives that is worthwhile to study. Such regulations, however, can potentially increase the cost of production,  discourage consumers from purchasing IoT devices, and reduce the manufacturer's profit. Our focus is to compare the \textit{efficiency} of various regulatory policies: for two policies which equally hurt the seller's profits, which one better mitigates externalities? We will also be interested in understanding the \emph{optimal} policy: the minimum security standards for the manufactures and fines on owners that together best mitigate externalities subject to a minimum seller's profit.

We first develop a model that consists of a buyer (e.g., consumers interested in purchasing IoT devices) and a single product for sale (e.g., IoT device). The product may come with some (costly to increase) level of security, $c$, and a consumer purchasing the device may choose to spend additional effort $h$ to further secure the device. We consider a mechanism for regulating the market through incentives, for example, by requiring that the product being sold implement security features that cost the seller $c$ dollars, imposing a fine of $y$ dollars on the buyer if the product is later compromised and used in attacks, or both. Which intervention is more appropriate depends on how efficient buyers are in securing the product. The goal of the regulator is to minimize the negative externalities subject to a cap on the negative impact on the seller's profits---the idea being that any policy which too negatively impacts the seller could be unimplementable due to industry backlash.

Understanding the effects of such regulations on the behavior of a \emph{single} consumer is relatively straightforward. For example: as fines go up, consumers adjust (upwards) the optimal level of effort to expend, lowering their total value for the item. Yet, reasoning about how an entire market of consumers will respond to changes, and how these responses impact seller profits becomes more complex. 

Our contributions are as follows: (i)~We model the sale of a single item with negative externalities, using the sale of IoT devices as the motivating example (Section~\ref{sec:model}). (ii)~{We show in Sections~\ref{sec:deterministic-distribution} and~\ref{sec:homogeneous-distribution} that  when the population of consumers is \emph{homogeneous} (i.e., all consumers are comparably effective at translating effort into security) that optimal policies need only to regulate \emph{either} the product (via minimum security standards) \emph{or} the payments (via fines)}. (iii)~We provide an example of non-homogeneous markets where the optimal policy regulates both product and payments, but prove that in \emph{all} markets, it is always approximately optimal to regulate only one (Sections~\ref{sec:lower-bound} and~\ref{sec:approximation}). The technical sections additionally contain numerous examples witnessing the subtleties in reasoning about these problems, and that any assumptions made in our theorem statements are necessary. 

\section{Related Work}\label{sec-related-work}
\hide{\paragraph{Auctions with Externalities}

\cite{jehiel1999multidimensional} introduced a model where externalities consist in a fixed payoff (possibly negative) for a buyer if someone else acquires an item.

\cite{haghpanah2013optimal} consider the settings where externalities depends on the social network among buyers.

Externalities and network effects \cite{choi2010network}.}

\paragraph{Auction Design with Externalities}
There is ample prior work studying auction design with network externalities in the following sense: if the item for sale is a phone, then one consumer's value for the phone increases when another consumer purchases a phone as well (which is a positive externality, because they can talk to more people). Similarly, the item could be advertising space, in which case one consumer's value for advertising space could decrease as other consumers receive space (which is a negative externality, as now each unit of space is less likely to grab attention)~\cite{haghpanah2013optimal, jehiel1996not, mirrokni2012fixed, candogan2012optimal, bhattacharya2011allocations, hartline2008optimal}. Our work differs in that it is a third party, who is neither selling nor purchasing an item, who suffers the externalities. 

\paragraph{Improving the Commons} There is also a large body of work studying the regulation of common goods (e.g., clean air, security, spectrum access) in the form of taxes or licenses. For example, a government agency can regulate the emission of pollution by auctioning licenses (perhaps towards minimizing the total social cost---regulation cost plus negative externalities)~\cite{montero2008simple, martimort2016mechanism, seabright1993managing, lehr2005managing, feldman2013pricing, weitzman1974prices}. Our work differs in that our regulations are constrained to guarantee minimum profit to the seller, rather than focusing exclusively on the social good. 

\paragraph{Approximation in Auction Design}
Owing to the inherent complexity of optimal auctions for most settings of interest, it is now commonplace in the Economics and Computation community to design simple but approximately auctions. Our work too follows this paradigm. We refer the reader to previous work~\cite{hartline2013mechanism} for an overview of this literature.

\paragraph{Mitigating Security Problems}

Computer security is a particular example of the Tragedy of the Commons, where a software or hardware provider sells an insecure product, and where consumers may purchase the product without considering or taking actions to reduce the security risks. In addition, users might be unable to distinguish insecure products from insecure ones \cite{akerlof1978market}. One mitigation strategy is to have the vendor release updates with security features, although this could be a costly process, as August observes~\cite{august2016market}. However, identifying the existence of security vulnerabilities in the first place may take time for the vendors; for instance, a common software vulnerability known as buffer overflow remained in more than 800 open-source products for a median period of two years before the vendors fixed the problems, according to a study by Li~\cite{li2017large}.

An alternative to relying on a vendor to implement security features or releasing updates is to incentivize the users to follow security practices. Redmiles has found that users who adopt security practices, like using two-factor authentication, have a lower overall utility for themselves than if they adopt no security practices at all, as security practices may introduce inconvenience~\cite{dancingpigs}. Even if users were notified of security problems that they were presumably unaware of, it took as long as two weeks for fewer than 40\% of the users to take remedial actions, according to a study~\cite{li2016you}. To introduce incentives, vendors could, for instance, offer discounts to users who adopt security behaviors~\cite{august2016market}; regulators, on the other hand, could incur fines to users whose software or devices were hacked~\cite{kunreuther2003interdependent}, which is a part of our model in this paper.
\section{Model}\label{sec:model}
In this section, we introduce our model, which consists of a population of rational buyers and a single item for sale. After introducing each of the concepts one-by-one, we include a table (Table~\ref{tab:model}) at the end of this section to remind the reader of each of the components.

\paragraph{Buyer properties} Buyers in our model have two parameters: $(v,k) \in \mathbb{R}^2_+$. $v$ denotes the buyer's value for the item (i.e., how much value does the buyer derive from the IoT device in isolation, independent of fines, etc.). $k$ denotes the buyer's \emph{effectiveness} in translating effort into improved security. That is, a buyer with high $k$ can spend little effort and greatly reduce the risk of being hacked (e.g. because they are well-versed in security measures). A buyer with low $k$ requires significant effort for minimal security gains. We will often use $t:= (v,k)$ to denote a buyer's \emph{type}. 

\paragraph{Security} A buyer who chooses to purchase an item will spend some level of effort $h\geq 0$ securing it, which causes disutility $h$ to the buyer. The seller may also include some default security level $c$. If the buyer has effectiveness $k$, we then denote the combined effort by $\effort(k, c, h):= c+kh$. The idea is that buyers with higher effectiveness are more effective at securing the device for the same disutility. Note that buyers with effectiveness $k > 1$ are more effective than the producer, and buyers with effectiveness $k < 1$ are less effective. Highly effective buyers should not necessarily be interpreted as ``more skilled'' than producers, but some security measures (e.g., password management) are simply more effective for consumers than producers to implement. 

We model the probability that a device is compromised as a function $g(\cdot)$ of \effort, with $g(x):=e^{-x}$. This modeling decision is clearly stylized, and meant as an approximation to practice which captures the following two important features: (a)~as effort $x$ approaches $\infty$, $g(x) \rightarrow 0$ (that is, it is possible to shrink the probability of being compromised arbitrarily small with sufficient effort), and (b)~$g''(x)\geq 0$. That is, the initial units of effort are more effective (i.e. $g'(x)$ is larger in absolute value) than latter ones (when $g'(x)$ is smaller in absolute value). The idea is that consumers/producers will take the highest ``bang-for-buck'' steps first (e.g., setting a password). Note that our results do not qualitively change if, for instance, $g(x):= \lambda_1 e^{\lambda_2 x}$ for some constants $\lambda_1\in (0,1],\lambda_2>0$, but since the model is stylized anyway we set $\lambda_1 = \lambda_2 = 1$ for simplicity of notation. 

\paragraph{Regulatory Policy} The regulator selects a policy/strategy $s = (y, c, p) \in \mathcal \in \mathbb{R}_+^3$. Here, $c$ denotes the security standards the producer must include which is equivalent to the production cost. $y$ denotes the fine the consumer pays should their device be compromised. $p$ denotes the price of the item. Conceptually, one should think of the regulator inducing the producer to set security standard $c$ and price $p$ via particular regulatory policies (e.g., requiring a minimum security level $c'$, or mandating purchase of insurance). Mathematically, we will not belabor exactly how the regulator arrives at $(y,c,p)$. We will also be interested in ``simple'' policies, which regulate either $y$ or $c$.

\begin{definition}[Simple Policy]
For a policy $s = (y, c, p)$, we say $s$ is a \emph{fine policy} if $c = 0$, a \emph{cost policy} if $y = 0$ and a \emph{simple policy} if $s$ is either a \emph{fine policy} or a \emph{cost policy}.
\end{definition}

\paragraph{Utilities} Recall that so far our buyer has value $v$ and efficiency $k$, and chooses to put in effort $h$. The regulator mandates security $c$ (which is equivalent to the production cost) on the item (which has price $p$) and imposes fine $y$ for compromised items. The probability that an item is compromised is $g(\effort(k,c,h))=e^{-c-kh}$. The buyer's utility is therefore: $v - p - h - y\cdot e^{-c-kh}$. Observe that the buyer is in control of $h$ (but not $v, p, y, c, k$). So the buyer will optimize over $h\geq 0$ to minimize $h + y \cdot e^{-c-kh}$. By taking the derivative with respect to $h$, we get a closed form for the choice of effort $h^*(t, s)$ (recalling that we denote the buyer's type $t=(v,k)$ and the regulator's strategy $s=(y,c,p)$):
\begin{equation}\label{eq:buyer-effort}
h^*(t, s)= \max \bigg (0, \frac{\ln(yk) -c}{k} \bigg )
\end{equation}

We can now see that the probability that the buyer's item is compromised, conditioned on expending the optimally chosen effort is:
\begin{equation}\label{eq:risk}
\risk(t, s):= \min\bigg\{e^{-c},\frac{1}{yk}\bigg\}.
\end{equation}

We will additionally refer to the buyer's (security) \emph{loss} as the expected fines they suffer plus the effort they spend. That is:
\begin{equation}\label{eq:loss} \ell(t, s):= y \cdot \risk(t, s)+h^*(t, s):=\begin{cases}
\frac{\ln(yk)-c+1}{k}, \quad &yk \geq e^c\\
ye^{-c}, \quad &yk < e^c
\end{cases}\end{equation}

It then follows that the buyer's utility (value minus price minus expected fines) is:
\begin{equation}\label{eq:buyer-utility}
u(t, s):= v - p - \ell(t, s) = \begin{cases} v - p - 1/k - \frac{\ln(yk)-c}{k}, &yk \geq e^c\\
v-p-ye^{-c}, &yk < e^c
\end{cases}
\end{equation}

\paragraph{Population of Buyers} We model the population of buyers as a distribution $D$ over types $t$. Additionally, we make the now-typical assumption in the multi-dimensional mechanism design literature (e.g.~\cite[and follow-up work]{ChawlaHK07, HartN17}) that the parameters $v$ and $k$ are drawn independently, so that $D:= D_v \times D_k$.\footnote{This assumption is even more justified in our setting than usual, as it is hard to imagine correlation between the \emph{value} a consumer derives from using a smart refridgerator and their \emph{ability} to secure IoT devices.} The seller's profits are then:
\begin{equation}\label{eq:rev}\rev_D(s) := (p-c) \cdot \Pr_{t \leftarrow D}[u(t, s)\geq 0]\end{equation}

\paragraph{Externalities} Finally, we define the externalities caused and the regulator's objective function. Each device sold has some probability of being compromised, and the regulator wishes to minimize the total fraction of compromised devices.\footnote{It would be equally natural for the regulator to aim to minimize the total mass of compromised devices. Most of our results do not rely on optimizing one objective versus the other, but we stick with one in order to unify the presentation.} That is, we measure the externalities caused as:\footnote{Below, $\mathbb{I}(\cdot)$ denotes the indicator function, which takes value $\mathbb{I}(X) = 1$ whenever event $X$ occurs, and $0$ otherwise.}
\begin{equation}\label{eq:ext}\ext_D(s):= \frac{\mathbb{E}_{t \leftarrow D}[\risk(t, s)\cdot \mathbb{I}(u(t, s) \geq 0)]}{\Pr_{t \leftarrow D}[u(t, s) \geq 0]}\end{equation}

\paragraph{Optimization} The regulator's objective is to propose an $s=(y,c,p)$ that minimizes $\ext_D(s)$. Observe that, if left unconstrained, the regulator can simply propose $c \rightarrow \infty$, resulting in $0$ externalities. Such a policy is completely unrealistic, as it would cause costs to approach $\infty$ and destroy the industry. Similarly, taking $y \rightarrow \infty$ would cause consumers to have negative utility even to get the item for free (again destroying the industry). We therefore impose a minimum profit constraint for a policy to be considered feasible. Indeed, this forces the regulator to trade off profits for externalities as effectively as possible. Therefore, our regulator is given some profit constraint $R$, and aims to find:
$${\arg\min}_{s, \rev_D(s) \geq R}\{\ext_D(s)\}$$

We will only consider cases where there is \emph{some} feasible $s$ (that is, we will only consider $R$ such that there exists a $p$ with $R\leq \rev_D(0,0,p)$. If no such $p$ exists, then the profit constraint exceeds the optimal achievable profit without regulation, and the problem is unsolvable). 

\paragraph{Recap of model} Table~\ref{tab:model} recaps the parameters of our model for future reference. Note also that many parameters (e.g. $\ell(\cdot,\cdot)$ are formally defined as a function of $t = (v,k)$ and $s = (y,c,p)$, but only depend on (e.g.) $k,y,c$. As such, it will often be clearer to overload notation and write $\ell(k,y,c)$, rather than defining a new $t=(v,k)$ with a meaningless parameter. Sometimes, though, it will be clearer to use the defined notation for a type $t$ that was just defined. In the interest of clarity, we will overload notation for these variables, but it will be clear from context what they refer to.
\begin{table}[h]
\caption{Model Variables}
\label{tab:model}
\begin{center}
    \begin{tabular}{ || c || c | c | c | c | c ||}
    \hline 
     Variable & Text Definition & Formal Definition \\ \hline 
    $t=(v,k)$ & (value, effectiveness) & N/A\\ 
\hline
$D:= D_v \times D_k$ & buyer population & N/A\\
\hline
$s = (y,c,p)$ & (fine, security, price) & N/A\\
\hline
$h^*(t, s)$ & buyer optimal effort & $\max\{0,\frac{\ln(yk)-c}{k}\}$,\eqref{eq:buyer-effort}\\
\hline
$\risk(t, s)$ & compromise prob. & $\min\{e^{-c},\frac{1}{yk}\}$, \eqref{eq:risk}\\
\hline
$\ell(t, s)$ & buyer security loss
& Equation~\eqref{eq:loss}\\
\hline
$u(t, s)$ & buyer utility & $v-p-\ell(t, s)$, \eqref{eq:buyer-utility} \\
\hline
$\rev_D(s)$ & seller profits & Equation~\eqref{eq:rev}\\
\hline
$\ext_D(s)$ & frac. compromised & Equation~\eqref{eq:ext}\\
\hline

    \end{tabular}
\end{center}
\label{table:prior}
\end{table}

\paragraph{Final Thoughts on Model} We propose a stylized model to capture the following salient aspects of this market: (a) neither buyer nor seller suffer externalities when the item is compromised, (b) the regulator can regulate both the product (via $c$) and payments (via $y,p$), (c) there is a population of buyers, each with different value $v$ and effectiveness $k$ at translating effort into security, and (d) the regulator must effectively trade off externalities with profits by minimizing negative externalities, subject to a minimum profit constraint $R$. The goal of this model is not to capture every potentially relevant parameter, but to isolate the salient features above.

\begin{figure*}[t]

\centering
\ifwww
\includegraphics[scale=0.35]{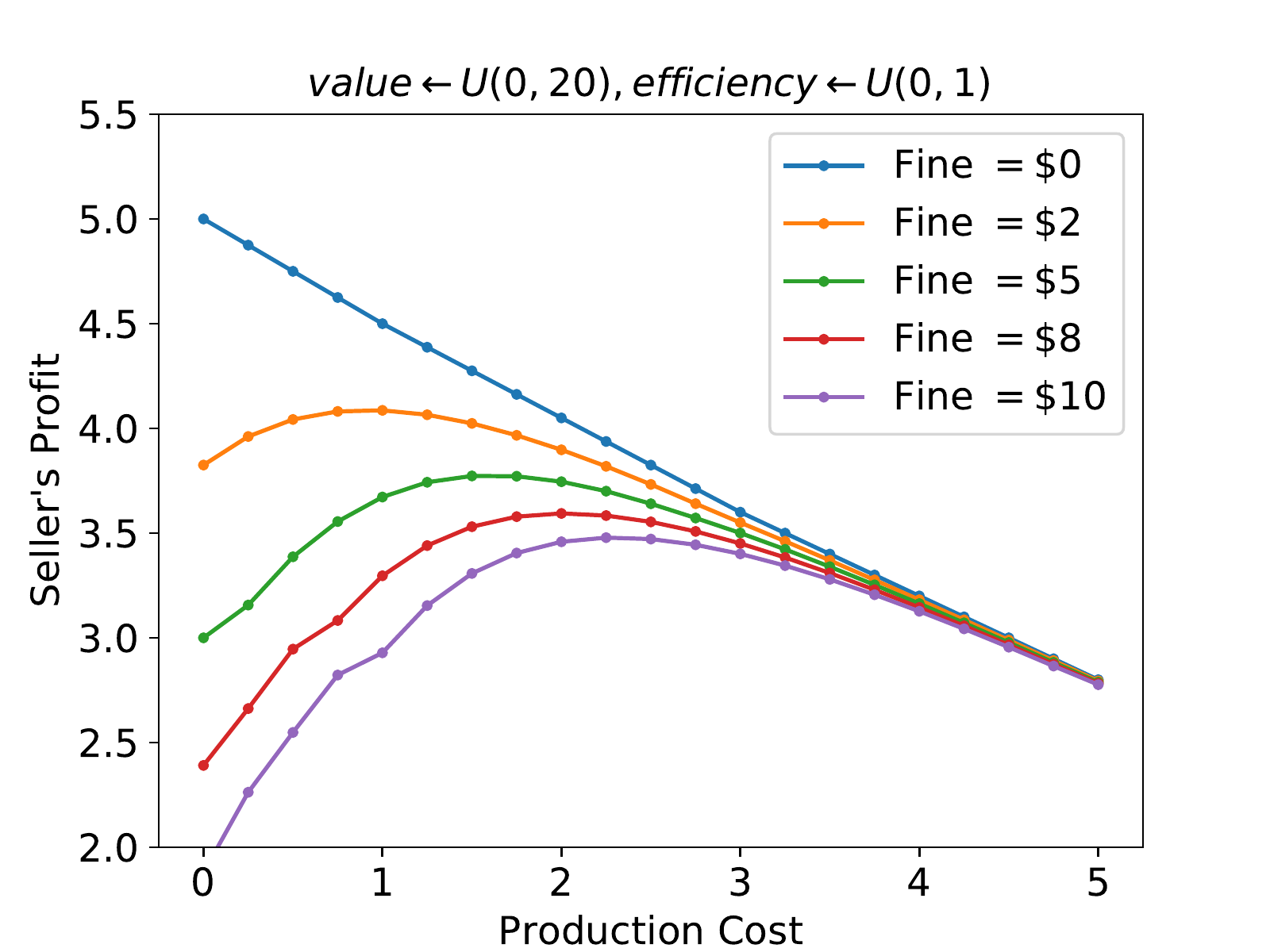}
\includegraphics[scale=0.35]{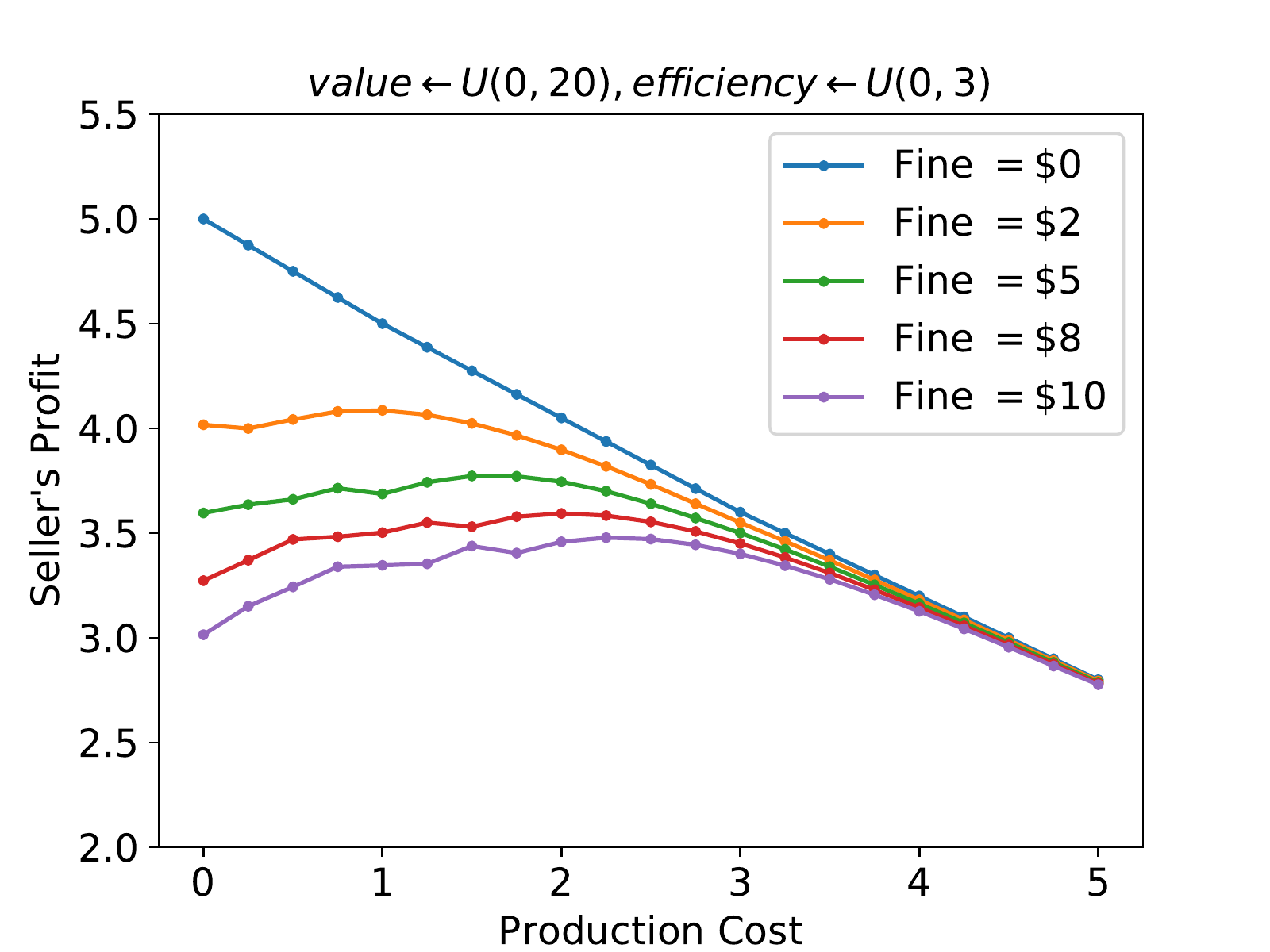}
\includegraphics[scale=0.35]{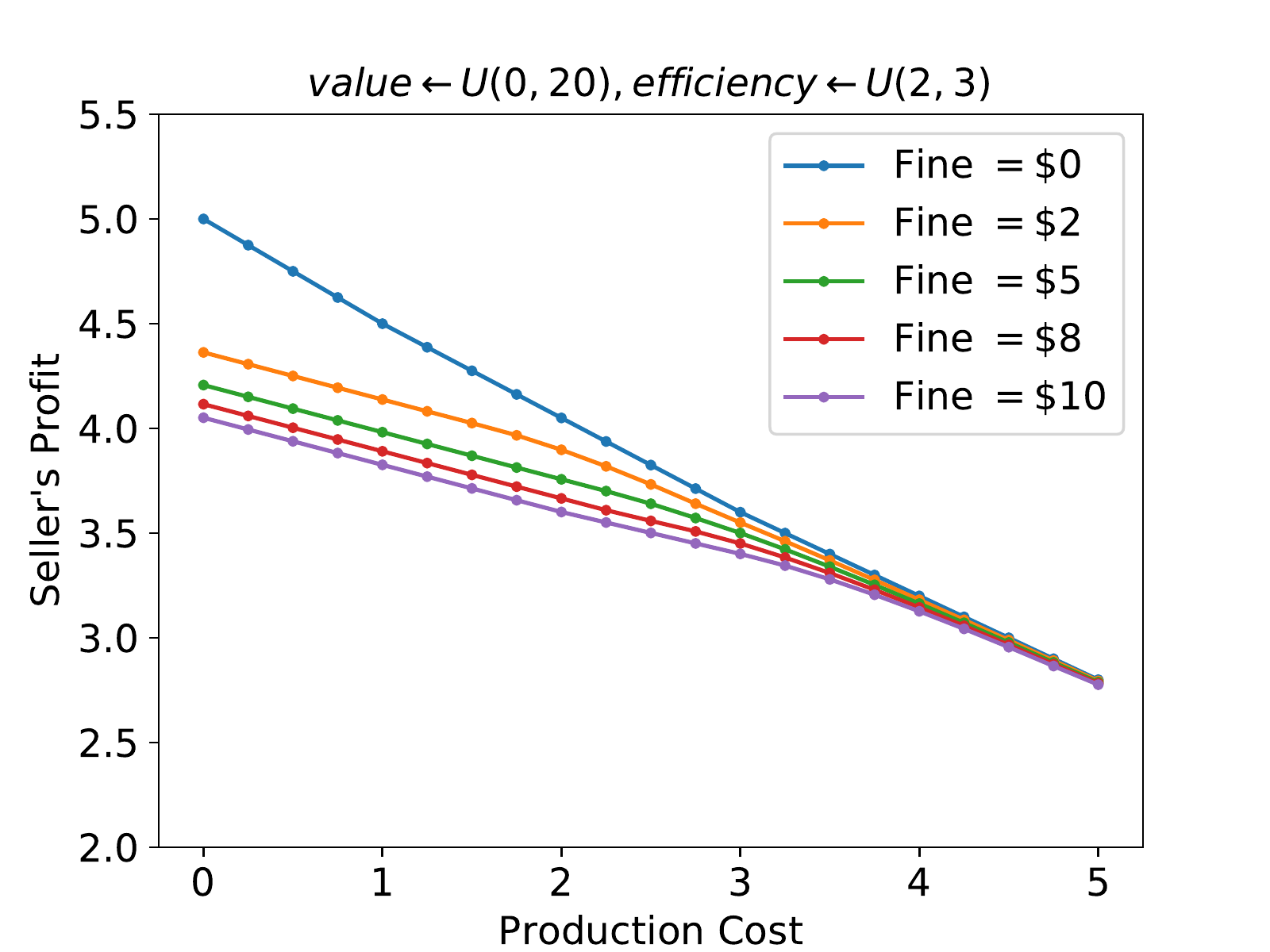}
\else
\includegraphics[scale=0.33]{figures/revenue-inefficient}
\includegraphics[scale=0.33]{figures/revenue-mixed}
\includegraphics[scale=0.33]{figures/revenue-efficient}
\fi
\caption{Seller's optimal profits under different distributions for efficiency, $k$. We plot the seller's profits on the vertical axis and the default security $c$ on the horizontal axis. Each curve corresponds to different fines, $y$. Importantly, observe that when the fine is zero, the seller achieves greatest profits with lower default security. However, when the fine is non-zero, the seller may actually \emph{increase} their profits with default security, but the benefits (to the seller) of default security decrease as the buyer population becomes more efficient.}
\label{fig-profit}
\end{figure*}

\subsection{An Intuitive Example}\label{sec-model-example}

In this section, we provide one example to help give intuition for the interaction between the fines $y$, default security $c$, and seller's profits $\rev_D(s)$. In particular, Figure~\ref{fig-profit} plots the maximum achievable $\rev_D(s)$ over all $s$ with a fixed $c$ (the $x$-axis) and $y$ (the color of the plot). In all three examples, $D_v$ is the uniform distribution on $[0,20]$, and $k$ is drawn from either the uniform distribution on $[0,1], [0,3]$, or $[2,3]$, respectively. Note that $k \geq 1$ is the threshold when a buyer is more efficient than the seller in mitigating externalities, so these examples cover two \emph{homogeneous} populations, where all consumers are more (respectively, less) efficient than the producer, and one \emph{heterogeneous} population, where some consumers are more efficient, and others are not.

For each possible (partial) regulation $(y, c)$, the profit-maximizing choice of $p$ is essentially a classic single-item problem (e.g.~\cite{myerson1981optimal}), as the buyer's ``modified value'' $v'$ is simply $v - \ell(t,s)-c$, and the seller's profit for setting price $p$ is just $p \cdot \Pr_{t \leftarrow D}[v' \geq p]$. Therefore, for each partial regulation $(y,c)$, we can construct the modified distribution and simply maximize $p \cdot \Pr_{t \leftarrow D}[v' \geq p]$ as above. 

Observe in Figure~\ref{fig-profit}, when $y=0$, the seller gets greater profits with lower $c$. This should be intuitive, as neither the buyer nor seller suffer when the device is compromised. When $y > 0$, and $D_k = U([0,1])$, the seller's profits can increase with $c$. This should also be intuitive: now that the buyer suffers when the device is compromised, they prefer to buy a secure device. 

On the other hand, when the market contains only efficient buyers ($k > 1$ always), the buyer prefers to provide her own security; any increased cost will always decrease the buyer's utility. Indeed, observe that $\frac{\partial \ell(t,s)}{\partial{c}}$ is either $0$ (if $yk < e^c$) or $1-1/k$ (otherwise). If $k > 1$, then this is always positive, so higher $c$ results in (weakly) higher loss for the consumer, and lower utility.

\arxiv{
\subsection{Example of Efficiency Distribution}

In our model, we will not make any assumptions on the efficiency distribution $D_k$, but we provide an example of how one could model such distributions. To construct $D_k$, we can isolate the different features (e.g. encryption and security practices) that affects security of a population and how they combined affects the buyer's effectiveness in providing security.

Consider the case where some IoT devices, such as security cameras, allow the use of two-factor authentication, where the first factor is password-based authentication, and the second factor is based on, for instance, SMS. A user has the choice of using passwords alone for authentication or using the two factors. For users that use passwords alone, the efficiency may depend on the strength of their passwords or how likely the passwords are re-used.  Figure 2(a) illustrates an example where buyers are generally more efficient than sellers if the buyers pick strong passwords. 

For two-factor authentication, the efficiency may depend on the robustness of the second factor; in particular, an SMS-based second factor might be more prone to compromise than hardware-token-based solutions, as SMS messages could be intercepted.\footnote{See \url{https://www.theverge.com/2017/9/18/16328172/sms-two-factor-authentication-hack-password-bitcoin}} Depending on which second factor is implemented by the seller, $k$'s density varies, an example of which is shown in Figure 2(b). In general, however, because the seller has the control over which second factor to use, the seller is more efficient than the buyer. In Figure~\ref{fig:efficiency} (c), we use mixture distribution to model $D_k$ when password authentication is combined with SMS authentication. In general, systems with two factor authentication allow users to reset passwords ($1^{st}$-factor) through SMS ($2^{nd}$-factor) which suggests the second factor carries higher weight than passwords in the buyer's efficiency. We define $Pr[D_k = x] = \frac{2}{3}Pr[D_1 = x] + \frac{1}{3}Pr[D_2 = x]$ where the weights model the fact the second $2^{nd}$ factor can override the $1^{st}$ factor even though SMS authentication can be vulnerable.

\begin{figure*}[t]
\centering
\includegraphics[scale=0.55]{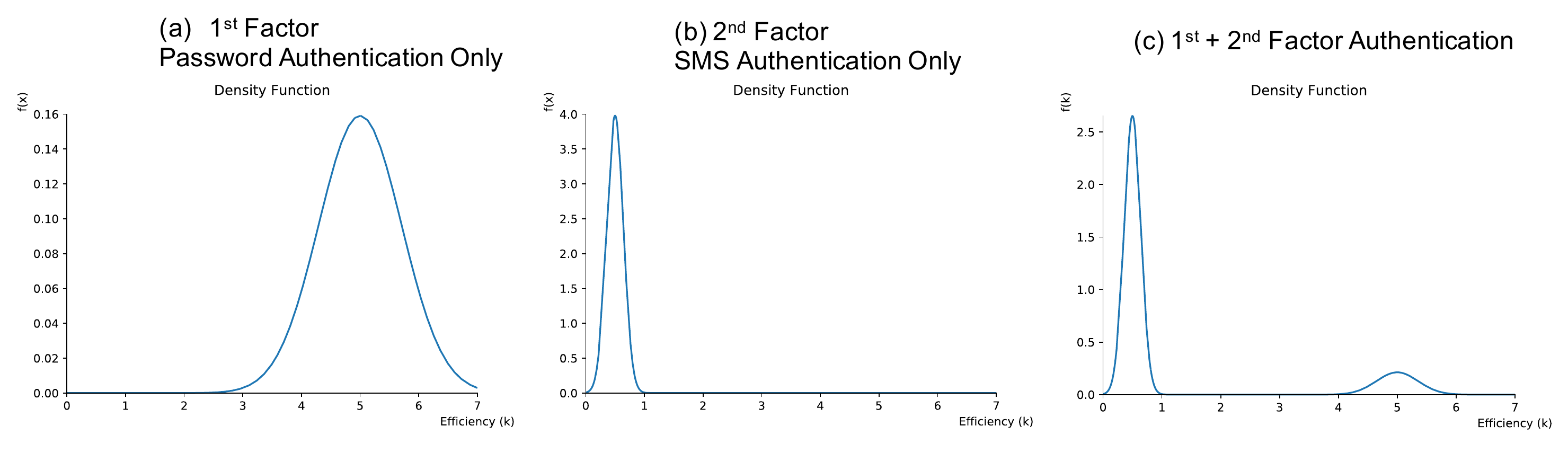}
\caption{The efficiency distribution of a population. We plot the effectiveness of a buyer when restricted only to password authentication (left) and SMS authentication (center). By combining the features, and taking into account their contributions in minimizing externalities, we construct the efficiency distribution in the right.}
\label{fig:efficiency}
\end{figure*}
}
\subsection{Preliminary Observations}\label{sec-preliminaries}
We conclude with two observations which allow an easy comparison between the profits of certain policies. Intuitively, Observation~\ref{obs:monotone} claims that any policy which makes \emph{every single} consumer in the population have lower loss generates greater profits for the seller. We will make use of Observation~\ref{obs:monotone} repeatedly throughout the technical sections to modify existing policies into ones which improve profits (ideally while also improving externalities, although that is not covered by Observation~\ref{obs:monotone}). 

\begin{observation}\label{obs:monotone}
Let $s=(y,c,p)$, $s'=(y', c',p')$ be such that $p' -c' = p-c\geq 0$ and for all $k \in \text{support}(D_k)$, $\ell(k,s)+c \leq \ell(k,s')+c'$. Then for all $D_v$, $\rev_{D_v \times D_k}(s) \geq \rev_{D_v \times D_k}(s')$. 
\end{observation}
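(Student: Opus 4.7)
The plan is to directly unpack the definition of $\rev_D(s)$ and show that both terms behave favorably for $s$. Since $\rev_D(s) = (p-c)\cdot\Pr_{t\leftarrow D}[u(t,s)\geq 0]$ and by hypothesis $p-c = p'-c' \geq 0$, the per-sale margin is identical under the two policies. So it suffices to show that the purchase probability under $s$ is at least as large as under $s'$, i.e.\ $\Pr_{t\leftarrow D}[u(t,s)\geq 0]\geq \Pr_{t\leftarrow D}[u(t,s')\geq 0]$.

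To compare these probabilities, I would condition on $k$ and then integrate out over $v\sim D_v$ using independence ($D = D_v\times D_k$). For a fixed $k$, a buyer of type $(v,k)$ purchases under $s$ iff $v \geq p + \ell(k,s)$, and under $s'$ iff $v \geq p' + \ell(k,s')$. Thus the set of purchasing $v$'s under $s$ contains the set of purchasing $v$'s under $s'$ precisely when the threshold satisfies
\[
p + \ell(k,s) \;\leq\; p' + \ell(k,s').
\]
Rearranging, this is equivalent to $\ell(k,s) - \ell(k,s') \leq p' - p$.

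The key algebraic step is to combine the two hypotheses. From $p-c = p'-c'$ I get $p' - p = c' - c$, and from the loss-plus-cost hypothesis I get $\ell(k,s) - \ell(k,s') \leq c' - c$. Chaining these yields exactly the desired threshold inequality, for every $k$ in the support of $D_k$. Hence for every $k$, $\Pr_{v\sim D_v}[u((v,k),s)\geq 0] \geq \Pr_{v\sim D_v}[u((v,k),s')\geq 0]$; integrating over $k\sim D_k$ preserves the inequality, and multiplying through by the common margin $p-c\geq 0$ finishes the proof.

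There is really no significant obstacle here --- the hypothesis $\ell(k,s)+c\leq \ell(k,s')+c'$ is precisely engineered so that, once combined with the equal-margin condition $p-c=p'-c'$, the purchase-threshold inequality pops out. The only thing worth highlighting in the write-up is the role of the independence assumption $D=D_v\times D_k$, which lets the pointwise (in $k$) comparison of purchase thresholds lift cleanly to a comparison of overall purchase probabilities.
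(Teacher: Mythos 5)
Your proof is correct and is essentially the paper's own argument: combining $p'-p=c'-c$ with $\ell(k,s)+c\leq\ell(k,s')+c'$ gives $u(t,s)\geq u(t,s')$ for every type $t$, so the set of purchasers under $s'$ is contained in the set under $s$, and since the per-sale margins agree the profit inequality follows. One small clarification: independence $D=D_v\times D_k$ is not actually needed for this step---once you have the pointwise inequality $u(t,s)\geq u(t,s')$, the indicator bound $\mathbb{I}(u(t,s)\geq 0)\geq\mathbb{I}(u(t,s')\geq 0)$ passes through the expectation under any joint distribution of $(v,k)$, so the ``conditioning on $k$ and integrating $v$'' framing is a fine presentation but not where any work is being done.
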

\begin{proof}
Observe that for both $s$ and $s'$, the seller's profit per sale is identical (as $p' - c' = p-c$). So we just wish to show that the probability of sale for $s'$ is larger than that for $s$. Indeed, observe that for all $t$:
\begin{align*}
    u(t,s)&=v-\ell(k,s)-p\\
    &=v-(\ell(k,s)+c)+c-p\\
    &\geq v - (\ell(k,s')+c') +c'-p'\\
    &= v-\ell(k,s')-p' = u(t,s').
\end{align*}
Therefore, any consumer $(v,k)$ who chooses to purchase the item under policy $s'$ will also choose to purchase under policy $s$, and therefore the probability of sale is at least as large for $s$ as $s'$. 
\end{proof}

Observation~\ref{obs:monotone2} below claims that the profit of any policy $s$ is larger in populations $D$ where every consumer is more effective than in $D'$.

\begin{observation}\label{obs:monotone2}
Let $D_k$ \emph{stochastically dominate} $D'_k$.\footnote{That is, it is possible to couple draws $(k,k')$ from $(D_k, D'_k)$ so that $k \geq k'$ with probability $1$. Equivalently: for all $x$, $\Pr[k \geq x, k \leftarrow D_k] \geq \Pr[k' \geq x, k' \leftarrow D'_k]$.} Then for all policies $s=(y,c,p)$ with $p \geq c$, and all $D_v$, $\rev_{D_v \times D_k}(s) \geq \rev_{D_v \times D'_k}(s)$. 
\end{observation}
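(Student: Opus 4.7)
The plan is to reduce the comparison of revenues to a comparison of purchase probabilities, and then to leverage stochastic dominance via a coupling argument once we establish that a buyer's loss $\ell(k,s)$ is monotone non-increasing in $k$.

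First I would observe that $\rev_D(s) = (p-c) \cdot \Pr_{t \leftarrow D}[u(t,s) \geq 0]$ by Equation~\eqref{eq:rev}, and the per-sale profit $p - c$ is identical under both populations and is non-negative by hypothesis. Hence it suffices to prove that $\Pr_{t \leftarrow D_v \times D_k}[u(t,s) \geq 0] \geq \Pr_{t \leftarrow D_v \times D'_k}[u(t,s) \geq 0]$. Since $u(t,s) = v - p - \ell(k,s)$ depends on $k$ only through $\ell(k,s)$ (and the price $p$ is fixed by $s$), the entire argument reduces to showing that the event $\{u(t,s) \geq 0\}$ is monotone in $k$, which follows immediately once one shows that $\ell(\cdot, s)$ is non-increasing in $k$.

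The technical core is therefore to verify that $\ell(k,s)$ is non-increasing in $k$. Using the two cases in Equation~\eqref{eq:loss}: in the regime $yk < e^c$, we have $\ell = y e^{-c}$, which is constant in $k$; in the regime $yk \geq e^c$, differentiating $\ell(k,s) = (\ln(yk) - c + 1)/k$ in $k$ yields $(c - \ln(yk))/k^2 \leq 0$, since $\ln(yk) \geq c$ in this regime. At the boundary $yk = e^c$ both expressions evaluate to $1/k$, so $\ell$ is continuous and hence globally non-increasing in $k$. I expect this case analysis to be the main (though modest) obstacle, since one must be careful about the kink at $yk = e^c$ and ensure both branches agree on the monotonicity direction.

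Finally, I would invoke the standard equivalence between stochastic dominance and coupling: couple $(k,k') \leftarrow (D_k, D'_k)$ so that $k \geq k'$ almost surely, and sample $v \leftarrow D_v$ independently. By monotonicity of $\ell$ in $k$, we have $\ell(k,s) \leq \ell(k',s)$ pointwise, hence $u((v,k),s) \geq u((v,k'),s)$ pointwise, and therefore $\mathbb{I}(u((v,k),s) \geq 0) \geq \mathbb{I}(u((v,k'),s) \geq 0)$. Taking expectation over the coupling and over $v$ yields $\Pr_{D_v \times D_k}[u(t,s) \geq 0] \geq \Pr_{D_v \times D'_k}[u(t,s) \geq 0]$. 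Multiplying by $p - c \geq 0$ gives the desired inequality on revenues.
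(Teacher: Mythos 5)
Your proof is correct and follows essentially the same route as the paper's: couple $(k,k')$ so that $k\geq k'$ a.s., deduce $u((v,k),s)\geq u((v,k'),s)$ from the monotonicity of $\ell(\cdot,s)$ in $k$, and conclude the purchase probability (hence the profit, since $p-c\geq 0$ is fixed) can only increase. The paper asserts the monotonicity of $u$ in $k$ as immediate (cf.\ Observation~\ref{obs:monotone3}); you spell out the case analysis on $\ell$, which is a fine and correct elaboration of the same step.
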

\begin{proof}
As $D_k$ stochastically dominates $D'_k$, it is possible to couple draws $(t, t')$ from $(D_v \times D_k, D_v \times D'_k)$ such that $v = v'$ and $k \geq k'$. Observe simply that $u(t,s)\geq u(t',s)$ always. Therefore, $\Pr[u(t,s) \geq 0] \geq \Pr[u(t',s)\geq 0]$, and $\rev_{D_v \times D_k}(s) \geq \rev_{D \times D'_k}(s)$. 
\end{proof}

Observe, however, that Observation~\ref{obs:monotone2}, perhaps counterintuitively, does \emph{not} hold if we replace profits with externalities. That is, for a fixed policy $s$, we might \emph{increase} all consumers' effectiveness yet also \emph{increase} the externalities caused. Intuitively, this might happen (for instance) in a fine policy which successfully only sells the item to extremely effective consumers who effectively secure their purchase. Ineffective consumers choose not to purchase the product to avoid fines. However, if these ineffective consumers are instead somewhat effective, they may now choose to purchase the item, thereby increasing externalities. Below is a concrete instantiation:

\begin{example}\label{ex:ext-non-monotonicity}
Consider the population where $D_v$ is a point-mass at $e$, and $D_k$ takes on effectiveness $0$ with probability $1/2$ and $x> 1$ with probability $1/2$. Consider the policy $s=(e,0,e-2.5)$. Then the $(e,0)$ consumer chooses not to purchase: $\ell(e,0,s) = e$, so their utility would be $e-e-(e-2.5) < 0$. The $(e,x)$ consumer chooses to purchase, as their loss is $\frac{2+\ln(x)}{x} < 2$ (as $x > 1$). So $\ext_{D_v \times D_k}(s) = \frac{1}{ex}$. 

Consider now improving the effectiveness of the $k=0$ consumers to $k=1$ (so $D'_k$ now takes on $1$ with probability $1/2$ and $x$ with probability $1/2$). The $(e,1)$ consumer now chooses to purchase, as their loss is $2$ (so their utility is $e-2-(e-2.5) = 1/2$). So now $\ext_{D_v \times D'_k}(s) = (\frac{1}{e} + \frac{1}{ex})/2$. As $x > 1$, the externalities have gone up. If $x\geq 1$, the externalities may have gone up quite significantly.
\end{example}

In Example~\ref{ex:ext-non-monotonicity}, of course ``the right'' thing to do is to also change the policy. Indeed, it is still the case that, for a \emph{fixed consumer who purchases the item}, increasing effectiveness can only decrease externalities. But without fixing whether the consumer has purchased the item or not, the claim is false. Observation~\ref{obs:monotone3} captures what we can claim about risk, loss, etc. on a per-consumer basis. Proofs for the claims in Observation~\ref{obs:monotone3} all follow immediately from the definitions in Section~\ref{sec:model}.

\begin{observation}\label{obs:monotone3}
Let $k > k'$, then for all $s$:
\begin{itemize}
    \item $\risk(k,s) \leq \risk(k',s)$.
    \item $\ell(k,s) \leq \ell(k',s)$.
    \item $h^*(k,s) \geq h^*(k',s)$.
    \item $u(k,s) \geq u(k',s)$. 
\end{itemize}
\end{observation}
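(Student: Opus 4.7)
The plan is to verify each of the four bullets by direct manipulation of the closed-form expressions from Section~\ref{sec:model}, exploiting a single envelope observation for the non-trivial cases. All four quantities depend on a buyer's type only through $k$ (with utility additionally depending on the constant $v-p$), so each claim reduces to a one-dimensional monotonicity comparison in $k$.

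For risk, I would plug in Equation~\eqref{eq:risk}: $\risk(k,s) = \min\{e^{-c}, 1/(yk)\}$. The second argument is strictly decreasing in $k$ and the first is independent of $k$, so the min is weakly decreasing in $k$ and $\risk(k,s)\leq \risk(k',s)$ is immediate. For loss, the cleanest route is the envelope characterization $\ell(k,s) = \min_{h\geq 0}\{h + y e^{-c-kh}\}$, which is how the piecewise Equation~\eqref{eq:loss} was derived in the first place (by solving the first-order condition and clipping at $h=0$). For every fixed $h\geq 0$ the objective is weakly decreasing in $k$ because the exponent $-c-kh$ is, and taking the min over $h$ preserves this pointwise inequality, giving $\ell(k,s)\leq \ell(k',s)$ without any case analysis on the two branches of Equation~\eqref{eq:loss}. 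Utility is then immediate: Equation~\eqref{eq:buyer-utility} gives $u(k,s) = v - p - \ell(k,s)$, so the loss bullet yields $u(k,s)\geq u(k',s)$.

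For effort, I would case-split using Equation~\eqref{eq:buyer-effort}: if $yk' \leq e^c$ then $h^*(k',s) = 0$ and the claim is trivial; if $yk', yk > e^c$ the comparison reduces to a direct inequality on the interior expression $(\ln(yk)-c)/k$. I do not anticipate any significant obstacle; the main stylistic choice is whether to route the loss bullet through the envelope argument (conceptual and essentially a one-liner) or through mechanical case analysis on the two branches of the piecewise $\ell$, and I would use the envelope formulation for clarity since it also explains the intuition that a more-effective buyer can always mimic a less-effective buyer's effort and obtain a weakly better outcome.
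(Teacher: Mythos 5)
Your handling of the risk, loss, and utility bullets is correct and in fact more informative than the paper's own ``proof,'' which simply asserts that everything ``follows immediately from the definitions.'' The envelope route for loss---writing $\ell(k,s)=\min_{h\geq 0}\{h+ye^{-c-kh}\}$ and noting that the minimand is pointwise weakly decreasing in $k$ for each fixed $h\geq 0$---is cleaner than differentiating the two branches of Equation~\eqref{eq:loss}, and it yields $u(k,s)\geq u(k',s)$ immediately. The risk bullet is immediate from Equation~\eqref{eq:risk}.

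Where you run into trouble is the effort bullet, where you say the comparison on $(\ln(yk)-c)/k$ poses ``no significant obstacle.'' That expression is \emph{not} monotone in $k$: its derivative is $(1+c-\ln(yk))/k^2$, which changes sign at $yk=e^{1+c}$, so effort rises on $yk\in(e^c,e^{1+c})$ and falls thereafter. Concretely, take $c=0$, $y=e$: then $h^*(1,s)=\ln(e)=1$ while $h^*(e,s)=\ln(e^2)/e=2/e<1$, so $h^*$ decreases as $k$ goes from $1$ to $e$, falsifying the bullet. The envelope intuition---a more effective buyer can mimic a less effective one---explains why \emph{loss} is monotone but says nothing about where the argmin lands; a very effective buyer may well choose less effort, not more. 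So this claim is false as stated (the paper, which offers no argument at all, makes the same error, though it never actually invokes the effort bullet in any later proof---only risk, loss, and utility monotonicity are used). Your write-up should either drop the effort bullet or replace it with the correct statement, e.g., that $h^*(k,s)$ is unimodal in $k$, vanishing for $k\leq e^c/y$, peaking at $k=e^{1+c}/y$, and decreasing beyond that.
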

\section{Roadmap of Technical Sections}\label{sec:roadmap}
Now that we have the appropriate technical language, we provide a brief roadmap of the results to come.
\begin{itemize}
    \item In Section~\ref{sec:deterministic-distribution}, we provide a technical warmup to get the reader familiar with how to reason about our problem. The main result of this section is Theorem~\ref{thm:opt-deterministic}, which claims that the optimal policy when $D_k$ is a point-mass is simple. The proof of this theorem helps illustrate one key aspect of our later arguments, and will also be used as a building block for later proofs.
    \item In Section~\ref{sec:homogeneous-distribution}, we prove our first main result (Theorem~\ref{thm:opt-homogeneous}): as a function of $R$ and $D_v$, there exists a cutoff $T$. If $D_k$ is supported on $[0,T]$, then a cost policy is optimal. If $D_k$ is supported on $[T,\infty)$, then a fine policy outperforms all \emph{profits-maximizing} policies (we define this term in the relevant section --- intuitively a policy is profits-maximizing if the price is the seller's best response to $(y,c)$). Section~\ref{sec:homogeneous-distribution} also contains a surprising example witnessing that the additional profits-maximizing qualification is necessary.
    \item In Section~\ref{sec:general}, we consider general distributions. Unsurprisingly, simple policies are no longer optimal. Perhaps surprisingly, if one insists on exceeding the profits benchmark \emph{exactly}, no simple policy can guarantee any bounded approximation to the optimal externalities (Corollary~\ref{cor:lb}). However, we also show (Theorem~\ref{thm:approx}) that it is possible to get a bicriterion approximation: if one is willing to approximately satisfy the profits constraint, it is possible to approximately minimize externalities with a simple policy. That is, for any $s,D$, there is a simple policy $s'$ with $\rev_D(s') =\Omega(1) \cdot \rev_D(s)$ and $\ext_D(s') = O(1) \cdot \ext_D(s)$. 
    \item We include complete proofs for \www{most of} our results on point-mass and homogeneous distributions, as these convey many of the key ideas. \arxiv{By Theorem~\ref{thm:approx}, the proofs get quite technical so we defer them to the appendix.}\www{By Theorem~\ref{thm:approx}, the proofs get quite technical so we provide a sketch of the main ideas. This and other omitted proofs can be found in \cite{venturyne2019externality}.}
\end{itemize}
\section{Warm-up: Point-Mass Effectiveness}\label{sec:deterministic-distribution}

As a warm-up, we first study the case where $D_k$ is a point mass (that is, all buyers in the population have the same effectiveness $k$). In this case, we show that a simple policy is optimal. The proof is fairly intuitive, with one catch. The intuitive part is that every consumer will put in the same effort, conditioned on buying the item. It therefore seems intuitive that if $k <1$, it is better for all parties involved if any effort spent by the consumer is transferred to the producer instead (and this is true). It also seems intuitive that if $k > 1$, it is again better for all parties involved if any effort spent by the producer is ``transferred'' to the consumer instead (e.g. by raising fines so that the consumer chooses to spend the desired level of effort). This is not quite true: the catch is that the fine required to induce the desired buyer behavior may be too high to satisfy the profit constraint. But, the above argument does work for sufficiently large $k$. Importantly, there is some cutoff $T$ such that for all $k \leq T$, the optimal policy is a cost policy ($y=0$), while for all $k \geq T$, the optimal policy is a fine policy ($c=0$). Below, when we write $D_v \times \{k\}$, we mean the distribution which draws $v$ from $D_v$ and outputs $(v,k)$.  

\begin{theorem}
\label{thm:opt-deterministic} For all $D_v$, $R$, and $k$, the externality-minimizing policy for $D_v \times \{k\}$ is a simple policy. Moreover, for all $R, D_v$, there is a cutoff $T$ such that if $k \leq T$, then the optimal policy is a cost policy. If $k \geq T$, then the optimal policy is a fine policy.
\end{theorem}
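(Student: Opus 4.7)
The plan has two steps: reduce to simple policies via Observation~\ref{obs:monotone}, then compute and compare the optimal cost and fine externalities in closed form.

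For the reduction, I will show that every policy $s = (y,c,p)$ admits a simple policy $s'$ with $p' - c' = p - c$, identical $\risk(k, \cdot)$ (hence identical externality), and $\ell(k, s') + c' \leq \ell(k, s) + c$; then Observation~\ref{obs:monotone} gives $\rev_D(s') \geq \rev_D(s)$. The construction splits on whether the buyer exerts effort at $s$. If $yk < e^c$, then $h^*(k,s) = 0$ and the cost policy $(0, c, p)$ matches risk and strictly lowers $\ell + c$ (by $y e^{-c}$). If $yk \geq e^c$, the buyer's chosen effort yields total security $\ln(yk)$, and I consider two candidate conversions preserving risk $1/(yk)$ and per-sale profit: a cost version $s_C = (0, \ln(yk), p - c + \ln(yk))$ and a fine version $s_F = (y, 0, p - c)$. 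A short calculation with $u := \ln(yk) - c \geq 0$ shows the quantity $\ell + c$ decreases under the cost conversion by $(u(1-k) + 1)/k$ and under the fine conversion by $c(1 - 1/k)$. The first is nonnegative when $k \leq 1$ and the second when $k \geq 1$, so at least one simple conversion weakly dominates $s$, proving the optimum is simple.

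For the cutoff, since $D_k$ is a point mass I compute both optima in closed form. For cost policies $(0, c, p)$, the buyer's utility is $v - p$ and the profit constraint reads $c \leq p - R/\Pr_{D_v}[v \geq p]$; pushing $c$ to this bound and optimizing over $p$ gives externality $e^{-M^*}$, where $M^* := \sup_p\{p - R/\Pr_{D_v}[v \geq p]\}$ (the existence of a feasible unregulated policy forces $M^* \geq 0$). For fine policies $(y, 0, p)$ in the regime $yk \geq 1$, the substitution $A := (\ln(yk)+1)/k$ and effective price $q := p + A$ rewrites the profit constraint as $A \leq q - R/\Pr_{D_v}[v \geq q]$ and the externality as $e^{1 - kA}$; maximizing $A$ via the same $M^*$ yields fine externality $e^{1 - kM^*}$ when $kM^* \geq 1$ (and no fine policy beats externality $1$ otherwise). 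Comparing $e^{-M^*}$ and $e^{1 - kM^*}$ shows cost is weakly better iff $k \leq 1 + 1/M^*$, giving cutoff $T := 1 + 1/M^*$ (with $T = \infty$ if $M^* = 0$).

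The main technical subtlety is the piecewise structure of $\ell(k,s)$ at $yk = e^c$, which forces the case split and requires checking that the fine conversion $s_F = (y,0,p-c)$ still lies in the $yk \geq 1$ regime where its loss formula applies---this is automatic since $yk \geq e^c \geq 1$. The main conceptual subtlety is that the cutoff $T = 1 + 1/M^*$ strictly exceeds $1$, rather than equaling $1$ at which producer and consumer per-unit-effort costs balance. The extra $1/M^*$ captures a deadweight intrinsic to the fine mechanism: the buyer's loss includes an expected fine $y \cdot \risk = 1/k$ on top of their effort, which tightens the profit constraint and forces fine policies to settle for less total security at the same profit. Consequently, even at $k = 1$ the cost policy is strictly preferred; fine becomes optimal only once the consumer's advantage $k - 1$ outweighs this deadweight, scaled by how much profit room $M^*$ is available.
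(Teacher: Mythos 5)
Your proof is correct and reaches the cutoff $T=1+1/c^*$ that the paper arrives at (your $M^*$ is exactly the paper's $c^*$), but it is organized differently. The paper splits on three ranges of $k$: Lemma~\ref{lem:easycase} ($k\le1$) uses exactly your cost conversion $s\mapsto(0,\ln(yk),p-c+\ln(yk))$; Lemmas~\ref{lem:hardercase} and~\ref{lem:hardestcase} ($k\in[1,1+1/c^*]$ and $k\ge 1+1/c^*$) do not compute closed-form externalities for cost and fine policies but instead pin down the candidate optimum ($(0,c^*,p^*)$ or the fine policy with loss $c^*$) and argue by contradiction that any feasible competitor with lower risk would need loss-plus-cost exceeding $c^*$, violating the definition of $c^*$. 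Your Part~1, by contrast, settles simpleness for \emph{all} $k$ at once via a single loss-and-risk-preserving conversion (with the decrements $(u(1-k)+1)/k$ and $c(1-1/k)$ identifying which conversion applies), and your Part~2 then makes the comparison explicit by exhibiting the optimal cost externality $e^{-M^*}$ and optimal fine externality $e^{1-kM^*}$ and solving $e^{-M^*}\lessgtr e^{1-kM^*}$ directly. Your route makes the structure of the cutoff transparent---it is exactly the crossover of two exponentials---whereas the paper's contradiction arguments are more localized but reusable as lemmas later (e.g., \Cref{prop:thresholds} and the fine-extension argument in \Cref{sec:extension-fine} lean on $c^*$ and \Cref{lem:hardestcase} as black boxes). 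Your closing observation about the deadweight $1/k$ fine-expectation term is the right intuition for why $T>1$, and matches the paper's closing remark in \Cref{sec:deterministic-distribution}; one small typo aside (``$ye^{-x}$'' should read $ye^{-c}$), the argument is sound.
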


\begin{proof}
Consider any policy $s=(y,c,p)$. Because all consumers have the same effectiveness $k$, $s$ induces the same loss for all consumers. We first claim the following:

\begin{lemma}\label{lem:easycase}Let $k \leq 1$. Then for all $D_v$ and any policy $s=(y,c,p)$, there is an alternative policy $s'=(0,c',p)$ with $\rev_{D_v\times\{k\}}(s') \geq \rev_{D_v \times \{k\}}(s)$ and $\ext_{D_v \times \{k\}}(s') \leq \ext_{D_v \times\{k\}}(s)$.
\end{lemma}
\begin{proof}
In policy $s$, all consumers have the same loss $\ell(k,s)$. This therefore is a good opportunity to try and make use of Observation~\ref{obs:monotone}. First, consider the possibility that $yk < e^c$. In this case, $h^*(k,s) = 0$, $\ell(k,s) = ye^{-c}$, and $\risk(k,s) = e^{-c}$. This implies that $\ext_{D_v \times \{k\}}(s) = e^{-c}$. Consider instead the policy $s'=(0,c,p)$. Then $\ell(k,s') = 0$, but $\risk(k,s) = e^{-c}$ and $\ext_{D_v \times \{k\}}(s) = e^{-c}$ like before. So the externalities are the same. An application of Observation~\ref{obs:monotone} concludes that the profits have improved (indeed, $(c,p)$ are the same in both policies, and the loss decreases as we switch from policy $s$ to $s'$). 

Consider now the possibility that $yk \geq e^c$. In this case, $h^*(k,s) = \frac{\ln(yk)-c}{k}$, $\ell(k,s)=\frac{\ln(yk)-c+1}{k}$, and $\risk(k,s) = \frac{1}{yk}$. Consider instead the policy $s' = (0,\ln(yk), p-c+\ln(yk))$. In this new policy, $\ell(k,s') = 0$ and $\risk(k,s') = \frac{1}{yk}$. So indeed, the new policy has the same externalities. We just need to ensure that we can apply Observation~\ref{obs:monotone}. To this end, observe that:
\begin{align*}
    \ell(k,s)+c-(\ell(k,s')+c') &= \frac{\ln(yk)-c+1}{k}+c - \ln(yk)\\
    &=(1/k-1)\cdot(\ln(yk)-c) + 1/k\\
    &\geq 0.
\end{align*}
The last line follows because $k\leq 1$ and $\ln(yk) \geq c$ (because $yk \geq e^c$). So the hypotheses of Observation~\ref{obs:monotone} hold, and we can apply Observation~\ref{obs:monotone} to conclude that the profits improve from $s$ to $s'$ as well.
\end{proof}
Lemma~\ref{lem:easycase} covers the cases when $k \leq 1$: there is always an optimal cost policy. We now move to the case when $k > 1$. There are two cases to consider: one where the optimal policy will be a cost policy, and one where the optimal policy will be a fine policy. The distinguishing feature between these cases will be for a given $c$, how big of a fine is necessary to incentivize the consumer to put in effort $c/k$, and what the consumer's loss looks like for this choice of $y$. Below, $c^*$ is defined to be the maximum $c$ such that there exists a $p$ such that $\rev_{D_v \times \{0\}}(0,c,p) \geq R$. Observe that $c^*$ is also equal to the maximum $\ell$ such that there exists a $p$ such that $\rev_{(D_v - \ell) \times \{0\}}(0,0,p) \geq R$ (here, $D_v - \ell$ denotes the distribution which samples $v$ from $D_v$ and then subtracts $\ell$, taking a maximum with $0$ if desired). That is, $c^*$ is the maximum loss that can be uniformly applied to all consumers (drawn from $D_v$) while still resulting in a distribution for which profit $\geq R$ is achievable. 

\begin{lemma}\label{lem:hardercase} Let $c^*$ denote the maximum $c$ such that there exists a $p$ such that $\rev_{D_v \times \{0\}}(0,c,p) \geq R$. Then a cost policy is optimal for $D_v \times \{k\}$ if $k \in [1,1+1/c^*]$. 
\end{lemma}
\begin{proof}
First, observe that the lemma hypothesis implies that \emph{any} feasible policy must have $\ell(k,s) +c \leq c^*$ (if not, then an application of Observation~\ref{obs:monotone} lets us contradict the lemma's hypothesis with a feasible $c'=\ell(k,s)+c>c^*$). 

Consider now $k \in [1,1+1/c^*]$, and start from some policy $s=(y,c,p)$. If this policy has $h^*(k,s) = 0$, then certainly we can just update $s'=(0,c,p)$ and get better profits with the same externalities (by Observation~\ref{obs:monotone}). If instead $h^*(k,s)>0$, then $\ell(k,s)=\frac{\ln(yk)-c+1}{k}$, and $\risk(k,s) = \frac{1}{yk}$. Consider instead $s^* = (0,c^*,p^*)$, for whichever $p^*$ witnesses $\rev_D(s^*) \geq R$ (we know that such a $p^*$ exists by the lemma's hypothesis). So now we just need to compare externalities. Assume for contradiction that $\risk(k,s^*) > \risk(k,s)$. Then we get:
\begin{align*}
    \risk(k,s^*) > \risk(k,s)  &\Rightarrow e^{-c^*} > \frac{1}{yk} \\
    &\Rightarrow c^* < \ln(yk)\\
    &\Rightarrow \frac{\ln(yk)-c+1}{k} >\frac{c^*-c+1}{k}\\
    &\Rightarrow \ell(k,s)+c > \frac{c^*-c+1}{k}+c\\
    &\Rightarrow \ell(k,s)+c > \frac{c^*+1}{k}\\
    &\Rightarrow \ell(k,s)+c > c^* \Rightarrow\!\Leftarrow.
\end{align*}
The last implication uses the fact that $k \leq 1+1/c^*$. The line before this uses that $k \geq 1$. The contradiction arises because this would imply a scheme ($s$) with profit $\geq R$ with loss $>c^*$, contradicting the definition of $c^*$ by the reasoning in the first paragraph of this proof.
\end{proof}

\begin{lemma}\label{lem:hardestcase} Let $c^*$ denote the maximum $c$ such that there exists a $p$ such that $\rev_{D_v \times \{0\}}(0,c,p) \geq R$. Then a fine policy is optimal for $D_v \times \{k\}$ if $k \geq 1+1/c^*$. 
\end{lemma}
\begin{proof}
Again start from some policy $s = (y,c,p)$, inducing some loss $\ell(k,s)$. First, maybe $h^*(k,s)>0$. In this case, the risk is $\frac{1}{yk}$ and the loss plus cost is $\frac{\ln(yk)-c+1}{k}+c$. In particular, observe that the partial derivative of the loss plus cost with respect to $c$ is $1-1/k > 0$. So the policy $s' = (y,0,p-c)$ has $\risk(k,s') = \risk(k,s)$ but also $\ell(k,s')+c' <\ell(k,s)+c$. So Observation~\ref{obs:monotone} claims that this policy gets at least as much profits (and the risk is the same). 

If instead, $h^*(k,s) = 0$, then the risk is $e^{-c}$ and the loss is $y\cdot e^{-c}$. In this case, consider instead $y^*$ such that $\frac{\ln(y^*k)+1}{k} = c^*$ and using $s^* = (y^*,0,p^*)$, for the $p^*$ satisfying $\rev_D(s^*) \geq R$ (again, such a $p^*$ must exist by definition of $c^*$, and the fact that $\ell(k,s^*) = c^*$, plus Observation~\ref{obs:monotone}). We just need to analyze the risk. Similar to the previous proof, assume for contradiction that $\risk(k,s^*)>\risk(k,s)$. Then:
\begin{align*}
    \risk(k,s^*) > \risk(k,s)  &\Rightarrow e^{-c} < \frac{1}{y^*k} \\
    &\rightarrow c > \ln(y^*k)\\
    &\Rightarrow \frac{\ln(y^*k)+1}{k} <\frac{c+1}{k}\\
    &\Rightarrow c^* < \frac{c+1}{k}\\
    &\Rightarrow c^* > c^* \cdot \frac{c+1}{1+c^*} \Rightarrow \!\Leftarrow.
\end{align*}
The last inequality uses the fact that $k \geq 1+1/c^*$, and derives a contradiction as $c \leq c^*$ (if $c > c^*$, then certainly $\ell(k,s)+c>c^*$, contradicting the definition of $c^*$). 
\end{proof}

All three cases together prove Theorem~\ref{thm:opt-deterministic}. The $T$ prescribed in the theorem statement is exactly $1+1/c^*$, where $c^*$ is the maximum $c$ such that there exists a $p$ for which $\rev_{D_v\times \{0\}}(0,c,p) \geq R$.
\end{proof}

We conclude with one last proposition regarding the behavior of the threshold with respect to the profits constraints $R$. Proposition~\ref{prop:thresholds} below states that as $R$ increases, the threshold beyond which a fine policy is optimal increases as well.

\begin{proposition}\label{prop:thresholds} Let $T(D_v,R)$ denote the threshold such that both a fine policy and cost policy are optimal for $D_v\times \{T(D_v,R)\}$ subject to profits constraints $R$. Then $T(D_v,R)$ is monotone increasing in $R$.
\end{proposition}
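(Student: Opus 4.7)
The plan is to reduce this to a monotonicity statement about $c^*$, where $c^*$ is the quantity defined within the proof of Theorem~\ref{thm:opt-deterministic}. Recall from the end of that proof that the threshold is characterized as
\[ T(D_v,R) = 1 + 1/c^*(R), \]
where $c^*(R) = \sup\{c : \exists\, p \text{ with } \rev_{D_v \times \{0\}}(0,c,p) \geq R\}$. So it suffices to show that $c^*(R)$ is monotone decreasing in $R$, which immediately gives that $1 + 1/c^*(R)$ is monotone increasing in $R$.

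First I would simplify the expression for profit under a cost policy. Since $y=0$, Equation~\eqref{eq:loss} yields $\ell(k,(0,c,p)) = 0$ for every $k$. Hence
\[ \rev_{D_v \times \{0\}}(0,c,p) = (p-c)\cdot \Pr_{v \leftarrow D_v}[v \geq p]. \]
Define $f(c) := \sup_{p \geq c} (p-c)\Pr_{v \leftarrow D_v}[v \geq p]$. For each fixed $p$, the quantity $(p-c)\Pr[v \geq p]$ is (weakly) decreasing in $c$ (the probability factor is non-negative, and $p-c$ decreases in $c$). Taking the supremum over $p$ preserves this monotonicity, so $f$ is weakly decreasing in $c$.

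Now observe that $c^*(R) = \sup\{c : f(c) \geq R\}$. If $R_1 \leq R_2$, then $\{c : f(c) \geq R_2\} \subseteq \{c : f(c) \geq R_1\}$, so $c^*(R_2) \leq c^*(R_1)$. Hence $c^*$ is monotone decreasing in $R$, and therefore $T(D_v,R) = 1 + 1/c^*(R)$ is monotone increasing in $R$, as required.

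I do not anticipate a real obstacle here: the reduction to the monotonicity of $c^*$ is exactly the content of the characterization established in the proof of Theorem~\ref{thm:opt-deterministic}, and the monotonicity of $c^*$ follows from an essentially one-line pointwise argument on $(p-c)\Pr[v \geq p]$. The only thing one should be mildly careful about is feasibility: we need $c^*(R) > 0$ so that $1/c^*(R)$ is well-defined, but this is ensured by the standing assumption that the profits benchmark $R$ is achievable, since $R \leq f(0)$ together with the fact that $f$ is continuous/monotone gives $c^*(R) \geq 0$, and the degenerate boundary case $c^*(R)=0$ corresponds to $T = \infty$, which is consistent with the monotonicity claim.
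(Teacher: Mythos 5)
Your proposal is correct and follows essentially the same route as the paper: reduce to the monotonicity of $c^*(D_v,R)$ via the characterization $T(D_v,R) = 1 + 1/c^*(D_v,R)$ from the proof of Theorem~\ref{thm:opt-deterministic}, then observe $c^*$ is decreasing in $R$. You merely spell out in a bit more detail (via the function $f$) why $c^*$ is decreasing, where the paper contents itself with the one-line observation that as profits constraints grow one cannot afford as much security.
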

\begin{proof}
To see this, let $c^*(D_v,R)$ denote the maximum $c$ such that there exists a $p$ such that $\rev_{D_v \times \{0\}}(0,c,p) \geq R$. Then $c^*(D_v,R)$ is \emph{decreasing} in $R$ (as the profits constraint goes up, we can't afford as much security). So $1+1/c^*(D_v,R)$ is \emph{increasing} in $R$. This means that the threshold $T(D_v,R)$ beyond which a fine policy is optimal for $D_v \times \{T\}$ is increasing as a function of the profits constraint $R$ (because $T = 1+1/c^*(D_v,R)$). 
\end{proof}

This concludes our treatment of the case where $k$ is a point-mass. Theorem~\ref{thm:opt-deterministic} should both be viewed as a warm-up to introduce some of our core techniques, and also as a building block towards our stronger theorems (in the following sections). The main technique we introduced is the ability to reduce risk and loss simultaneously to improve both profits and externalities. The idea was that if the buyer is less effective than the seller, everyone prefers that the seller put in effort ($y=0, c> 0$). If the buyer is more effective than the seller, everyone prefers that the buyer put in effort. However, the regulator can not directly mandate that the buyer put in effort, and unfortunately the fines required to extract the desired buyer behavior may too negatively affect the profit. This is why the transition from cost to fine policies is $1+1/c^*$ instead of $1$. 

\section{Homogeneous Distributions}\label{sec:homogeneous-distribution}
In this section, we show that for populations that are sufficiently homogeneous in effectiveness, the optimal policy remains simple. The second half of Theorem~\ref{thm:opt-homogeneous} requires a technical assumption. Specifically, we say that a policy $(y,c,p)$ is \emph{profits-maximizing} if, conditioned on $y,c$, $p$ is set to maximize the seller's profits (that is, $\rev_D(y,c,p) \geq \rev_D(y,c,p')$ for all $p'$).

\begin{theorem}\label{thm:opt-homogeneous} For all $D_v$, $R$, there exists a cutoff $T$ such that \begin{itemize}
    \item For all $D_k$ supported on $[0,T]$, the externality-minimizing policy for $D_v\times D_k$ subject to profits $R$ is a cost policy.
    \item For all $D_k$ supported on $[T,\infty)$, the externality-minimizing policy for $D_v \times D_k$ subject to profits $R$ is either a fine policy, or it is \emph{not} profits-maximizing.
    \end{itemize}
\end{theorem}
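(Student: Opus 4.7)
The plan is to mirror the structure of Theorem~\ref{thm:opt-deterministic}, setting $T = 1 + 1/c^*$, where $c^*(D_v,R)$ is the maximum cost for which a cost policy achieves profits at least $R$. In both parts, I would take any candidate policy $s=(y,c,p)$ and compare it to a canonical simple policy, using Observations~\ref{obs:monotone}--\ref{obs:monotone3} to propagate per-type bounds to the whole population.

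For Part 1 ($D_k$ supported on $[0,T]$), I aim to show that the canonical cost policy $s^* = (0,c^*,p^*)$ dominates any feasible $s=(y,c,p)$ in externalities, by proving $\risk(k,s) \geq e^{-c^*}$ for every purchasing $k \in \text{supp}(D_k)$. First, any feasible $s$ must have $c \leq c^*$ (otherwise Observation~\ref{obs:monotone} applied to $(y,c,p)$ versus $(0,c,p)$ contradicts the maximality of $c^*$). This immediately handles the regime $yk \leq e^c$, where $\risk(k,s) = e^{-c} \geq e^{-c^*}$. For the regime $yk > e^c$, where $\risk(k,s) = 1/(yk)$, I would show $yk \leq e^{c^*}$ using $k \leq T = 1 + 1/c^*$: adapting the calculation in Lemma~\ref{lem:hardercase}, assuming $yk > e^{c^*}$ forces $\ell(k,s) + c > c^*$ at this $k$, and by Observation~\ref{obs:monotone3} at every smaller $k$ in the support. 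A contradiction with feasibility would then follow by comparing $\rev_D(s)$ against the cost policy at $c^*$ via a $D_v$-shift argument.

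For Part 2 ($D_k$ supported on $[T,\infty)$), I take any profits-maximizing $s = (y,c,p^*(y,c))$ with $c>0$ and construct a fine policy $s'$ with at least $R$ profits and no higher externality. The main move, mirroring Lemma~\ref{lem:hardestcase}, is the substitution $(y,c) \mapsto (y,0)$: for each $k \geq 1$ with $yk \geq e^c$, this preserves $\risk(k,\cdot) = 1/(yk)$ while reducing $\ell(k,\cdot) + c$ by $c(1-1/k) \geq 0$, so Observation~\ref{obs:monotone} guarantees the policy $(y,0,p-c)$ has weakly higher profits. Re-optimizing the price then yields a valid fine policy $s' = (y,0,p^*(y,0))$. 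The delicate sub-case is consumers with $yk < e^c$, whose risk can rise under the substitution; here the profits-maximizing hypothesis becomes essential, since otherwise the policy could be improved by decreasing $c$ and $p$ simultaneously (keeping $p-c$ fixed), strictly lowering $c$ and hence externality without altering any purchase decision, contradicting optimality of $p$. The main obstacle throughout is managing the heterogeneity of loss across the population, which is precisely what the support assumptions on $D_k$ are designed to tame.
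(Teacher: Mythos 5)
Your Part~1 sketch is broadly in the right spirit and takes a more direct route than the paper: you argue pointwise that every purchasing consumer in a feasible policy $s$ satisfies $\risk(k,s)\geq e^{-c^*}$, whereas the paper's Lemma~\ref{lem:costextension} instead reduces cleanly to the point-mass case via Observation~\ref{obs:monotone2} and then bounds $\ext_{D_v\times D_k}(s')\geq\risk(T,s')=\ext_{D_v\times\{T\}}(s')$. Your version can be made to work, but you need to be more careful at the contradiction step: deriving $\ell(k,s)+c>c^*$ at an arbitrary $k$ and propagating it \emph{downward} via Observation~\ref{obs:monotone3} only contradicts feasibility if $k$ is (close to) the supremum of $\supp(D_k)$; you should fix $k$ at $\sup\supp(D_k)\leq T$ before running the Lemma~\ref{lem:hardercase} calculation, and note that at $k=T$, $c=0$ the inequality becomes an equality, so watch for boundary cases.

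Part~2 has a genuine gap. The naive substitution $(y,c)\mapsto(y,0)$ without increasing $y$ strictly raises the risk of every purchasing consumer with $yk<e^c$, and your proposed fix does not close this hole. First, the claim that ``decreasing $c$ and $p$ simultaneously, keeping $p-c$ fixed, strictly lowers externality'' is backwards: for purchasing consumers in the $h^*=0$ regime the risk is $e^{-c}$, which \emph{increases} as $c$ decreases, so a profits-maximizing $p$ does not preclude this. Second, even if you substitute and ``re-optimize the price,'' the re-optimization alters \emph{which} consumers buy, and without a monotonicity argument the new purchaser set could contain low-$k$, high-risk buyers that raise externality. The paper handles both issues: it raises $y$ to $y'$ with $\ell(T,s')=\ell(T,s)+c$ so losses only fall for $k\geq T$, proves the key inequality $\frac{1}{y'T}\leq e^{-c}$ (which is exactly what makes the $yk<e^c$ regime harmless) via Lemma~\ref{lem:hardestcase} applied at a reduced profit constraint and Proposition~\ref{prop:thresholds}, and then invokes the policy comparison function (Lemma~\ref{lemma:transfer-function-monotonicity}) together with Corollary~\ref{cor:transfer} to show that raising the price back to the original sale probability only removes \emph{lower}-$k$ consumers. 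Finally, the profits-maximizing hypothesis is used in a different place than you suggest: it rules out the case $\ell(T,s)+c<\ell(T,s^*)$, where one can strictly raise the price while still clearing $R$, contradicting the fact (via the invariant transformation, Lemma~\ref{lemma:inv-transform}) that an optimal policy must have $\rev_D(s)=R$ exactly.
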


The proof of Theorem~\ref{thm:opt-homogeneous} will follow from Lemmas~\ref{lem:costextension} and~\ref{lemma:fineextension}, which handle the two claims in the theorem separately. Finally, we show in Section~\ref{sec:newexample} that the profits-maximizing qualification in part two of Theorem~\ref{thm:opt-homogeneous} is necessary: 

\begin{proposition}\label{prop:newexample}
There exist distributions $D_v, D_k$, and profits constraint $R$ such that:
\begin{itemize}
\item $T$ is such that for all $k \geq T$, the externality-minimizing policy for $D_v\times\{k\}$ subject to profits constraints $R$ is a fine policy. 
\item $D_k$ is supported on $[T,\infty)$.
\item \emph{No} fine policy is externality-minimizing policy for $D_v \times D_k$ subject to profits constraints $R$.
\item The externality-minimizing policy for $D_v \times D_k$ subject to profits constraints $R$ is not simple, and not profits-maximizing (the latter is implied by the second bullet of Theorem~\ref{thm:opt-homogeneous}). 
\end{itemize}
\end{proposition}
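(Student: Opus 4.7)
The plan is to construct an explicit counterexample by carefully choosing $D_v$, $D_k$, and $R$, then exhibiting a non-simple policy that strictly beats every fine policy. First I would fix a $D_v$ that is \emph{not} a point-mass (for instance, a two-atom distribution on $\{v_L, v_H\}$ with appropriate weights, or a bounded continuous distribution), and pick a profits constraint $R$ strictly below the no-regulation optimum. From these I compute the point-mass threshold $c^* = \max\{c : \exists p,\ \rev_{D_v \times \{0\}}(0,c,p) \geq R\}$ and set $T = 1 + 1/c^*$ as in Theorem~\ref{thm:opt-deterministic}. The non-degeneracy of $D_v$ is essential: if $D_v$ were a point mass, the cost policy at $c^*$ extracts all buyer surplus, leaving no slack for a fine, and every mixed policy weakly degrades (as the proof of Theorem~\ref{thm:opt-deterministic} shows). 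Then I choose $D_k$ as a two-atom distribution on $\{k_1, k_2\} \subseteq [T,\infty)$ with $k_1$ very close to $T$ and $k_2$ much larger.

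Next I would enumerate the optimal fine policy. For each $y$, the profits-maximizing price conditional on $c = 0$ is determined by a small case analysis on which subpopulation buys (by Observation~\ref{obs:monotone3}, higher-$k$ buyers are always weakly more inclined to buy, so only a few candidate prices are relevant). This yields an explicit formula for $\ext_D$ as a function of $y$, whose minimum $E_{\mathrm{fine}}$ over feasible $y$ I can compute directly. Simultaneously, the best cost policy $(0, c, p)$ achieves $\ext_D = e^{-c^*}$, by the definition of $c^*$.

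Now I would exhibit a non-simple, non-profits-maximizing candidate $s^* = (y^*, c_0, p^*)$ with $c_0 > 0$ small, $y^*$ chosen so that $y^* k_1 < e^{c_0}$ while $y^* k_2 \geq e^{c_0}$, and $p^*$ strictly below the profits-maximizing price for $(y^*, c_0)$. Under this policy, the low-$k$ consumer puts in zero effort and has $\risk = e^{-c_0}$, while the high-$k$ consumer puts in effort and has $\risk = 1/(y^* k_2)$, which is tiny for large $k_2$. By setting $p^*$ low enough that both types buy and $\rev_D(s^*) = R$ exactly, the resulting conditional externality is
\[
\ext_D(s^*) = \tfrac{1}{2}\bigl(e^{-c_0} + 1/(y^* k_2)\bigr).
\]
I would then verify with the chosen parameters that $\ext_D(s^*) < E_{\mathrm{fine}}$ and $\ext_D(s^*) < e^{-c^*}$ (the latter is automatic if $c_0$ is close to $c^*$ and $k_2$ is large, since the second term is arbitrarily small while the first is at most $e^{-c^*}$ or smaller). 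This simultaneously rules out fine policies (bullet~3) and cost policies, yielding ``not simple.'' The ``not profits-maximizing'' conclusion is then immediate: by the second bullet of Theorem~\ref{thm:opt-homogeneous}, since $D_k \subseteq [T,\infty)$, any optimal policy is either a fine policy or not profits-maximizing, and we have just excluded the former.

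The main obstacle is the parameter tuning in the final comparison: we need strict inequality $\ext_D(s^*) < E_{\mathrm{fine}}$, which boils down to showing that the low-$k$ risk under the best fine (roughly $1/(yk_1)$ at the profits-maximizing $y$) strictly exceeds $e^{-c_0}$, while the price concession required by $s^*$ still respects $\rev_D(s^*) \geq R$. The mechanism making this possible is conceptually clean: with a pure fine, the low-$k$ consumer's risk degrades as $1/(yk_1)$ and cannot be driven down without raising $y$ so much that the consumer refuses to buy, whereas a small baseline $c_0$ caps the low-$k$ risk at $e^{-c_0}$ unconditionally, and the sacrificed profit per sale is compensated by selling to both types via a reduced $p^*$. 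Choosing $k_2$ arbitrarily large makes the high-$k$ contribution negligible in both comparisons, reducing everything to a one-variable inequality in $y$, $c_0$, and the parameters of $D_v$ that can be verified numerically.
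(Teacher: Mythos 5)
Your overall setup — two-atom $D_v$, two-atom $D_k$ with one atom huge, threshold $T=1+1/c^*$, and a mixed policy where the low-$k$ buyer exerts zero effort while the high-$k$ one exerts positive effort — matches the paper's construction. But the construction step has a genuine gap that is not just ``parameter tuning.'' Your candidate $s^*$ is designed to sell to the \emph{entire} population (that is what the $\tfrac12$ weights in $\ext_D(s^*)=\tfrac12(e^{-c_0}+1/(y^*k_2))$, and the claim ``$e^{-c_0}$ is at most $e^{-c^*}$ when $c_0$ is close to $c^*$,'' implicitly require). That plan cannot work. When every buyer purchases, the profit constraint forces $c_0+\ell(k_1,y^*,c_0)\le v_{\min}-R=c^*$ (in the regime you describe where $c^*$ is attained at the sell-to-all price), hence $c_0<c^*$ and $e^{-c_0}>e^{-c^*}$, \emph{strictly worse} than the risk $e^{-c^*}$ a pure fine achieves for $k_1=T=1+1/c^*$ under the same constraint. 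The same computation shows that any $(y^*,c_0)$ with $c_0>0$ selling to all buyers gives $k_1$-risk strictly larger than $e^{-c^*}$, whether $k_1$ exerts effort or not. So no choice of parameters makes ``sell to everyone'' beat the fine benchmark; the failure is structural.

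The missing idea is that the winning non-simple policy must \emph{drop a low-$v$, low-$k$ buyer}. Selling with probability $<1$ raises the required profit per sale, which shrinks the per-buyer budget $c^*_S:=v_{\min,S}-R/\Pr[\text{sale}]$ for the subpopulation $S$ that still buys, which in turn raises the threshold $1+1/c^*_S$ above $k_1$, putting $k_1$ strictly in the regime where cost-style regulation outperforms fines. That is exactly what the paper does: with $D_v\in\{1,16/15\}$, $D_k\in\{3,x\}$, $R=1/2$, selling to all four types is locked at the cusp $3=1+1/c^*$ so fines and costs tie, but selling to three of the four types has $c^*_S=1-\tfrac{2}{3}=\tfrac13$ and threshold $1+3=4>3$, so a positive $c$ with a tiny $y$ (to protect the $k=x$ buyers) strictly beats every feasible fine policy on that same subset. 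To repair your argument, replace ``both types buy'' with ``drop $(v_L,k_1)$ and sell with probability $3/4$'' (with the $\tfrac13/\tfrac23$ weights this entails), recompute $c_0$ against the tighter budget $c^*_S$ rather than $c^*$, and only then compare to the fine policy targeting the same three-type subset. The rest of your scaffolding (identifying $T$, enumerating fine policies by purchase set, invoking Theorem~\ref{thm:opt-homogeneous} for the ``not profits-maximizing'' conclusion) is fine.
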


Proposition~\ref{prop:newexample} is perhaps surprising: a fine policy is externality-minimizing for $D_v \times \{T\}$, and $D_k$ stochastically dominates $T$, so the same fine policy has \emph{even lower} externalities, and potentially greater profit for $D_v \times D_k$. Indeed, the optimal fine policy for $D_v \times D_k$ achieves lower externalities than that of $D_v \times \{T\}$. The catch is that an \emph{even better} non-simple policy becomes viable, and achieves \emph{still lower} externalities. Theorem~\ref{thm:opt-homogeneous} claims, however, that the optimal non-simple policy must not be profits-maximizing. \arxiv{Even more surprising, we show in \Cref{sec:profits-maximizing-seller} that if we constrain the optimization problem to only profits-maximizing prices then \Cref{prop:newexample} is still true which implies the negation of the second bullet of \Cref{thm:opt-homogeneous}.}

\subsection{Extension Lemma for small \texorpdfstring{$k$}{k}}\label{sec:extension-cost}
The small $k$ case follows roughly from the following intuition. For cost policies, neither the buyer's loss nor her risk depend on $k$. So whichever cost policy is optimal for $D_v \times \{T\}$ achieves the same profits and externalities as $D_v \times D_k$. Intuitively, going from $\{T\}$ to $D_k$ supported on $[0,T]$ cannot possibly \emph{increase} the profits of any scheme (formally: Observation~\ref{obs:monotone2}), so the initial cost policy should remain optimal. 

\begin{lemma}[Extension of Cost Policy]\label{lem:costextension}
Let $s$ be a cost policy that is optimal for $D_v \times \{T\}$ subject to profits $R$. Then for all $D_k$ supported on $[0,T]$, $s$ is optimal for $D_v \times D_k$ subject to profits $R$. 
\end{lemma}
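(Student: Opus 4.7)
The plan has two steps: (i) confirm that the cost policy $s$ is feasible on $D_v \times D_k$ and attains externality $e^{-c}$; and (ii) rule out any better policy by lifting a hypothetical improvement back to $D_v \times \{T\}$, contradicting the optimality of $s$ there. Step (i) is almost immediate: for a cost policy $s = (0, c, p)$, equations~\eqref{eq:loss} and~\eqref{eq:risk} give $\ell(k, s) = 0$ and $\risk(k, s) = e^{-c}$ for every $k$, so the buyer's utility $u(v,k,s) = v - p$ is $k$-independent. Hence both sides of the market evaluate identically under $D_v \times \{T\}$ and $D_v \times D_k$: $\rev_{D_v \times D_k}(s) = \rev_{D_v \times \{T\}}(s) \geq R$, and $\ext_{D_v \times D_k}(s) = e^{-c}$.

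For (ii), suppose for contradiction that some $s' = (y', c', p')$ satisfies $\rev_{D_v \times D_k}(s') \geq R$ and $\ext_{D_v \times D_k}(s') < e^{-c}$. I would show that $s'$ is then feasible for $D_v \times \{T\}$ with externality strictly less than $e^{-c} = \ext_{D_v \times \{T\}}(s)$, which contradicts the optimality of $s$. The profit inequality \emph{lifts up}: since the point-mass $\{T\}$ stochastically dominates any $D_k$ supported on $[0, T]$, Observation~\ref{obs:monotone2} yields $\rev_{D_v \times \{T\}}(s') \geq \rev_{D_v \times D_k}(s') \geq R$ (the hypothesis $p' \geq c'$ of that observation holds automatically, since otherwise profits would be non-positive and $s'$ would not satisfy the profit constraint). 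The externality \emph{lifts down}: Observation~\ref{obs:monotone3} gives $\risk(k, s') \geq \risk(T, s')$ for every $k \in \supp(D_k) \subseteq [0, T]$, so the conditional expectation in~\eqref{eq:ext} is at least $\risk(T, s')$, and this equals $\ext_{D_v \times \{T\}}(s')$ because all buyers in the point-mass market share the same risk $\risk(T, s')$.

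The main obstacle I anticipate is the phenomenon flagged by Example~\ref{ex:ext-non-monotonicity}: for a fixed policy, externalities are \emph{not} monotone in $D_k$ in general, because changing the effectiveness distribution can change \emph{who} chooses to buy. The argument above sidesteps this by comparing specifically against the point-mass $D_v \times \{T\}$, where the pointwise bound $\risk(k, s') \geq \risk(T, s')$ holds uniformly over the support and the denominators cancel cleanly (the point-mass externality is just the single constant $\risk(T, s')$, regardless of which $v$'s decide to buy). This turns the asymmetric monotonicity---profits weakly increasing and externalities weakly decreasing as $D_k$ collapses up to $\{T\}$ for this fixed $s'$---into a clean reduction to the point-mass case, where the assumed optimality of $s$ on $D_v \times \{T\}$ closes the argument.
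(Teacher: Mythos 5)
Your proof is correct and follows essentially the same route as the paper's: first note that a cost policy's profit and externality are $k$-independent so they coincide on $D_v\times\{T\}$ and $D_v\times D_k$, then lift any hypothetical improver $s'$ to the point-mass market via Observation~\ref{obs:monotone2} for profits and the pointwise bound $\risk(k,s')\geq\risk(T,s')=\ext_{D_v\times\{T\}}(s')$ (Observation~\ref{obs:monotone3}) for externalities, contradicting the optimality of $s$ there. The extra remarks you flag---that $p'\geq c'$ is automatic under the profit constraint, and that the point-mass comparison sidesteps the non-monotonicity in Example~\ref{ex:ext-non-monotonicity}---are correct and consistent with the paper's reasoning.
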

\begin{proof}
First, we observe that $\rev_{D_v \times \{T\}}(s) = \rev_{D_v \times D_k}(s)$. This is simply because the loss of consumers is independent of $k$ (as $y = 0$). Similarly, $\ext_{D_v \times \{T\}}(s) = \ext_{D_v \times D_k}(s)$. This is again because the risk of consumers is independent of $k$. 

Now, assume for contradiction that there is some policy $s'$ with profits $\rev_{D_v \times D_k}(s') \geq R$ and also $\ext_{D_v \times D_k}(s') < \ext_{D_v \times \{T\}}(s)$. Then we have the following inequality from Observation~\ref{obs:monotone2}:
$$R \leq \rev_{D_v \times D_k}(s') \leq \rev_{D_v \times \{T\}}(s').$$

Therefore, as $s$ is optimal for $D_v \times \{T\}$ subject to profits $R$, we must have:
$$\ext_{D_v \times \{T\}}(s') \geq \ext_{D_v \times \{T\}}(s).$$

This now lets us conclude the following chain of inequalities, where the first line is a corollary of Observation~\ref{obs:monotone3}: the consumer in a population with $D_k$ supported on $[0,T]$ whose device is least likely to be compromised is a consumer with $k = T$. The third line follows from the reasoning above (that $s'$ achieves profits at least $R$ on $D_v \times \{T\}$, and is therefore feasible). The final line follows because the externalities of a cost policy are independent of $k$.
\begin{align*}
    \ext_{D_v \times D_k}(s') &\geq \risk(T,s')\\
    &=\ext_{D_v \times \{T\}}(s')\\
    &\geq \ext_{D_v \times \{T\}}(s)\\
    &= \ext_{D_v \times D_k}(s).
\end{align*}
\end{proof}

Lemma~\ref{lem:costextension} proves the first bullet of Theorem~\ref{thm:opt-homogeneous}.

\subsection{Extension Lemma for large \texorpdfstring{$k$}{k}}\label{sec:extension-fine}

In this section, we \www{sketch the proof}\arxiv{proof} for the large $k$ case of Theorem~\ref{thm:opt-homogeneous}. \www{Refer to \Cref{sec:homogeneous-distribution} of \cite{venturyne2019externality} for omitted proofs.} The proof will be a little more involved this time, since we can no longer claim that the externalities of a fine policy are independent of $k$ (whereas this does hold for cost policies). The intuition for this case is the same though: if a fine policy is optimal for $D_v \times \{k\}$ for all $k \geq T$, and $D_k$ is supported on $[T,\infty)$, fine policies should remain optimal for $D_v \times D_k$. Most of the proof does not make use of the technical assumption that the $s$ we are competing with is a profits-maximizing policy: this assumption only arises at the very end.

The first step in our proof is the following concept, which captures the change in loss for a consumer $(v,k)$ for regulation $s$ versus $s'$:

\begin{definition}[Policy Comparison Function]\label{policy-transfer-function}
For two policies $s$ and $s'$, we define the \emph{policy comparison function} $g_{s,s'}(\cdot)$ so that $g_{s, s'}(k) = \ell(k, s) - \ell(k, s')$. 
\end{definition}

The policy comparison function takes as input an effectiveness $k$, and outputs the change in loss for a consumer under one policy versus another. Our first lemma argues that for certain pairs $(s,s')$, the policy comparison function is monotone in $k$. That is, consumers with more effectives have greater preference for one policy over another.

\begin{lemma}\label{lemma:transfer-function-monotonicity}
Let $s=(y,c,p)$ and $s'=(y',c',p')$ be such that $ye^{-c}\leq y'e^{-c'}$. Then $g_{s, s'}(\cdot)$ is monotone non-decreasing. Observe that the hypothesis holds if $y \leq y'$ and $c \geq c'$. 
\end{lemma}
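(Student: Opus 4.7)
My strategy is to split the domain of $k$ into the three natural pieces determined by the two ``kink'' points of the loss functions, then check that $g_{s,s'}$ is non-decreasing on each piece and continuous at the boundaries. Recall from \eqref{eq:loss} that for a fixed policy $(y,c)$, the loss $\ell(\cdot,s)$ as a function of $k$ is flat at the value $ye^{-c}$ on $k \in [0, e^c/y]$ and strictly decreasing on $[e^c/y,\infty)$, with formula $(\ln(yk)-c+1)/k$. So I would let $k_0 := e^c/y$ and $k_0' := e^{c'}/y'$ denote the two kinks, and immediately observe that the hypothesis $ye^{-c} \leq y'e^{-c'}$ is exactly the statement $k_0 \geq k_0'$; i.e.\ $\ell(\cdot,s')$ begins decreasing \emph{earlier} (in $k$) than $\ell(\cdot,s)$ does.

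Next I would analyze $g_{s,s'}(k) = \ell(k,s) - \ell(k,s')$ on the three intervals $[0,k_0']$, $[k_0',k_0]$, $[k_0,\infty)$. On the first interval both losses are constants, so $g_{s,s'}$ is constant (at the non-positive value $ye^{-c}-y'e^{-c'}$). On the middle interval $\ell(k,s)$ is still the constant $ye^{-c}$ while $\ell(k,s')$ is strictly decreasing in $k$, so $g_{s,s'}$ is strictly increasing there. On the last interval both pieces are in the ``decreasing'' regime, and a direct subtraction collapses the expression to $g_{s,s'}(k) = \bigl(\ln(y/y') + c' - c\bigr)/k$. The hypothesis $ye^{-c}\leq y'e^{-c'}$, after taking logs, is exactly $\ln(y/y') + c' - c \leq 0$, so the numerator is non-positive and therefore $g_{s,s'}(k)$ is non-decreasing in $k$ on $[k_0,\infty)$ (it rises toward $0$).

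Finally I would verify continuity at $k_0'$ and $k_0$ to stitch the three monotonicity statements into a single one. Continuity at $k_0'$ is immediate since $(\ln(y'k_0')-c'+1)/k_0' = 1/k_0' = y'e^{-c'}$, and the analogous identity holds at $k_0$; a direct computation then shows the Region~2 formula $ye^{-c} - (\ln(y'k_0)-c'+1)/k_0$ equals the Region~3 formula $(\ln(y/y')+c'-c)/k_0$. Combined with the piecewise monotonicity, this gives that $g_{s,s'}$ is non-decreasing on all of $\mathbb{R}_+$. The ``observe that the hypothesis holds if $y\leq y'$ and $c\geq c'$'' tailpiece is immediate from monotonicity of $x\mapsto xe^{-c}$ and $c\mapsto e^{-c}$.

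\textbf{Main obstacle.} There is no serious obstacle: the only subtle step is ensuring the correct pairing of the kinks (recognizing that $ye^{-c}\leq y'e^{-c'}$ forces $k_0\geq k_0'$, not the reverse), since the opposite ordering would destroy monotonicity in the middle region. Once the hypothesis is translated into this ordering, the Region~3 calculation and the continuity checks are mechanical.
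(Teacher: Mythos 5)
Your proof is correct and follows essentially the same approach as the paper: both split $\mathbb{R}_+$ into the three regions separated by the kink points $e^{c'}/y'$ and $e^c/y$, crucially observing that the hypothesis $ye^{-c}\leq y'e^{-c'}$ forces the ordering $e^{c'}/y' \leq e^c/y$. The only cosmetic difference is that the paper compares partial derivatives of $\ell$ region-by-region, whereas you compute $g_{s,s'}$ in closed form in the third region and check continuity explicitly at the kinks; these are equivalent.
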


\saveproof{lemmatransferfunctionmonotonicity}{lemma:transfer-function-monotonicity}{
There are three regions of $k$ to consider: $k \in [0,e^{c'}/y')$, $k \in [e^{c'}/y',e^c/y)$, and $k \geq e^{c}/y$. In the first region the consumer has effort $0$ for both policies. In the middle region, the consumer has effort $0$ for one policy. In the last region, the consumer has non-zero effort for both policies.

First consider $k \leq e^{c'}/y' \leq e^c/y$. Then also $y'k\leq e^{c'}$ and $yk\leq e^c$, so the consumer's effort is $0$. In this case, $\frac{\partial \ell(k,s)}{\partial k} = 0=\frac{\partial \ell(k,s')}{\partial k}$, because the loss is independent of $k$. So $g_{s,s'}$ is monotone non-decreasing in this range (in fact it is constant).

Next, there are $k$ such that $yk < e^c, y'k \geq e^{c'}$. Then $\frac{\partial \ell(k,s)}{\partial k} = 0$, and $\frac{\partial \ell(k,s')}{\partial k} = -\frac{\ln(y'k)-c'}{k^2} \leq 0$. So $g_{s,s'}$ is also monotone non-decreasing in this range (because it is equal to $0$ minus a non-increasing function). 

Finally, there are $k$ such that $yk \geq e^c$. Then $\frac{\partial \ell(k,s)}{\partial k} = -\frac{\ln(yk)-c}{k^2}$ and $\frac{\partial \ell(k,s)}{\partial k} = -\frac{\ln(y'k)-c'}{k^2}$. And we get:
\begin{align*}
    \frac{\partial \ell(k,s)}{\partial k}-\frac{\partial \ell(k,s)}{\partial k} &= \frac{\ln(y'k) - c' - \ln(yk) + c}{k^2}\\
    &=\frac{\ln(y'/y) +c-c'}{k^2} \geq 0.\\
\end{align*}
The final inequality comes because by hypothesis: $y'e^{-c'}\geq ye^{-c} \Rightarrow \ln(y')-c' \geq \ln(y) -c$. So in all regions, $g_{s,s'}$ is monotone non-decreasing.
}
\arxiv{\prooflemmatransferfunctionmonotonicity}

We use Lemma~\ref{lemma:transfer-function-monotonicity} to claim the following corollary, which essentially states that if a policy change universally lowers loss \emph{and} risk, then it is possible to adjust the price so that the profits go up and externalities go down.

\begin{corollary}\label{cor:transfer}
Let $(y,c), (y',c')$ be such that (a) $ye^{-c} \leq y'e^{-c'}$ and (b) for all $k$ in the support of $D_k$, $\ell(k,y,c) +c\geq \ell(k,y',c')+c'$ and $\risk(k,y,c) \geq \risk(k,y',c')$. Then for all $p$ and all $D_v$, there exists a $p'$ such that:
$$\rev_{D_v \times  D_k}(y',c',p') \geq \rev_{D_v \times D_k}(y,c,p),$$
$$\ext_{D_v \times D_k}(y',c',p') \leq \ext_{D_v\times D_k}(y,c,p).$$
\end{corollary}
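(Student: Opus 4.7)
The plan is to set $p' := p - c + c'$, which preserves the seller's per-sale margin $p' - c' = p - c \ge 0$. With this choice, condition (b) immediately gives $\ell(k,s') + c' \le \ell(k,s) + c$ for every $k \in \text{supp}(D_k)$, so Observation~\ref{obs:monotone} (applied with the roles of $s$ and $s'$ interchanged) yields $\rev_{D_v\times D_k}(s') \ge \rev_{D_v\times D_k}(s)$ for every $D_v$. This handles the first inequality effortlessly.

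The externality inequality is the delicate step. Write $q(k) := \Pr_{v \sim D_v}[v \ge p + \ell(k,s)]$ and $q'(k)$ analogously. Condition (b) combined with our choice of $p'$ gives $p' + \ell(k,s') \le p + \ell(k,s)$, hence $q'(k) \ge q(k)$ pointwise. Using the pointwise risk dominance $\risk(k,s') \le \risk(k,s)$ to replace $\risk(k,s')$ by $\risk(k,s)$ in the numerator of $\ext(s')$, the desired inequality reduces (after clearing denominators and symmetrizing over an independent pair $(k,k') \sim D_k \times D_k$) to
\[
\mathbb{E}_{k,k'}\bigl[\bigl(\risk(k,s) - \risk(k',s)\bigr)\bigl(q'(k)q(k') - q(k)q'(k')\bigr)\bigr] \le 0.
\]
By Observation~\ref{obs:monotone3}, $\risk(\cdot,s)$ is non-increasing in $k$, so the first factor is non-positive whenever $k \ge k'$; the inequality will follow once I show the second factor is non-negative in that regime, i.e.\ that the sale-probability ratio $q'(k)/q(k)$ is non-decreasing in $k$.

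The intuition for why $q'/q$ should be monotone is that as $k$ grows, the buying threshold $\tau(k) = p + \ell(k,s)$ slides left (Observation~\ref{obs:monotone3}) while the gap $\tau(k) - \tau'(k) = (\ell(k,s) - \ell(k,s')) + (c - c')$ widens (Lemma~\ref{lemma:transfer-function-monotonicity}): the ``new-buyer'' interval $[\tau'(k),\tau(k)]$ both grows and shifts toward lower $v$ relative to the surviving tail, so it captures a weakly larger fraction of the remaining mass. The main obstacle is that for an arbitrary $D_v$ (e.g.\ a point mass) the ratio $q'/q$ can be degenerate or jump-discontinuous, so this informal argument needs to be supplemented by a separate treatment of the types $k$ with $q(k) = 0 < q'(k)$: at such $k$ there are no ``old'' buyers contributing to $\ext(s)$, and by monotonicity of the threshold drop $\delta(k)$ these newly-induced buyers sit at the largest $k$'s in the support, carrying the smallest values of $\risk$.

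Should that direct rearrangement prove too fragile for fully arbitrary $D_v$, my fallback plan is to abandon the specific choice $p' = p - c + c'$ and instead pick, by the intermediate value theorem, the largest $p'$ for which $\rev(s') \ge \rev(s)$ still holds; raising $p'$ only shrinks the buyer set under $s'$, and in the extremal case it coincides with (or sits inside) the buyer set under $s$, at which point pointwise risk dominance $\risk(k,s') \le \risk(k,s)$ alone closes the externality inequality without any need to argue about the composition of new buyers.
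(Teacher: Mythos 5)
Your profit bound is correct and matches the paper's opening step. The externality argument, however, has a genuine gap in both the primary route and the fallback.

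Sticking with $p' = p - c + c'$ and reducing to monotonicity of $q'(k)/q(k)$ does not work, because the hypotheses of the corollary do not force that ratio to be monotone. What Lemma~\ref{lemma:transfer-function-monotonicity} (together with condition (b)) gives you is that the thresholds $\tau(k) = p + \ell(k,s)$ and $\tau'(k) = p' + \ell(k,s')$ are non-increasing in $k$ with a non-decreasing gap $\tau(k)-\tau'(k)$; that is \emph{not} enough to make $\bar F_v(\tau'(k))/\bar F_v(\tau(k))$ non-decreasing for an arbitrary $D_v$. As a concrete failure, take $D_v$ uniform on three values with $q(k_1)=1/3,\ q'(k_1)=2/3$ at a small $k_1$ and $q(k_2)=2/3,\ q'(k_2)=1$ at a larger $k_2$ (both endpoints slide left, gap widens — fully consistent with the lemma), so $q'/q$ drops from $2$ to $3/2$. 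Attaching risks $1$ and $0.1$ to $k_1,k_2$, the externality under $(y',c',p-c+c')$ is $\approx 0.46$ versus $\approx 0.40$ under $s$. So with your choice of $p'$ the conclusion is simply false; the corollary only claims that \emph{some} $p'$ works, and one really must raise the price further.

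Your fallback has the right instinct (raise $p'$) but the specific claim that the extremal buyer set under $s'$ ``coincides with or sits inside the buyer set under $s$'' is not true: the two buyer sets have different $k$-dependence and need not be nested no matter how high you set $p'$ — the new policy can always admit some high-$k$ consumers that $s$ excluded while dropping low-$k$ ones that $s$ admitted. What \emph{is} true, and what the paper uses, is the following. Raise $p'$ until the \emph{probability of sale} under $s'$ equals that under $s$ (this is where the price is chosen, not by maximizing $p'$ subject to a revenue constraint). Then anyone who buys under $s$ but not $s'$ has $g_{s,s'}(k)$ below a threshold, and anyone who buys under $s'$ but not $s$ has $g_{s,s'}(k)$ above it; since $g_{s,s'}$ is monotone by Lemma~\ref{lemma:transfer-function-monotonicity}, the swap is one-directional in $k$. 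Because the sale probabilities are equal, this makes the $k$-distribution of buyers under $s'$ stochastically dominate that under $s$, and since $\risk(\cdot,s')$ is non-increasing in $k$, this drives $\ext_D(s')$ down; combined with the pointwise bound $\risk(k,s') \le \risk(k,s)$, the externality inequality follows. That composition-of-buyers argument is exactly what you tried to sidestep, and it cannot be sidestepped.
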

\saveproof{cortransfer}{cor:transfer}{
First, consider setting $p':= p-c+c'$. Then the profits generated per sale are equal under $s = (y,c,p)$ and $s' = (y',c',p')$. Observe also that the probability of sale is \emph{at least as large} under $s'$ as $s$, as we have $u(t,s) = v - p +c - c- \ell(k,s)\leq v - p'+c' - c' -\ell(k,s') = u(t,s')$ for all $t$. Therefore we have that $\rev_{D_v \times D_k}(y',c',p-c+c') \geq \rev_{D_v \times D_k}(y,c,p)$. But unfortunately we can't yet say anything about the externalities. Indeed, the problem might be that there are many additional consumers with poor effectiveness who previously did not purchase the item under $s$ but who now purchase it under $s'$ (recall Example~\ref{ex:ext-non-monotonicity}). So the plan from here is to raise the price until the probability of sale is back to its original level (clearly the profits must still be larger, as now the probabilities of sale match, but the profit-per-sale of our new scheme is better). We'll use Lemma~\ref{lemma:transfer-function-monotonicity} to claim that the set of consumers who remain are only \emph{more} secure than what we started with.

So formally, raise the price $p'$ until the probability of sale for $s' = (y',c',p')$ is the same as $s$.\footnote{Note that we are assuming that for any desired probability $q$, we can set a price that sells with probability exactly $q$. When either $D_v$ or $D_k$ has no point masses, this is clearly true. When both have point masses, observe that if we set a price so that a positive mass of consumers are indifferent between purchasing the item and not, we will assume that we can have some buyers purchase the item and some not (as they are indifferent, either is a best response).} Now we have two schemes: $s = (y,c,p)$ and $s' = (y',c',p')$. Both sell the item with the same probability, $q$. If both schemes sold to \emph{exactly} the same $q$ fraction of consumers, then the lemma hypothesis that $\risk(k,s') \leq \risk(k,s)$ for all $k$ would suffice to let us claim that $\ext_{D_v \times D_k}(s') \leq \ext_{D_v \times D_k}(s)$. However, it could be a completely different $q$ fraction of consumers. Still, it turns out that because $y'e^{-c'} \geq ye^{-c}$, the fraction of consumers that purchase only have larger $k$. 

Indeed, observe that if some consumer $t=(v,k)$ purchases under $s$ but not $s'$, and some other consumer $t'=(v',k')$ purchases under $s'$ but not $s$, then we have:
$$g_{s,s'}(k)< 0 < g_{s,s'}(k')$$
But by Lemma~\ref{lemma:transfer-function-monotonicity}, we know that $g_{s,s'}(\cdot)$ is monotone increasing, so $k' > k$. In particular, this means that \emph{every} consumer in the mass which purchased under $s$ but not $s'$ has lower $k$ than \emph{any} consumer which purchased under $s'$ but not $s$. As $\risk(\cdot,s)$ is clearly monotone decreasing in $k$, and the fraction of buyers purchasing under $s$ and $s'$ is the same, we conclude that we must have $\ext_{D_v\times D_k}(s') \leq \ext_{D_v \times D_k}(s)$ as desired.
}
\arxiv{\proofcortransfer}

Now we are ready to \arxiv{prove}\www{formally state} the extension lemma for large $k$. 

\begin{lemma}[Extension of Fine Policy]\label{lemma:fineextension}
Let $D_k$ be supported on $[T,\infty)$, where $T$ is such that a fine policy is optimal for $D_v \times \{T\}$ subject to profits $R$. Then there is a fine policy $s'$ with $\rev_D(s') \geq R$ such that for all profits-maximizing $s$ with $\rev_D(s) \geq R$, $\ext_D(s') \leq \ext_D(s)$. 
\end{lemma}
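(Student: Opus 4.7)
Take $s'$ to be the externality-minimizing fine policy on $D_v \times D_k$ subject to $\rev_D \geq R$. This $s'$ is feasible: the optimal fine policy $s^*_T$ for $D_v\times\{T\}$ guaranteed by Theorem~\ref{thm:opt-deterministic} satisfies $\rev_{D_v \times D_k}(s^*_T) \geq \rev_{D_v \times \{T\}}(s^*_T) \geq R$ by Observation~\ref{obs:monotone2}, since $D_k$ stochastically dominates the point mass at $T$. To establish the lemma, it then suffices to show that for every profits-maximizing $s = (y,c,p)$ with $\rev_D(s) \geq R$ there is some fine policy $\hat{s}$ with $\rev_D(\hat{s}) \geq R$ and $\ext_D(\hat{s}) \leq \ext_D(s)$; by the definition of $s'$ as the externality-minimizing fine policy, this yields $\ext_D(s') \leq \ext_D(\hat{s}) \leq \ext_D(s)$.

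The construction of $\hat{s}$ from $s$ splits into two cases depending on whether the minimum-effectiveness consumer puts in positive effort under $s$. In the \emph{easy case} $yT \geq e^c$ (so $h^*(k,s) > 0$ for every $k \geq T$), I set $\hat{y} := y$ and invoke Corollary~\ref{cor:transfer} to produce $\hat{p}$. The hypotheses are immediate: $y e^{-c} \leq y$ since $c \geq 0$; for every $k \geq T$ we have $yk \geq e^c \geq 1$, so $\risk(k,y,c) = 1/(yk) = \risk(k,y,0)$; and $\ell(k,y,c) + c - \ell(k,y,0) = c(1 - 1/k) \geq 0$ since $T \geq 1$. In the \emph{hard case} $yT < e^c$, I mirror the construction in Lemma~\ref{lem:hardestcase}: choose $\hat{y}$ so that $\ell(T,\hat{y},0) = c^*$, where $c^*$ is the largest cost sustainable under the profit constraint as defined in the warm-up. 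Lemma~\ref{lemma:transfer-function-monotonicity} applied with $y e^{-c} \leq \hat{y}$ makes the policy comparison function $g_{s,\hat{s}}$ monotone non-decreasing, and the loss inequality $\ell(k,y,c) + c \geq \ell(k,\hat{y},0)$ for all $k \geq T$ follows by combining monotonicity with a tight comparison at $k = T$ (where feasibility of $s$ combined with the definition of $c^*$ forces $\ell(T,y,c) + c \leq c^*$).

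The main obstacle is the risk inequality $\risk(k,y,c) \geq \risk(k,\hat{y},0)$ for all $k \geq T$ in the hard case. I plan to establish it by a contradiction argument analogous to Lemma~\ref{lem:hardestcase}: assuming $\risk(k,\hat{y},0) > \risk(k,y,c)$ for some $k \geq T$ yields, after algebraic manipulation paralleling the chain of implications in the proof of Lemma~\ref{lem:hardestcase} and using $T \geq 1 + 1/c^*$, a contradiction with feasibility of $s$. This is where the profits-maximizing hypothesis on $s$ enters critically: it ensures that the loss seen by the population is tied to $c^*$ in the same way as in the single-type analysis, so the feasibility of $s$ genuinely pins down the loss structure. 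Without this hypothesis, as Proposition~\ref{prop:newexample} makes explicit, one can pick $p$ sub-optimally to create a non-simple policy whose externality is strictly below that of any fine policy, so the conversion strategy fundamentally breaks.
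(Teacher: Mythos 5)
Your overall structure --- take $s'$ to be the externality-minimizing feasible fine policy on $D_v \times D_k$, observe it is feasible via Observation~\ref{obs:monotone2}, then reduce to showing that every profits-maximizing $s$ is weakly dominated by \emph{some} feasible fine policy $\hat{s}$ --- is sound, and your easy case $yT \geq e^c$ is correct as written (and, you might note, does not even use the profits-maximizing hypothesis). The hard case, however, has a genuine gap, and it is exactly where the paper's proof does the nontrivial work.

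In the hard case you set $\hat{y}$ so that $\ell(T,\hat{y},0) = c^*$ and then claim the loss inequality $\ell(k,y,c) + c \geq \ell(k,\hat{y},0)$ for all $k \geq T$ ``follows by combining monotonicity with a tight comparison at $k = T$ (where feasibility of $s$ combined with the definition of $c^*$ forces $\ell(T,y,c) + c \leq c^*$).'' But this inequality is in the \emph{wrong direction}: Corollary~\ref{cor:transfer} requires $\ell(T,y,c)+c \geq \ell(T,\hat{y},0) = c^*$, whereas feasibility only gives you $\ell(T,y,c)+c \leq c^*$. When $ye^{-c}+c < c^*$ strictly --- a situation the profits-maximizing hypothesis does \emph{not} rule out (take, e.g., $s = (0,0,p)$ with $p$ the Myerson price for $D_v$: this is profits-maximizing, has $\rev_D(s)\ge R$, and has $\ell(T,s)+c = 0$) --- the policy comparison function gives $g_{s,\hat{s}}(T) < -c$, so the hypothesis of Corollary~\ref{cor:transfer} fails at $k=T$ and your construction breaks. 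You then gesture that ``the profits-maximizing hypothesis ensures the loss seen by the population is tied to $c^*$,'' but this is not justified and is in fact false as a statement about arbitrary profits-maximizing $s$.

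The paper splits instead on whether $\ell(T,s)+c \geq \ell(T,s^*)$ or $< \ell(T,s^*)$ (here $s^*$ is the optimal fine policy for $D_v\times\{T\}$, with $\ell(T,s^*)=c^*$). In the first regime it chooses $\hat{y}$ so that $\ell(T,\hat{y},0)=\ell(T,s)+c$ --- \emph{not} $c^*$ --- so the comparison at $T$ is an equality, the monotonicity argument goes through, and the risk comparison at $T$ is extracted by a fairly delicate argument comparing $(y',0)$ to the optimal policy for $D_v\times\{T\}$ under profit constraint $R(y,c)\le R$ via Lemma~\ref{lem:hardestcase} and Proposition~\ref{prop:thresholds}. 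In the second regime $\ell(T,s)+c < c^*$ the paper does not try to beat $s$ with a fine policy at all; it instead argues (using Lemma~\ref{lemma:inv-transform} to show optimal policies have $\rev_D=R$, plus a stochastic-dominance argument showing $(y,c)$ supports a price with profit $>R$) that the externality-minimizing $s$ in this regime cannot be profits-maximizing, so the hypothesis excludes it. That second-regime argument is exactly the content you are missing; without it, the claim ``for every profits-maximizing $s$ there is a dominating fine policy'' is not established when $\ell(T,s)+c < c^*$.

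Two smaller notes: your ``contradiction analogous to Lemma~\ref{lem:hardestcase}'' for the risk inequality does in fact go through once you are in the regime $\ell(T,s)+c = c^*$ (or more generally once the loss inequality holds), so that part of your sketch is salvageable; and your case split $yT \ge e^c$ versus $yT < e^c$ is orthogonal to the paper's split on $\ell(T,s)+c$ versus $c^*$ --- your easy case overlaps with part of what the paper treats as case 2, and your hard case covers both halves of the paper's dichotomy, which is why the argument cannot be uniform within it.
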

\saveproof{lemmafineextension}{lemma:fineextension}{
First, observe that we necessarily have $T > 1$ if the hypothesis is to hold, by Theorem~\ref{thm:opt-homogeneous}. 

Consider any proposed optimal policy $s = (y,c,p)$. Let $s^*=(y^*,0,p^*)$ denote the optimal fine policy for $D_v \times \{T\}$ subject to profits $R$. Then maybe $\ell(T,s)+c \geq  \ell(T,s^*)$. If so, let $s' = (y',0,p-c)$ be such that $\ell(T,s')=\ell(T,s)+c$. As $T > 1$, observe that decreasing $c$ \emph{decreases} $\ell(T,s)+c$. Therefore, decreasing $c$ to $0$ results in $y' \geq y$ for the equality to hold. As such, $s$ and $s'$ satisfy the hypotheses of Lemma~\ref{lemma:transfer-function-monotonicity}, and we can conclude that $\ell(k,s') \leq \ell(k,s)+c$ for all $k \geq T$. We now just need to show that $\risk(T,s') \leq \risk(T,s)$. This is surprisingly tricky, and carried out in the subsequent paragraph.

Indeed, observe that $(y,c)$ is \emph{some} partial policy, and setting the price $\hat{p}$ which maximizes profits yields $R(y,c):=\rev_{D_v \times \{T\}}(y,c,\hat{p})$ on population $D_v \times \{T\}$. Observe that as $\ell(T,s)+c\geq \ell(T,s^*)$, we have $R(y,c) \leq R$ (otherwise $s^*$ would not be optimal for $D_v \times \{T\}$ subject to constraint $R$, as we could increase $y^*$). We can now ask what is the optimal policy for $D_v \times \{T\}$ subject to constraint $R(y,c)$? By Lemma~\ref{lem:hardestcase}, we claim it must be a fine policy, and that this fine policy is exactly $(y',0,\hat{p})$. 
To see this, we use Proposition~\ref{prop:thresholds}, which asserts that $T(D_v,R(y,c)) \leq T(D_v,R) = T$. As $T \geq T(D_v,R(y,c))$ now, we conclude that a fine policy must be optimal for $D_v \times \{T\}$ subject to profit constraint $R(y,c)$. This policy $s'$ must have $\ell(T,s')=\ell(T,s)+c$ (if it is bigger, then the profit will be $< R(y,c)$. If it is smaller, then the loss should be increased to get more profit). $(y',0,\hat{p})$ is exactly this policy. As it is optimal for $D_v\times \{T\}$. \emph{We may now conclude that $\risk(T,s') \leq \risk(T,s)$.} In particular, this necessarily implies that the risk is $\frac{1}{y'T}$ (because $c = 0$, the only other alternative would be to have risk $=1$, which is clearly not $\leq e^{-c}$ for $c > 0$). To conclude that $\risk(k,s') \leq \risk(k,s)$ for all $k \geq T$, simply observe that we must now have $\risk(k,s') = \frac{1}{y'k} \leq \frac{1}{y'T} \leq e^{-c}$, and also $\frac{1}{y'k} \leq \frac{1}{yk}$ as $y' \geq y$. So whether or not a consumer with effectiveness $k$ has $h^*(k,s) > 0$, the risk $\frac{1}{y'k}$ is better. Now we can apply Lemma~\ref{lemma:transfer-function-monotonicity}: we have come up with a new policy where everyone's risk is (weakly) lower, and everyone's loss is (weakly) lower, so we can increase the price until the probability of sale is the same, and this will (weakly) increase the profit and (weakly) decrease the risk.

So now we've covered the case that $\ell(T,s)+c\geq\ell(T,s^*)$. We just now need to consider the case that $\ell(T,s) +c< \ell(T,s^*)$. This is the only case where we'll assume that the $s$ we started with was profits-maximizing. Observe that if $\ell(T,s)+c < \ell(T,s^*)$, then there exists a price $p'$ such that $(y,c,p')$ generates profits strictly exceeding $R$. Indeed, if $\ell(T,s) +c< \ell(T,s^*)$, then $\ell(k,s) +c< \ell(T,s^*)$ for all $k\geq T$, so the distribution of $v - \ell(k,s) - c$ \emph{strongly} stochastically dominates the distribution of $v - \ell(T,s^*)$ in the following sense: for any probability $q$, the value $v_q$ such that $\Pr[v-\ell(k,s)-c\geq v_q]=q$ exceeds $w_q$ such that $\Pr[v-\ell(T,s^*) \geq w_q] = q$. As there exists a price $p$ such that $p \cdot \Pr[v - \ell(T,s^*)\geq p] \geq R$, that same probabilility of sale with a strictly increased price $p'$ guarantees that $\Pr[v-\ell(k,s)-c \geq p']\cdot p'> R$. 

Finally, Lemma~\ref{lemma:inv-transform} (stated below) implies that every optimal policy $s$ for $D_v \times D_k$ subject to profits constraint $R$ has $\rev_{D_v\times D_k}(s) = R$. This is because for all policy $s$ where the profit $> R$, there is a $\varepsilon > 0$ such that we can construct a policy $s'$ with profit $\geq R$ but strictly less externalities. This should perhaps not be surprising, as intuitively one should be able to decrease externalities at the cost of a little $\varepsilon$ (although one should be careful to do this properly this for arbitrary $\varepsilon$). Therefore, the optimal policy we started with achieved profits $R$, while the previous paragraph observes that there necessarily exists a price which achieves profits $>R$. So our original policy must not be profit-maximizing.}
\arxiv{\prooflemmafineextension}
\www{
\begin{proof}[Proof Sketch]
Consider any proposed optimal policy $s = (y, c, p)$. We first consider the case where $\ell(T, s) + c \geq \ell(T, s^*)$ where $s^*$ is the optimal fine policy on $D_v \times \{T\}$. If that is the case, then consider a fine policy $s' = (y', 0, p - c)$ where $\ell(T, s') = \ell(T, s) + c$. Observe we must have $y' \geq y$ since we can only obtain equality by increasing fines, then by \Cref{lemma:transfer-function-monotonicity}, we have $\ell(k, s') \leq \ell(k, s) + c$ for all $k$ in the support of $D_k$ and the profit under $s'$ can only be higher. 

Now, we need to show that the risk is only lower for all $k \geq T$. First, as $\ell(T, s') \geq \ell(T, s^*)$ we conclude that $y' \geq y^*$. Next, we can argue (see~\cite{venturyne2019externality} for the full proof) that we must have $\frac{1}{y^*T} \leq e^{-c}$. This allows us to conclude that $\frac{1}{y'T} \leq e^{-c}$, and we already have that $y' \geq y$ so $\frac{1}{y'k} \leq \frac{1}{yk}$. This gives us that $\risk(k,s') \leq \risk(k,s)$ for all $k \geq T$. Then Corollary~\ref{cor:transfer} let's us claim that there exists a price $p'$ for which both the profits and the externalities are better for $(y',0,p')$ than $s$. 

For the case where $\ell(T, s) + c < \ell(T, s^*)$, observe that $\ell(D_k, s) + c$ is strictly stochastic dominated by $\ell(D_k, s^*)$; therefore, there is a price $\hat p$ where the profit strictly higher than $R$. \Cref{lemma:inv-transform} bellow, allows us conclude that if $s$ is optimal, then $\rev_{D_v \times D_k}(s) = R$; otherwise, we can compromise $\varepsilon > 0$ fraction of the profit to strictly improve externalities. This implies that if the policy $s$ we start with is optimal, then it is not profits-maximizing.
\end{proof}
}

\begin{definition}[Invariant Transformation]
Given a policy $s = (y, c, p)$, define
$$\INV(s,\alpha) := \big(ye^{(p-c)(1-\alpha)}, \alpha c + (1-\alpha) p,p\big)$$
where $\alpha \in [0, \frac{p}{p-c}]$.
\end{definition}

\begin{lemma}[Invariant Property]\label{lemma:inv-transform}
Let $s' = \INV(s, \alpha)$, then for all $k \in \mathbb R^+$
\begin{itemize}
    \item $\hygiene(k, s') = \hygiene(k, s)$.
    \item $\ell( k, s') = \ell(k, s)$.
    \item $u(t,s) = u(t, s')$. 
\end{itemize}
In addition,
\[\rev_D(s') = \alpha \rev_D(s)\]
\[\ext_D(s') = e^{-(1-\alpha)(p - c)}\ext_D(s, p)\]
\end{lemma}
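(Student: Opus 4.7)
The approach is direct computation from the definitions: the transformation $\INV$ is engineered so that the quantity $\ln(yk) - c$, which drives nearly every per-consumer quantity in the model, is preserved exactly. Writing $s' = (y', c', p)$ with $y' = y e^{(p-c)(1-\alpha)}$ and $c' = \alpha c + (1-\alpha)p$, the first step is the single key identity
\[
\ln(y'k) - c' = \ln(yk) + (p-c)(1-\alpha) - \alpha c - (1-\alpha)p = \ln(yk) - c,
\]
which I would verify by collecting the $p$ and $c$ terms. This identity simultaneously shows (i) the threshold $y k \geq e^c$ holds iff $y' k \geq e^{c'}$, so the piecewise branches of $h^*$, $\ell$, and $\risk$ match between $s$ and $s'$ for every $k$, and (ii) in the ``$y k \geq e^c$'' branch the numerators $\ln(yk) - c$ and $\ln(yk) - c + 1$ are literally unchanged. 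This immediately yields $h^*(k,s') = h^*(k,s)$ and the equality $\ell(k,s') = \ell(k,s)$ in that branch.

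For the low-effort branch ($yk < e^c$), I would directly compute $y' e^{-c'} = y e^{(p-c)(1-\alpha)} \cdot e^{-\alpha c - (1-\alpha)p} = y e^{-c}$, which gives $\ell(k,s') = \ell(k,s)$ in that branch as well. Since the price $p$ is preserved and $\ell$ is preserved pointwise, $u(t,s') = v - p - \ell(k,s') = u(t,s)$ for all $t$, which is the third per-buyer invariant.

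The profit identity is then immediate: because $u(\cdot,s) \equiv u(\cdot,s')$, the sale probabilities coincide, and
\[
p - c' = p - \alpha c - (1-\alpha)p = \alpha (p - c),
\]
so $\rev_D(s') = (p - c')\Pr[u(t,s') \geq 0] = \alpha \rev_D(s)$. For the externality identity, I would show that risk scales by the uniform factor $e^{-(1-\alpha)(p-c)}$ regardless of the branch: in the high-effort branch $\risk(k,s') = 1/(y'k) = e^{-(1-\alpha)(p-c)}/(yk)$, while in the low-effort branch $\risk(k,s') = e^{-c'} = e^{-c} \cdot e^{-(1-\alpha)(p-c)}$. Since this multiplicative factor is independent of $t$ and the indicator $\mathbb{I}(u(t,s) \geq 0)$ is unchanged, the factor pulls out of both the numerator and the inner expectation of $\ext_D$, yielding $\ext_D(s') = e^{-(1-\alpha)(p-c)} \ext_D(s)$.

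There is essentially no hard step here; the only subtlety worth stating explicitly is that the threshold defining the two branches of $\risk$ and $\ell$ transforms correctly under $\INV$, which follows from the same identity $\ln(y'k) - c' = \ln(yk) - c$. The constraint $\alpha \in [0, p/(p-c)]$ ensures $c' \geq 0$ so that $s'$ is a valid policy, which I would note at the start.
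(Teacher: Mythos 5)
Your proposal is correct and follows essentially the same route as the paper's proof: direct computation verifying that the transformation preserves $h^*$, $\ell$, and $u$ pointwise (the paper does this via the $(\cdot)^+$ form of $h^*$ rather than splitting branches explicitly, but the content is identical), and that risk picks up the uniform factor $e^{-(1-\alpha)(p-c)}$. Your organizing observation that $\ln(y'k) - c' = \ln(yk) - c$ and $y'e^{-c'} = ye^{-c}$ is a clean way to package the same calculation.
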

\saveproof{lemmainvtransform}{lemma:inv-transform}{
First note that for all types $t$, their optimal effort under policy $s'$ is the same as under policy $s$:
\begin{align*}
h^*(k, s') &= \bigg (\frac{\ln(yk) + (1-\alpha)(p-c) - \alpha c - (1-\alpha)p}{k}\bigg)^+\\
&=\bigg(\frac{\ln(yk) -c}{k}\bigg)^+= h^*(k, s)
\end{align*}

Also:
\begin{align*}
\risk(k, s') &= e^{- \alpha c - (1-\alpha)p - kh^*(k, s')}\\
& = e^{-(1-\alpha)(p-c)}\cdot e^{-kh^*(k,s')-c}=e^{- (1-\alpha)(p-c)}  \risk(k, s)
\end{align*}

And: 
\begin{align*}
\ell(k, s') &= ye^{(1-\alpha)(p-c)}\risk(k, s') + h^*(k, s') \\
&= y\risk(k,s)+h^*(k,s)= \ell(k, s)
\end{align*}

Which implies that:
\begin{align*}
    u(t,s') = v - \loss(t,s') - p = v - \loss(t,s) -p= u(t,s).
\end{align*}
We can conclude that a type $t$ purchases in policy $s$ iff type $t$ purchases in policy $s'$. Therefore:
\begin{align*}
\rev_D(s') &= (p - \alpha c - (1-\alpha) p) Pr_{t \leftarrow D}[u(t,s') \geq 0]\\
&= \alpha(p - c) Pr_{t \leftarrow D}[u(t,s)\geq 0]\\
&= \alpha \rev_D(s)
\end{align*}
and finally:
\begin{align*}
    \ext_D(s' ) &= E_{t \leftarrow D}[\risk(t, s') | u(t,s') \geq 0]\\
    &= e^{-(1-\alpha)(p - c)}E_{t \leftarrow D}[\risk(t, s) | u(t,s) \geq 0]\\
    &= e^{-(1-\alpha)(p - c)}\ext_D(s)
\end{align*}
which concludes the proof.
}
\arxiv{\prooflemmainvtransform}
\www{
\begin{proof}[Proof Sketch]
By applying the definition of buyer's efficiency, we can see that for all $k$, $h^*(k, s') = h^*(k, s)$ which implies $\ell(k, s') = \ell(k, s)$, and $\risk(k, s') = e^{-(1-\alpha)(p-c)}\risk(k, s)$.
\end{proof}
}
This concludes the proof of bullet two of Theorem~\ref{thm:opt-homogeneous}.

\subsection{Example: The Profits-Maximizing Qualification is Necessary}\label{sec:newexample}
In this section we provide the example promised in \Cref{prop:newexample}. \www{Refer to \Cref{sec:homogeneous-distribution} of \cite{venturyne2019externality} for omitted proofs.} Consider the following distribution, and profits constraint $R := 0.5$:
\begin{align*}
D_v = \begin{cases}
v_1 = 1 & \text{w. p. $\frac{1}{2}$}\\
v_2 = 16/15 & \text{w. p. $\frac{1}{2}$}
\end{cases}
\end{align*}
\begin{align*}
D_k = \begin{cases}
k_1 = 3 & \text{w. p. $\frac{1}{2}$}\\
k_2 = x \rightarrow \infty & \text{w. p. $\frac{1}{2}$}
\end{cases}
\end{align*}

Above, $x$ will be finite, but approaching $\infty$, and $\varepsilon$ will be finite but approaching $0$). The proposition will follow from the following sequence of claims. First, we will establish bullet one for $T:=3$.

\begin{claim}\label{claim:newexample1} A fine policy is optimal for $D_v \times \{3\}$. 
\end{claim}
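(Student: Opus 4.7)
The plan is to invoke Theorem~\ref{thm:opt-deterministic} directly: we need only verify that $k = 3$ lies on the ``fine side'' of the threshold $T = 1 + 1/c^*$ for this instance, where $c^*$ is the largest security level for which the cost-only problem on $D_v \times \{0\}$ is feasible at profit $R = 0.5$. Concretely, I will first compute $c^*$ by a short case analysis over pricing choices, then conclude via Theorem~\ref{thm:opt-deterministic} that a fine policy is optimal at $k = T$.

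To compute $c^*$, I observe that for a cost policy $(0,c,p)$ on $D_v \times \{0\}$ every buyer has $\ell = 0$, so the purchase condition is $v \geq p$ and the profit is $(p-c) \cdot \Pr_{v \leftarrow D_v}[v \geq p]$. Since $D_v$ is supported on $\{1, 16/15\}$ with equal mass, only three regimes of $p$ are relevant. For $p \leq 1$ both types buy, so profit equals $p - c$; this is maximized in $c$ at $p = 1$, giving feasibility iff $c \leq 1/2$. For $1 < p \leq 16/15$ only the high-value type buys, so profit equals $(p-c)/2$; this is maximized in $c$ at $p = 16/15$, giving feasibility iff $c \leq 16/15 - 1 = 1/15$. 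For $p > 16/15$ no one buys and profit is $0$. Taking the maximum across these cases yields $c^* = 1/2$.

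Consequently $T = 1 + 1/c^* = 3$, so the point mass $k = 3$ sits exactly at the threshold. Theorem~\ref{thm:opt-deterministic} applied to $k = T$ tells us that both Lemma~\ref{lem:hardercase} and Lemma~\ref{lem:hardestcase} apply (their hypotheses $k \in [1, 1 + 1/c^*]$ and $k \geq 1 + 1/c^*$ both hold with equality), so at $k = T$ both a cost policy and a fine policy attain the optimum. In particular, a fine policy is optimal for $D_v \times \{3\}$, proving \Cref{claim:newexample1}.

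There is no real obstacle here; the only mild subtlety is making sure the profit-maximizing price defining $c^*$ is handled across both support values of $D_v$, and noting that the case $k = T$ is covered by \emph{both} halves of the cutoff (so the existence of an optimal fine policy, rather than only a cost policy, is immediate).
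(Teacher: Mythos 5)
Your argument matches the paper's own proof: both compute $c^* = 1/2$ as the largest $c$ for which a cost policy on $D_v \times \{0\}$ can reach profit $R = 0.5$, and then conclude via Theorem~\ref{thm:opt-deterministic} that $T = 1 + 1/c^* = 3$, so a fine policy is optimal at $k = 3$. Your case split over pricing regimes is a slightly more explicit presentation of the same computation, and your observation that $k = T$ falls in both halves of the cutoff is correct.
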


\saveproof{claimnewexampleone}{claim:newexample1}{
To establish the claim, we will directly find the $c^*$ which is maximal among those such that $\rev(0,c,p)\geq 0.5$, and observe that $T=1+1/c^*$. 

Indeed, if $c = 0.5$, then $D_v - c$ takes value $\geq 1/2$ with probability $1$, so profits $0.5$ is indeed achievable. However, for any $c > 0.5$, $D_v-c$ takes value $w_1<1/2$ with probability $1/2$, and $w_2<1$ with probability $1/2$. So setting either price $w_1$ or $w_2$ yields profits $<1/2$. So $c^*=1/2$ for this example, and $3 = 1+1/c^*$ as desired.
}
\arxiv{\proofclaimnewexampleone}

Bullet two now immediately follows, as $D_k$ is indeed supported on $[3,\infty)$. We now just need to find the optimal fine policy for $D_v \times D_k$, and establish a better policy that is not simple. We now search for the optimal fine policy. Such a policy might sell only to $(16/15, x)$, but then the profits is at most $4/5$, which is too little. Such a policy might sell only to $(16/15,x)$ and $(1,x)$. But since $x$ is finite, such a policy certainly charges price $<1$ (unless $y = 0$, in which case the policy sells to all four types), and sells with probability $\leq 1/2$, so the profits are also too small. Such a policy might sell to all four types, which we analyze below. Or it might sell to all types except $(1,3)$, which we analyze after.

\begin{claim}\label{claim:newexample2} The optimal fine policy $s$ which sells to all four types has $\ext_{D_v \times D_k}(s) \geq \frac{1}{2\sqrt{e}}$.
\end{claim}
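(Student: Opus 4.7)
Since the policy is a fine policy, write $s=(y,0,p)$. The plan is to bound $y$ from above using the profit constraint and the requirement that even the least favorable type $(1,3)$ purchases, and then to bound the externalities from below using that bound on $y$.

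First, because $s$ sells to all four types, each contributes to the probability of sale with weight $1/4$, so $\rev_{D_v\times D_k}(s)=p$. The constraint $R=1/2$ forces $p\geq 1/2$. For $(1,3)$ to purchase we need $1-p-\ell(3,s)\geq 0$, hence $\ell(3,s)\leq 1/2$. I would then split on whether $3y\geq 1$ or $3y<1$. If $3y<1$ then $h^*(3,s)=0$ and $\risk(3,s)=e^0=1$, so the two $k=3$ types alone force $\ext_{D_v\times D_k}(s)\geq 1/2\geq 1/(2\sqrt e)$, which is more than enough. So the interesting case is $3y\geq 1$, where $\ell(3,s)=(\ln(3y)+1)/3$; the bound $\ell(3,s)\leq 1/2$ then rearranges to $y\leq \sqrt{e}/3$.

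Next, I would compute the externalities directly from \eqref{eq:ext}. In the regime $3y\geq 1$ both $k=3$ types have $\risk=1/(3y)$, and both $k=x$ types have $\risk=1/(xy)$. Since all four types purchase with equal weight,
\[
\ext_{D_v\times D_k}(s)=\frac{1}{2}\cdot\frac{1}{3y}+\frac{1}{2}\cdot\frac{1}{xy}\geq \frac{1}{6y}\geq \frac{1}{6\cdot \sqrt{e}/3}=\frac{1}{2\sqrt{e}},
\]
using $y\leq \sqrt{e}/3$ in the last step. This matches the claimed lower bound; taking $x\to\infty$ shows the bound is essentially tight.

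The only subtlety I foresee is keeping the two pieces of the argument consistent, namely that the profit-maximizing choice of $p$ is indeed $p=1-\ell(3,s)$ (set by the $(1,3)$ buyer), and that we are entitled to take $y$ as large as possible to minimize externalities. Both are easy: making $y$ smaller only increases both $\risk(3,s)$ and $\risk(x,s)$, so the infimum of externalities among feasible fine policies selling to all four types is attained (in the limit $x\to\infty$) at $y=\sqrt{e}/3$, giving exactly $1/(2\sqrt{e})$.
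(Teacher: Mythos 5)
Your argument is correct and follows the paper's proof essentially step for step: the constraint that $(1,3)$ purchases at price $p\geq 1/2$ forces $\ell(3,s)\leq 1/2$, hence $y\leq\sqrt{e}/3$, and the $k=3$ half of the population alone then pins the externalities above $\frac{1}{2}\cdot\frac{1}{\sqrt{e}}$. The paper is terser and implicitly assumes $3y\geq 1$; your explicit handling of the $3y<1$ branch (where risk is simply $1$) is a harmless and welcome bit of extra care.
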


\saveproof{claimnewexampletwo}{claim:newexample2}{
Such a policy necessarily has $\loss(3,y) \leq 1/2$, which means that we must have $\frac{\ln(3y)+1}{3} \leq 1/2$, or $y \leq \sqrt{e}/3$. Such a policy has externalities at least $\frac{1}{2} \cdot \frac{3}{3\cdot \sqrt{e}} = \frac{1}{2\sqrt{e}}$. 
}
\arxiv{\proofclaimnewexampletwo}

\begin{claim}\label{claim:newexample3} The optimal fine policy $s$ which sells to all types except $(1,3)$ has $\ext_{D_v \times D_k}(s) \geq e^{-1/5}/3$.
\end{claim}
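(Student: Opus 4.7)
The plan is to explicitly parametrize any fine policy $s=(y,0,p)$ selling to the three types $(16/15, 3), (1, x), (16/15, x)$ but not to $(1,3)$, extract the tightest constraints on $y$ from the purchasing conditions plus the profit requirement $R \geq 1/2$, and then plug into the externality formula.

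First I would note that under a fine policy, $\loss(k,s) = (\ln(yk)+1)/k$ whenever $yk \geq 1$, which tends to $0$ as $k \to \infty$. Hence for sufficiently large $x$, both $(1,x)$ and $(16/15,x)$ comfortably wish to purchase at essentially any $p \leq 1$, so the binding constraints involve only the two $k=3$ types. The requirement that $(16/15,3)$ buys but $(1,3)$ does not gives
\[
1 - \loss(3,s) \;<\; p \;\leq\; \tfrac{16}{15} - \loss(3,s).
\]
Second, since exactly three of the four equally-likely type pairs buy, we have $\rev_{D_v \times D_k}(s) = 3p/4$, so the constraint $\rev \geq 1/2$ forces $p \geq 2/3$.

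Assuming the nontrivial regime $3y \geq 1$ (the edge case $3y < 1$ would put the $(1,3)$ type below the ``no-effort'' threshold and is easy to rule out since it would violate the non-purchase condition under any price satisfying the profit constraint), we have $\loss(3,s) = (\ln(3y)+1)/3$. Combining $p \geq 2/3$ with $p \leq 16/15 - (\ln(3y)+1)/3$ yields
\[
\frac{\ln(3y)+1}{3} \;\leq\; \frac{16}{15} - \frac{2}{3} \;=\; \frac{2}{5},
\]
so $\ln(3y) \leq 1/5$ and hence $y \leq e^{1/5}/3$.

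Finally I would compute the externality by conditioning on sale. The risks of the three purchasing types are $1/(3y), 1/(xy), 1/(xy)$, each with mass $1/4$, while $\Pr[\text{sale}] = 3/4$, giving
\[
\ext_{D_v \times D_k}(s) \;=\; \frac{1}{9y} + \frac{2}{3xy} \;\geq\; \frac{1}{9y} \;\geq\; \frac{1}{9 \cdot e^{1/5}/3} \;=\; \frac{e^{-1/5}}{3},
\]
as required. The only real (and minor) obstacle is the boundary-case bookkeeping for the loss formula: verifying $y \cdot 3 \geq 1$ for the buying $k=3$ consumer and $y \cdot x \geq 1$ for the $k=x$ consumers, plus carefully using the tie-breaking convention from Section~\ref{sec:model} to translate the strict purchasing/non-purchasing conditions into inequalities; everything else is a direct algebraic chain.
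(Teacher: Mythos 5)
Your argument is essentially the paper's: from the fact that exactly three of the four types purchase, $\rev_{D_v\times D_k}(s) = 3p/4 \geq 1/2$ forces $p \geq 2/3$; the purchase condition for $(16/15,3)$ then gives $\ell(3,s) \leq 2/5$, hence $y \leq e^{1/5}/3$, and the conditional externality is at least $\tfrac{1}{3}\cdot\tfrac{1}{3y} \geq e^{-1/5}/3$. One detail in your parenthetical is off, though: the regime $3y < 1$ is \emph{not} excluded by the purchasing constraints --- for instance $y = 0.3$, $p = 0.75$ has $(1,3)$ refuse while the other three types buy and $\rev \geq 1/2$. The right reason to dismiss it is simply that there $h^*(3,s) = 0$ with $c = 0$, so $\risk(3,s) = 1$ and $\ext_{D_v\times D_k}(s) \geq 1/3 > e^{-1/5}/3$; the paper is equally terse on this boundary case, and either way the stated bound holds a fortiori.
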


\saveproof{claimnewexamplethree}{claim:newexample3}{
Such a policy certainly has $16/15 - \loss(3,y) \geq 2/3$, as we are now selling with probability $3/4$, so we must charge a price at least $2/3$ in order to get profits $\geq 1/2$. Observe that $16/15 = 2/3+2/5$, so we must now have $\loss(3,y) \leq 2/5$. That yields $\ln(3) + \ln(y) +1 \leq 6/5$, or $y \leq e^{1/5}/3$. So the externalities are at least $\frac{1}{3} \cdot \frac{3}{e^{1/5}3} = e^{-1/5}/3$. 
}
\arxiv{\proofclaimnewexamplethree}

\begin{corollary}\label{cor:newexample4} The optimal fine policy $s$ has $\ext_{D_v \times D_k}(s) \geq e^{-1/5}/3$.
\end{corollary}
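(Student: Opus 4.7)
The plan is to simply combine the preceding case analysis with Claims~\ref{claim:newexample2} and~\ref{claim:newexample3}. The paragraph before Claim~\ref{claim:newexample2} already eliminated the fine policies that sell to proper subsets of support$(D_v\times D_k)$ other than the set obtained by dropping $(1,3)$: selling to only $(16/15,x)$ yields profits at most $4/5 \cdot (\text{price}) \leq 4/5 \cdot 16/15 < R$ would need to be checked, but more importantly, the text already asserts that these cases yield profits below $R=1/2$ for $x$ finite. Thus any feasible fine policy must sell either to all four types or to all three types except $(1,3)$.

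First, I would formally note the dichotomy: a fine policy $s$ with $\rev_{D_v\times D_k}(s)\geq 1/2$ must fall into one of the two cases analyzed by Claims~\ref{claim:newexample2} and~\ref{claim:newexample3}, since the preceding discussion ruled out all other purchase patterns. Next I would invoke each claim to bound the externalities from below: in the all-four case, $\ext_{D_v\times D_k}(s)\geq \frac{1}{2\sqrt{e}}$, and in the three-of-four case, $\ext_{D_v\times D_k}(s)\geq \frac{e^{-1/5}}{3}$.

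Finally, I would compare the two numerical lower bounds and take the smaller. Since
\[
\frac{1}{2\sqrt{e}} \geq \frac{e^{-1/5}}{3} \iff 3e^{1/5} \geq 2e^{1/2} \iff \tfrac{3}{2}\geq e^{3/10},
\]
which holds because $e^{3/10}\approx 1.35 < 1.5$, the tighter of the two bounds is $\frac{e^{-1/5}}{3}$. Therefore every feasible fine policy has externalities at least $\frac{e^{-1/5}}{3}$, establishing the corollary.

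This is a one-step deduction, so there is no real obstacle; the only subtle point is making sure the case analysis really is exhaustive, i.e., that no fine policy selling to a smaller set of types achieves profits $\geq 1/2$. That verification, however, is already spelled out in the paragraph preceding Claim~\ref{claim:newexample2}, so the proof is essentially a one-line ``take the minimum over the two feasible cases.''
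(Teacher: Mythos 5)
Your proposal is correct and matches the paper's (unstated) proof: the corollary is a direct consequence of the exhaustive case analysis in the paragraph preceding Claim~\ref{claim:newexample2} together with the two lower bounds from Claims~\ref{claim:newexample2} and~\ref{claim:newexample3}, taking the smaller of the two, and your numerical comparison $\tfrac{3}{2} > e^{3/10}$ is accurate. The only minor wrinkle is the side remark about ``$4/5 \cdot 16/15$'' (selling only to $(16/15,x)$ gives profit at most $\tfrac{1}{4}\cdot\tfrac{16}{15}=\tfrac{4}{15}$), but since you ultimately defer to the paper's existing discussion for exhaustiveness, this does not affect the argument.
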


Here's now some intuition for how we're going to design a better non-simple policy: given that we wish to sell to all types except $(1,3)$, we can set $y$ very close to $0$ and have $\risk(x,s) \approx 0$, because $x$ is so large. The remaining question is then whether we wish to use $y$ or $c$ to make the risk of $(16/15, 3)$ as small as possible. Note that we must keep their loss under $2/5 < 1/2$ (as above). But for $k = 3$, a loss of $1/2$ is \emph{exactly} the cutoff when it becomes more efficient to use a fine policy instead of a cost policy. So if we use $c$ instead, we can get the risk lower for the same loss.

\begin{claim}\label{claim:newexample5}Let $\varepsilon$ be such that $\frac{\ln(x)+1}{x} \leq \varepsilon$. Then set $c = 1/3-\varepsilon$, and $y = (2/5-c)e^{c}$. Then $\ext_{D_v \times D_k}(y,c,2/3+c) = \frac{2}{3yx} + e^{-1/3+\varepsilon}/3$ and $\rev_{D_v \times D_k}(y,c,2/3+c) = 1/2$.
\end{claim}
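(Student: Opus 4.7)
The proof proceeds by direct computation, tracking the loss, risk, and purchase decision for each of the four types $(v_i, k_j)$ in turn. Let me set $c := 1/3 - \varepsilon$, $y := (2/5 - c)e^c = (1/15 + \varepsilon)e^{1/3-\varepsilon}$, and $p := 2/3 + c$ throughout.

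The first step is to compute $\ell(3, s)$. Note that $y \cdot 3 = 3(1/15+\varepsilon)e^{1/3-\varepsilon}$, and for $\varepsilon$ sufficiently small this is less than $e^c = e^{1/3-\varepsilon}$ (since $3/15 < 1$). Hence we are in the regime $yk < e^c$, so by Equation~\eqref{eq:loss} we have $h^*(3,s) = 0$, $\ell(3,s) = ye^{-c} = 2/5 - c = 1/15 + \varepsilon$, and $\risk(3,s) = e^{-c} = e^{-1/3+\varepsilon}$. Next, for $k = x$: since $x$ is very large, $xy \geq e^c$, so $\ell(x,s) = \frac{\ln(xy) - c + 1}{x} = \frac{\ln x + \ln(1/15 + \varepsilon) + 1}{x}$ (the $c$'s cancel cleanly), and $\risk(x,s) = 1/(xy)$. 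The hypothesis $\frac{\ln x + 1}{x} \leq \varepsilon$ combined with $\ln(1/15 + \varepsilon) < 0$ gives $\ell(x,s) < \varepsilon$.

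The second step is to determine who purchases. A type $(v,k)$ purchases iff $v \geq p + \ell(k,s) = 2/3 + c + \ell(k,s)$. Plugging in: for $(1,3)$ the threshold is $2/3 + c + (1/15 + \varepsilon) = 2/3 + 1/3 - \varepsilon + 1/15 + \varepsilon = 1 + 1/15 > 1$, so $(1,3)$ does not purchase. For $(16/15,3)$, the threshold is exactly $16/15$, so they purchase (by the tie-breaking footnote in Corollary~\ref{cor:transfer}). For $(1,x)$, the threshold is $2/3 + c + \ell(x,s) = 1 - \varepsilon + \ell(x,s) \leq 1$ as shown above, so $(1,x)$ purchases. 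And $(16/15,x)$ trivially purchases since it has higher $v$ and the same $k$ as $(1,x)$.

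The third step is assembly. The probability of sale is $3/4$, and since $p - c = 2/3$, we obtain $\rev_{D_v \times D_k}(s) = (2/3) \cdot (3/4) = 1/2$, establishing the profits claim. For the externalities, using Equation~\eqref{eq:ext},
\begin{align*}
\ext_{D_v \times D_k}(s) &= \frac{1}{3/4}\Bigl(\tfrac{1}{4}\risk(3,s) + \tfrac{1}{4}\risk(x,s) + \tfrac{1}{4}\risk(x,s)\Bigr)\\
&= \frac{4}{3}\Bigl(\tfrac{1}{4}e^{-1/3+\varepsilon} + \tfrac{1}{2xy}\Bigr) = \frac{e^{-1/3+\varepsilon}}{3} + \frac{2}{3xy},
\end{align*}
matching the claim.

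The only real obstacle is verifying the regime transitions cleanly (namely $yk_1 < e^c$ but $yk_2 \geq e^c$), which is why the somewhat odd choice $y = (2/5 - c)e^c$ is made: it makes $ye^{-c}$ a clean linear expression in $c$, so that $\ell(3,s) + c$ plus the profit margin $2/3$ sums to exactly the value $v_2 = 16/15$ of the marginal buyer. The rest is careful algebraic bookkeeping against the definitions in Section~\ref{sec:model}.
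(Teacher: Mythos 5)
Your proof is correct and takes essentially the same approach as the paper's: compute $\ell(3,s) = ye^{-c} = 2/5-c$ and $\ell(x,s) < \varepsilon = 1/3-c$, deduce which three of the four types purchase, and then read off profit and externalities. The only cosmetic difference is that you explicitly verify the regime $3y < e^c$ (and similarly $xy \geq e^c$) so that $\ell(3,s)$ is computed as an equality, whereas the paper simply uses the generic bound $\ell(k,s) \leq ye^{-c}$; both are valid and the remaining bookkeeping is identical.
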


\saveproof{claimnewexamplefive}{claim:newexample5}{
$\loss(3,(y,c)) \leq ye^{-c} = (2/5-c)e^{c} \cdot e^{-c} = 2/5-c$. So $(16/15,3)$ is willing to pay $c+2/3$. 

$\loss(x,(y,c)) = \frac{\ln(xy) - c+1}{x} = \frac{\ln(x(2/5-c))+\ln(e^c) - c + 1}{x} = \frac{\ln(x) + \ln(2/5-c)+1}{x}$. Because $\frac{\ln(x)+1+\ln(2/5-c)}{x} \leq \frac{\ln(x)+1}{x} \leq  \varepsilon$, the loss is $\leq \varepsilon = 1/3-c$, so $(1,x)$ is willing to pay $c+2/3$. 
Finally, we just need to compute the externalities and profits. The profits are exactly $1/2$, as it sells with probability $3/4$ and achieves profit $2/3$ when selling. The externalities are exactly $\frac{2}{3} \cdot \frac{1}{yx} + \frac{1}{3}\cdot e^{-c}$. 
}

\arxiv{\proofclaimnewexamplefive}

Now, we just need to compare $e^{-1/5}/3$ and $e^{-1/3+\varepsilon}/3 + \frac{2}{3yx}$. Observe that as $x \rightarrow \infty$, $\varepsilon \rightarrow 0$ and $e^{-1/3+\varepsilon}/3$ approaches $e^{-1/3}/3 $. So $\frac{2}{3yx} +e^{-1/3+\varepsilon}/3 \rightarrow 0 + e^{-1/3}/3 < e^{-1/5}/3$, and the externalities are indeed lower.

As a sanity check, we'll show that $((2/5-c)e^{1/3-\varepsilon},1/3-\varepsilon,2/3+c)$ is not profits-maximizing (technically, \Cref{thm:opt-homogeneous} doesn't imply this, since we didn't prove that the scheme is optimal. But as this scheme is better than all fine policies, certainly the optimal policy is not simple, and therefore not profits-maximizing by \Cref{thm:opt-homogeneous}. So the fourth bullet is already proven).

\begin{claim}\label{claim:newexample6} $((2/5-c)e^{1/3-\varepsilon},1/3-\varepsilon,2/3+c)$ is not profits-maximizing.
\end{claim}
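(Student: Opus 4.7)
The plan is to exhibit an explicit alternative price $p'$ under the same $(y,c) = ((2/5-c)e^{1/3-\varepsilon},\,1/3-\varepsilon)$ that strictly improves profits, thereby contradicting the profits-maximizing assumption. The candidate price is $p' = 3/5 + c$, i.e.\ slightly below the original price $p = 2/3 + c$.

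First, I would recompute the profit at the stated policy. From \Cref{claim:newexample5}, the original scheme sells to exactly three of the four types with probability $3/4$ at per-sale profit $p - c = 2/3$, yielding total profit exactly $1/2$. In particular, types $(16/15,3)$ and $(1,x)$ are indifferent (utility $0$) while $(1,3)$ strictly prefers not to buy (utility $-1/15$).

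Next I would check that, at the lower price $p' = 3/5 + c$, \emph{all four} types purchase. For type $(1,3)$, using $\ell(3,(y,c)) \le 2/5 - c$ from \Cref{claim:newexample5}, we get $u((1,3),s') \ge 1 - (3/5+c) - (2/5-c) = 0$. For type $(16/15,3)$, utility becomes $\ge 1/15 > 0$. For type $(1,x)$, using $\ell(x,(y,c)) \le \varepsilon = 1/3 - c$, we get $u((1,x),s') \ge 1 - (3/5+c) - (1/3-c) = 1/15 > 0$, and type $(16/15,x)$ is then trivially covered by monotonicity. Hence the probability of sale jumps from $3/4$ to $1$.

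Finally, I would compare profits: the new policy yields $(p' - c) \cdot 1 = 3/5 > 1/2$. Since a strictly higher profit is achievable at the same $(y,c)$ by moving from $p$ to $p'$, the original policy is not profits-maximizing. The only mild subtlety is the handling of indifferent types (for whom either action is a best response), but the improvement is strict at $p' = 3/5 + c$ regardless of the tie-breaking rule: $(1,3)$ may tie-break either way, but even if $(1,3)$ abstains, lowering the price infinitesimally further makes them strictly prefer to buy while still beating $1/2$. So no obstacle of substance arises; the argument reduces to the short calculation above.
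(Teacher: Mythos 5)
Your proof is correct and follows essentially the same route as the paper's: exhibit the alternative price $p' = 3/5 + c$ under the same $(y,c)$, observe that all four types then purchase, and conclude the profit becomes $3/5 > 1/2$. The only addition you make is an explicit discussion of tie-breaking for the type $(1,3)$ that is indifferent at $p'$, which is a harmless refinement of the paper's one-line computation.
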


\saveproof{claimnewexamplesix}{claim:newexample6}{
The four quantities of value minus loss are equal to: $\{16/15 - \varepsilon, 1-\varepsilon, 1-\varepsilon, 14/15-\varepsilon\}$. The seller generates profits $1/2$ by setting price $2/3+c$. If instead they set price $3/5+c$, the item would sell with probability $1$ and yield profits $3/5$.
}

\arxiv{\proofclaimnewexamplesix}
\section{General Distributions: An Approximation}\label{sec:general}
In this section, we consider general distributions. Clearly, one should not expect a simple policy to be optimal in general. Given that simple policies are optimal for homogeneous populations, one might reasonably expect that simple policies are approximately optimal for general distributions by simply ignoring half of the population and targeting the half that is responsible for most of the externalities. This idea works in one direction: if the ``low k'' region is responsible for most of the externalities in the optimum solution, then using a cost policy for the entire distribution is a good idea: the high k consumers may have significantly higher risk than previously, but this doesn't outweigh the original risk from the low k region.

This idea fails horribly, however, if the ``high k'' region is responsible for most of the externalities in the optimum solution. The problem is that while we can choose a policy to exclusively target this subpopulation, any low k (think: $k=0$) consumers who choose to purchase anyway may have enormous risk in comparison to before (i.e. it could now be $1$ when it was previously $e^{-c}$ for large $c$). We first show that this intuition can indeed manifest in a concrete example by presenting a lower bound in Section~\ref{sec:lower-bound}. This rules out a \emph{single-criterion} approximation that satisfies the profits constraint exactly, and approximates the externalities. In Section~\ref{sec:general}, we present a bicriterion approximation which approximately satisfies the profits constraint and also approximately minimizes externalities. This approximation is our most technical result. As such, we provide mainly proof sketches to overview the key steps.

\subsection{Lower Bound on Heterogeneous Distributions}\label{sec:lower-bound}

The key insight for our example is to make the profits constraint so binding that the only way to match it exactly is for the entire population to purchase the item. Part of the population will have $k=0$, and part will have $k \rightarrow \infty$. With both $c$ and $y$, it will be feasible to get the $k \rightarrow \infty$ consumers to have risk essentially $0$, while the $k=0$ consumers will have reasonably small risk. But with either $c =0$ or $y=0$, one of these will be lost, which causes significant risk increase.

\begin{example}\label{ex:lower-bound}
Let $D_v$ be a point mass at $v_0 = 2e^{x/2} \cdot (x+e^{-x})$. Let $D_k$ be a distribution with two point masses, one at $k=0$ with probability $e^{-x/2}$, one at $e^{xe^{x/2}}$ with probability $1-e^{-x/2}$. Let $R:= v_0 - e^{-x}-x$. 
\end{example}

\begin{lemma}\label{lem:expart1} The policy $(1,x,R+x)$ achieves profit $R$ in Example~\ref{ex:lower-bound}, and has externalities $\leq e^{-x/2} \cdot e^{-x} + 1\cdot e^{-xe^{x/2}}$.
\end{lemma}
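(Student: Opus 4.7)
The plan is to directly verify both claims by computing the relevant quantities, since the parameters of Example~\ref{ex:lower-bound} have been engineered so that both consumer types purchase under the stated policy and the externalities decompose cleanly. The key observation is that with $s=(1,x,R+x)$, the profit per sale is $p-c=R$, so the profit claim reduces to showing that the probability of sale is exactly $1$.

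First I would verify that both types purchase. For type $(v_0,0)$, since $yk=0<e^c=e^x$, the buyer exerts zero effort by \eqref{eq:buyer-effort}, so $\ell((v_0,0),s)=ye^{-c}=e^{-x}$ by \eqref{eq:loss}. By the definition $R=v_0-e^{-x}-x$ we get $p=v_0-e^{-x}$, hence $u((v_0,0),s)=v_0-p-e^{-x}=0$, so this type is indifferent and we take them to purchase (as is standard in our model). For type $(v_0,k_2)$ with $k_2=e^{xe^{x/2}}$, one checks $yk_2=e^{xe^{x/2}}\geq e^x=e^c$ (for $x\geq 1$), so $\ell((v_0,k_2),s)=\frac{\ln(yk_2)-c+1}{k_2}=\frac{xe^{x/2}-x+1}{e^{xe^{x/2}}}$, which is vanishingly small compared to $v_0-p=e^{-x}+x$; hence $u((v_0,k_2),s)>0$. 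Therefore every consumer purchases and $\rev_D(s)=(p-c)\cdot 1=R$.

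For the externalities, since the probability of sale is $1$, the denominator in \eqref{eq:ext} equals $1$ and $\ext_D(s)$ is simply the unconditional expectation of $\risk(\cdot,s)$ over $D$. From \eqref{eq:risk}, the $k=0$ type has $\risk=e^{-c}=e^{-x}$ and occurs with probability $e^{-x/2}$, contributing $e^{-x/2}\cdot e^{-x}$. The $k_2$ type has $\risk=1/(yk_2)=e^{-xe^{x/2}}$ and occurs with probability $1-e^{-x/2}\leq 1$, contributing at most $e^{-xe^{x/2}}$. Summing yields the stated bound.

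There is no real obstacle here: the only mild check is that $(xe^{x/2}-x+1)/e^{xe^{x/2}}<e^{-x}+x$, which is immediate for all $x\geq 1$. The whole point of the lemma is to exhibit a non-simple policy achieving profit exactly $R$ with doubly-small externalities (both the $e^{-x}$ and $e^{-xe^{x/2}}$ terms), which serves as the benchmark that simple policies will later be shown to fail against in Section~\ref{sec:lower-bound}.
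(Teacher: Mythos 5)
Your proof is correct and follows the paper's own approach: verify both types purchase (so the sale probability is $1$ and the profit is exactly $p-c=R$), then sum the two unconditional risk contributions. One small arithmetic slip: $p = R+x = v_0 - e^{-x}$, so $v_0 - p = e^{-x}$, not $e^{-x}+x$ (you computed $v_0-R$); the needed inequality is therefore $\ell(k_2,s) = (xe^{x/2}-x+1)/e^{xe^{x/2}} \le e^{-x}$, which is $u+1 \le e^u$ with $u = x(e^{x/2}-1) \ge 0$, and holds for all $x\ge 0$ — so the conclusion is unaffected.
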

\begin{proof}
The utility of $(v_0,0)$ is exactly $v_0 - e^{-x} - R - x = 0$, so they will choose to purchase. $(v_0,e^{xe^{x/2}})$ has only larger utility, so they will purchase as well. Therefore, the profit is indeed $R$.

The externalities are computed simply as the probability of having consumer $(v_0,0)$ times their risk ($e^{-x}$) plus (upper bound on the) probability of consumer $(v_0,e^{xe^{x/2}})$ times their risk $(e^{-xe^{x/2}})$. 
\end{proof}

\begin{lemma}\label{lem:expart2} Any cost policy that achieves profit $R$ has externalities at least $e^{-x+1}$
\end{lemma}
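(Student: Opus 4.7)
The plan is to exploit the key rigidity of cost policies: with $y=0$, both the induced effort and the loss vanish uniformly across the population. Formally, substituting $y=0$ into \eqref{eq:buyer-effort}--\eqref{eq:risk} gives $h^*(k,s)=0$, $\ell(k,s)=0$, and $\risk(k,s)=e^{-c}$ for every $k$ in the support of $D_k$. This is the crucial structural observation: a cost policy cannot leverage the heterogeneity of $D_k$ at all, and the entire externality reduction must come from $c$ alone.

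Next, I would exploit the point-mass structure of $D_v$. Since every buyer has value exactly $v_0$ and identical loss $0$, everyone's utility under $(0,c,p)$ is $v_0-p$, so the sale is all-or-nothing: it sells with probability $1$ if $p\le v_0$ and with probability $0$ otherwise. As $R>0$ is required to be feasible, only the former regime matters, and the profit constraint $\rev_{D_v\times D_k}(0,c,p)=p-c\ge R$ combined with $p\le v_0$ collapses to the single inequality
\[ c\ \le\ v_0-R\ =\ e^{-x}+x. \]

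The externalities are then $\ext_{D_v\times D_k}(0,c,p)=e^{-c}$, which is minimized precisely by taking $c$ as large as the previous display allows. Therefore
\[ \ext_{D_v\times D_k}(0,c,p)\ \ge\ e^{-(v_0-R)}\ =\ e^{-x-e^{-x}}\ \ge\ e^{-x-1}, \]
using $e^{-x}\le 1$ in the last step; this is the asymptotic bound the lemma is after (and is what one needs so that, combined with \Cref{lem:expart1}, the ratio between the cost-policy optimum and the full-policy optimum grows like $e^{\Omega(x)}$).

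There is no real obstacle here: the whole argument is a direct unpacking of the definitions, and the only content is noticing that (i) $y=0$ forces the risk to be $e^{-c}$ uniformly, independent of $k$, and (ii) the point-mass $D_v$ forces the sale to be all-or-nothing, so the profit constraint tightens into a one-variable bound on $c$. The only place one must be slightly careful is in resolving indifference at $p=v_0$ in favor of purchase, but this only helps the policy and hence strengthens the lower bound on the externalities it must incur.
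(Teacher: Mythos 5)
Your proof follows the same route as the paper's: use the fact that a cost policy imposes zero effort and zero loss, so (with $D_v$ a point mass) the profit constraint collapses to a single-variable bound $c \le v_0 - R = x + e^{-x}$, and the externality is $e^{-c}$.

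There is one thing worth flagging, which your derivation exposes. You correctly obtain $\ext \ge e^{-(x+e^{-x})} \ge e^{-x-1}$, and you even remark that this is ``the asymptotic bound the lemma is after.'' But the lemma as stated in the paper claims the stronger bound $e^{-x+1}$, which is false: the cost policy $(0, x+e^{-x}, v_0)$ achieves profit exactly $R$ with externality $e^{-x-e^{-x}} < e^{-x} < e^{-x+1}$. The paper's own proof computes the same maximal $c = x+e^{-x}$, so the ``$+$'' in the exponent of its final display and in the lemma statement is a sign typo; the correct bound, as you derive, is $e^{-x-1}$. This correction does not damage Corollary~\ref{cor:lb}, since the ratio against the bound of Lemma~\ref{lem:expart1} (roughly $e^{-3x/2}$) is still $e^{\Omega(x)} \gg x$.
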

\begin{proof}
The maximum security we can set and still have profit $R$ is $x+e^{-x}$. If we set this, then the risk of all consumers (which is now independent of $k$) is $e^{-x+e^{-x}} \geq e^{-x+1}$. 
\end{proof}

\begin{lemma}\label{lem:expart3} Any fine policy that achieves profit $R$ has externalities at least $e^{-x/2}$. 
\end{lemma}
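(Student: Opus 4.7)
The plan is to exploit the fact that a fine policy has $c=0$, which forces the $k=0$ consumer to have risk exactly $1$ (they cannot translate effort into any security, and there is no default security to protect them). The heart of the argument is then to show that any fine policy meeting the profit constraint $R$ is forced to sell to the $k=0$ consumer, at which point the $e^{-x/2}$ mass of consumers with $k=0$ each contribute risk $1$ to the externality.

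First I would set up the per-type quantities for a fine policy $s = (y,0,p)$. By the definition of $\risk$, $\risk(0,s) = \min\{e^0, 1/(y \cdot 0)\} = 1$, and by the definition of $\ell$, since $yk = 0 < e^c = 1$, the $k=0$ consumer's loss is $y e^{-0} = y$. Thus the $k=0$ consumer purchases iff $v_0 - p - y \geq 0$. For the high-$k$ consumer with $k = K := e^{xe^{x/2}}$, both the loss and risk tend to $0$ as $K \to \infty$, so effectively they purchase as long as $p \leq v_0$.

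Next I would rule out the case where the $k=0$ consumer does not buy. If only the high-$k$ type buys, the probability of sale is $1 - e^{-x/2}$ and the best achievable profit is at most $v_0(1-e^{-x/2})$. A direct arithmetic check using $v_0 = 2e^{x/2}(x+e^{-x})$ gives
\[v_0(1 - e^{-x/2}) \;=\; 2e^{x/2}(x+e^{-x}) - 2(x+e^{-x}) \;=\; R - (x+e^{-x}) \;<\; R,\]
so the profit constraint is violated. Hence any feasible fine policy must have the $k=0$ consumer purchasing.

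Finally, whenever the $k=0$ consumer purchases, the probability of sale is $1$, and by definition of $\ext_D$,
\[\ext_D(s) \;=\; e^{-x/2}\cdot\risk(0,s) + (1-e^{-x/2})\cdot\risk(K,s) \;\geq\; e^{-x/2}\cdot 1 \;=\; e^{-x/2},\]
which is exactly the claimed bound. I expect no real obstacle beyond the arithmetic check in the second step; the calibration $v_0 = 2e^{x/2}(x+e^{-x})$ and $R = v_0 - e^{-x} - x$ is chosen precisely so that the ``exclude the low-$k$ consumers'' strategy misses $R$ by a $\Theta(x)$ margin, leaving inclusion (and thus the $e^{-x/2}$ lower bound) as the only option.
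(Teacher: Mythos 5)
Your proposal is correct and follows the same approach as the paper: observe that a fine policy gives the $k=0$ consumer risk exactly $1$, then argue that the profit constraint forces the policy to sell to that consumer (i.e., to the whole population), which bounds the externalities below by $e^{-x/2}$. The paper states ``to achieve profit $R$, the policy must sell to the entire population'' without unpacking the arithmetic; you correctly supply the computation $v_0(1-e^{-x/2}) = R - (x + e^{-x}) < R$ that backs it up, and you also correctly note (implicitly, via Observation~\ref{obs:monotone3}) that ``only the high-$k$ type buys'' is the only alternative to selling to everyone or to no one.
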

\begin{proof}
To achieve profit $R$, the policy must sell to the entire population. The consumer with $k=0$ will not put in any effort, and therefore their risk will be one, and the externalities will be at least $e^{-x/2}$. 
\end{proof}

\begin{corollary}\label{cor:lb} For all $x$, there exists a distribution $D_v \times D_k$ and profits constraint $R$ such that the optimal policy is not simple, and any simple policy that satisfies profits constraints $R$ has externalities at least a factor of $x$ larger than the optimum.
\end{corollary}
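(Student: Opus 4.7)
The plan is to invoke Lemmas~\ref{lem:expart1}, \ref{lem:expart2}, and \ref{lem:expart3} on Example~\ref{ex:lower-bound}, but with the example instantiated at a parameter $x'$ chosen as a function of the desired factor $x$ (rather than identifying the two). From Lemma~\ref{lem:expart1}, the non-simple policy $(1, x', R + x')$ is feasible and achieves externalities at most $e^{-x'/2} \cdot e^{-x'} + e^{-x' e^{x'/2}} = e^{-3x'/2} + e^{-x' e^{x'/2}}$. From Lemmas~\ref{lem:expart2} and~\ref{lem:expart3}, any feasible cost policy has externalities at least $e^{-x'+1}$ and any feasible fine policy has externalities at least $e^{-x'/2}$, so any feasible simple policy has externalities at least $\min\{e^{-x'+1}, e^{-x'/2}\}$.

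Next I would bound the ratio. For $x' \geq 2$ we have $e^{-x'+1} \leq e^{-x'/2}$, so the simple-policy lower bound simplifies to $e^{-x'+1}$. Meanwhile, the non-simple upper bound is dominated by $e^{-3x'/2}$ once $x'$ is moderately large (the $e^{-x' e^{x'/2}}$ term decays doubly exponentially), so one can bound it by $2 e^{-3x'/2}$. The approximation ratio is therefore at least $\tfrac{1}{2} e^{x'/2 + 1}$, which grows exponentially in $x'$.

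To finish, given the target factor $x$ from the corollary statement, I would pick $x'$ large enough (e.g.\ $x' := \max\{2,\ 2\ln(2x)\}$) so that $\tfrac{1}{2} e^{x'/2 + 1} \geq x$, and take the distribution and profits constraint from Example~\ref{ex:lower-bound} instantiated at this $x'$. The same inequality automatically implies the optimal policy is not simple, since the non-simple policy of Lemma~\ref{lem:expart1} strictly beats every simple one.

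I do not foresee any real obstacle here; the content of the corollary is essentially a consolidation of the three lemmas already proved, and the double-exponential decay of the $e^{-x' e^{x'/2}}$ term leaves plenty of slack, so the only care needed is decoupling the parameter $x'$ of the construction from the factor $x$ appearing in the corollary and verifying the routine inequality $\tfrac{1}{2} e^{x'/2 + 1} \geq x$.
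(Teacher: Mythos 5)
Your proof is correct and amounts to the argument the paper leaves implicit: Corollary~\ref{cor:lb} is stated without proof as an immediate consequence of Lemmas~\ref{lem:expart1}--\ref{lem:expart3}, reading those bounds off Example~\ref{ex:lower-bound} and taking the parameter large. The one place you improve on the paper's presentation is the decoupling of the construction parameter (your $x'$) from the quantifier variable $x$ in the corollary: the paper tacitly identifies the two (which does work, since $\tfrac{1}{2}e^{x/2+1} \geq x$ for all $x \geq 2$, and the $e^{-xe^{x/2}}$ term is dominated once $x \geq 2\ln(3/2)$), but your version avoids having to check that inequality for all $x$ simultaneously and makes the routine nature of the final step explicit. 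Either way, the substance of the argument --- simple-policy lower bound $\min\{e^{-x'+1}, e^{-x'/2}\}$ against the non-simple upper bound $O(e^{-3x'/2})$ --- is identical.
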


Corollary~\ref{cor:lb} is the main result of this section. Clearly the distribution witnessing Corollary~\ref{cor:lb} is highly contrived and unrealistic. And clearly, the way to get around this is to allow for a slight relaxation in the profits constraint so that we don't have to sell to the entire market (indeed, even allowing to relax the constraint by a $(1-e^{-x/2})$ fraction in this case would suffice). So the subsequent section shows that by relaxing the profits constraint, an approximation guarantee is possible.
\subsection{A Bicriterion approximation}\label{sec:approximation}

Given the lower bound in \Cref{sec:lower-bound}, we show that simple policies guarantee a bicriterion approximation. As is traditional with worst-case approximation guarantees, our constants are not particularly close to $1$, but are still relatively small. This is not meant to imply that the seller should be happy with (e.g.) a $1/8$-fraction of the original profits, but more qualitatively to conclude that simple policies can reap many of the benefits of optimal ones (see~\cite{hartline2013mechanism} for further discussion about the role of approximation in mechanism design). As referenced previously, the proof of Theorem~\ref{thm:approx} is quite technical, \arxiv{so we sketch the key steps and left the proof to \Cref{sec:proof-approx}.}\www{so we sketch the key steps. The complete proof can be found on \cite{venturyne2019externality}.}

\begin{theorem}\label{thm:approx}
For all distributions $D$, and all policies $s$, there exists a simple policy $s'$ such that
\begin{equation*}
\rev_D(s') \geq \rev_D(s)/8,
\end{equation*}
\begin{equation*}
\ext_D(s') \leq 40/3\cdot \ext_D(s).
\end{equation*}
\end{theorem}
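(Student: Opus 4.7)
\textbf{Proof plan for Theorem~\ref{thm:approx}.} The plan is to do a case analysis on where the externalities of the target policy $s=(y,c,p)$ come from. Define the threshold $K := e^c/y$: this is precisely the value of $k$ at which $h^*(k,s)$ switches from $0$ to positive. For any $k<K$ we have $\risk(k,s)=e^{-c}$ and $\loss(k,s)=ye^{-c}$, whereas for $k\geq K$ we have $\risk(k,s)=1/(yk)$ and $\loss(k,s)=(\ln(yk)-c+1)/k$. Decompose the expected externality contribution among \emph{buyers} as $\ext_D(s)\cdot \Pr[u(t,s)\geq 0]=A+B$, where $A$ is the contribution from buying consumers with $k<K$ and $B$ is the contribution from buying consumers with $k\geq K$. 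Pick a balance parameter $\beta\in(0,1)$ (to be tuned against $1/8$ and $40/3$ at the end) and split into the two cases $A\geq \beta(A+B)$ or $B\geq (1-\beta)(A+B)$.

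In the first case (low-$k$ region contributes most of the externality), the candidate is the cost policy $s'=(0,c,p-c+\loss_{\max})$ where $\loss_{\max}$ is chosen to preserve the profit-per-sale $p-c$ and compensate for the disappearance of the loss term. More carefully, I would apply Observation~\ref{obs:monotone}: moving from $s$ to $(0,c,p)$ strictly decreases every consumer's loss, hence (weakly) increases the set of purchasers, so the probability of sale only goes up. Now raise the price back to recoup at least $\rev_D(s)/\text{const}$ while still keeping a constant fraction of the original buyers. The externalities under the new policy are $e^{-c}$ for every purchasing consumer, which exactly matches what the $k<K$ population experienced under $s$ and is only a constant blowup over $\risk(k,s)$ for the $k\geq K$ population, \emph{provided} we can argue that not too many high-$k$ consumers are still purchasing (or equivalently, that $A$ already captured a constant fraction of the total externality, which is exactly the case hypothesis).

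In the second case (high-$k$ region contributes most of the externality), the candidate is the fine policy $s''=(y,0,p')$ for an appropriate $p'$. Here the danger is exactly the phenomenon identified in Section~\ref{sec:lower-bound}: consumers with $k<K$ now have risk $1$ rather than $e^{-c}$, which could make $\ext_D(s'')$ explode. The key observation is that under $s''$, such a consumer's loss is $y$ rather than $ye^{-c}$, so their utility drops by $y(1-e^{-c})$. Using the invariant transformation (Lemma~\ref{lemma:inv-transform}) as a subroutine first, one can pass to a ``normalized'' representative of $s$ where $p-c$ is comparable to $\rev_D(s)/\Pr[u(t,s)\geq 0]$, which makes the jump in loss controllable. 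Then by raising $p'$ to recover a constant fraction of $\rev_D(s)$, we can ensure that a constant fraction of the $k\geq K$ purchasers remain (since their loss only shifted by $c$) and, simultaneously, that the $k<K$ purchasers who remain contribute at most a constant multiple of $B$, not $A$, to externalities. This is because high price-per-purchaser discourages the low-$k$ consumers most, and Observation~\ref{obs:monotone3} ensures the surviving low-$k$ mass shrinks faster than the high-$k$ mass.

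The main obstacle I anticipate is bookkeeping the constants so they multiply out to $1/8$ and $40/3$ simultaneously in both cases, especially in Case~2: we need a quantitative statement that after raising $p'$, the remaining low-$k$ purchasers either have small total mass, or their added $k=0$-style risk of $1$ is dominated by the original $B\geq (1-\beta)\ext_D(s)\Pr[u(t,s)\geq 0]$. This is where the expressions saved as \texttt{good-H(x)} and \texttt{good-blowup} in the preamble apparently come in: they parameterize the cutoff and the price blowup, and a one-line optimization over $\beta$ (together with a choice of how much profit slack $\varepsilon$ to spend) should deliver the stated constants. The two preliminary lemmas (Observation~\ref{obs:monotone} for the profit side and Corollary~\ref{cor:transfer}/Lemma~\ref{lemma:inv-transform} for the simultaneous-monotonicity side) are the workhorses; nothing fundamentally new beyond them is needed, only a careful two-way case analysis and constant-tracking.
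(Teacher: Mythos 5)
Your two-way split on $K = e^c/y$ (equivalently, on whether $h^*(k,s)=0$) is coarser than what the argument requires, and it elides the part of the proof where the real work lives. The paper partitions buyers into \emph{three} regions: $k\le k_0(s)$ (zero effort), $k_0(s) < k \le 1$ (positive effort but less efficient than the seller), and $k > \max\{1,k_0(s)\}$. Your Case~2 lumps the middle region together with the high-$k$ region and proposes to handle it with a fine policy $s''=(y,0,p')$. That is the wrong tool there: buyers with $k\in(k_0(s),1]$ are less efficient than the producer, so for them a fine policy is never a good substitute for a cost policy. The paper handles them with $\COST^1(s,\varepsilon_1+\varepsilon_2)$ (Corollary~\ref{corollary:eps-2}), exploiting that these buyers' loss under $s$ is already at least $1+\ln y$, so a cost policy at level $\approx\ln y$ reproduces their risk without paying anything for high-$k$ consumers. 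If the distribution concentrates mass in $(k_0(s),1]$, your Case~2 argument breaks.

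The second, larger gap is in how you propose to handle the genuinely high-$k$ contribution. Setting $s''=(y,0,p')$ with the \emph{same} fine $y$ and raising $p'$ does not drive the low-$k$ buyers out preferentially: under a fine policy the loss differential between a $k=0$ and a $k=\infty$ buyer is fixed at $y$, so if $D_v$ has a heavy enough tail, low-$k$ high-$v$ buyers survive any price increase you can afford while only spending a constant fraction of the profit, and each such survivor now has risk $1$ rather than $e^{-c}$. This is exactly the phenomenon from Section~\ref{sec:lower-bound}, and you cannot wave it away with ``high price-per-purchaser discourages the low-$k$ consumers most'' — that statement is simply false for fine policies. The paper's actual proof addresses this by (a) \emph{blowing up} the fine (not the price) by a factor $e^{c(\sigma-1)}$ via the $\BLOWUP$ transformation so that inefficient buyers' loss becomes prohibitive, and (b) when even that fails — detectable precisely because $D_v$ is heavy-tailed or $D_k$ is short-tailed — falling back to a \emph{cost} policy ($\HEAVY$ or $\COST^3_x$), not a fine policy. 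These constructions, and the dichotomy between ``good blowup'' and the heavy-tail regime, are the technical core of the theorem; they are not reachable as ``constant-tracking'' on top of Observation~\ref{obs:monotone} and Lemma~\ref{lemma:inv-transform}. As written, your Case~2 would need to be entirely rebuilt around a fine blowup plus a heavy-tail fallback for the plan to go through.
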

\begin{proof}[Proof Sketch]
Given an arbitrary policy $s = (y, c, p)$, consider the conditional distribution of buyers that purchase under $s$. If with constant probability a buyer has efficiency $k \leq 1$, then we output the cost policy $s_1 := (0, c + \ell(\sigma, s), p + \ell(\sigma, s))$ where $\sigma$ is chosen such that a buyer continues to purchase with constant probability. We can show that $c + \ell(\sigma, s)$ is sufficiently large such that $\risk(D_k, s') \leq \risk(D_k, s)$ with constant probability.

For the case where with constant probability a buyer has efficiency $k > 1$, we define a blowup of the fines such that with constant probability a buyer continues to purchase but with the hope that inefficient buyers stop to purchase. The blowup can fail in two conditions: (1) $D_k$ is not heavy tail, (2) $D_v$ is heavy tail. For (1), we cannot derive a significant blowup if $D_k$ is concentrated close to 1. For (2), we cannot drive inefficient buyers out of the market if they have high value. Either condition allow us to construct cost policies that give good externality guarantees.
\end{proof}
\section{Summary}
We propose a stylized model to study regulation of single item sales with negative externalities, from which neither the buyer nor seller suffer. We first show that a simple policy is optimal in homogenous markets: That is, for all $D_v$, $R$, there exists a cutoff $T$ such that when the effectiveness of consumers ranges in $[0,T]$, the optimal policy regulates only the product (and does not impose fines). Similarly, if all consumers have effectiveness in $[T,\infty)$, a policy which regulates only payments (via fines, and does not impose default security features) outperforms all profits-maximizing policies. Importantly, $T$ is not necessarily the cutoff at which the consumers are more effective than the producer (which would be $T=1$), but actually depends on the value distribution $D_v$ and profit constraint $R$. 

We then show in general markets that while a simple policy may not be optimal, one is always approximately optimal. In particular, we show that while no simple scheme can guarantee any finite approximation while satisfying the profit constraint exactly, a bicriterion approximation exist, which approximately satisfies the profit constraint and also approximately minimizes externalities. Going forward, we must better understand the effectiveness of consumers to decide which regulation strategy is more appropriate to approximately minimizes externalities.

While stylized, our model captures the key salient features of this problem. We chose to study the single seller/single item setting in order to isolate these features without bringing in additional complexities (and the numerous examples throughout our paper demonstrate that even the single seller/single item setting is quite rich). Now that our results develop this understanding, a good direction for future work is to consider competing sellers or multiple items.

\bibliographystyle{ACM-Reference-Format}
\bibliography{masterbib}

\arxiv{\appendix

\section{Proof of Theorem~\lowercase{\ref{thm:approx}}}\label{sec:proof-approx}
We define the necessary tools for the case where a constant fraction of the population that purchase has efficiency $k \leq 1$ in \Cref{sec:approx-low-efficiency} and proof approximation guarantees in \Cref{sec:proof-approx-low-efficiency}.  For the case where a constant fraction of the population that purchase has efficiency $k > 1$, we define the necessary tools in \Cref{sec:approx-high-efficiency} and proof approximation guarantees in \Cref{sec:proof-approx-high-efficiency}. In \Cref{sec:proof-of-approx}, we combine the approximation guarantees to complete the proof of \Cref{thm:approx}.

\paragraph{Notation} Policy $s$ induces a threshold $k_0(s) := \inf\{k | h^*(s, t) > 0\}$ of types with zero effort. Let $k_h(s) := \max\{1, k_0(s)\}$, and define the events:
\[A(s) := \{\text{$t\leftarrow D$ purchase under $s$}\}\]
$$B(s) := A(s) \cap \{\text{$t \leftarrow D$ has efficiency $k > k_h(s)$}\}$$
We define the buyer's value after regulation:
$$\pvalue(t, s) := v - \ell(t, s)$$
then event $A(s)$ is equivalent to $\pvalue(t, s) \geq p$ or $u(t, s) \geq 0$.

\paragraph{Space Partition} Given $(D, R)$, and some arbitrary policy $s = (y, c, p)$. Let's look over the distribution of buyers that purchase under $s$. More formally, we will consider the following partition of the probability space:
\begin{align*}
    \varepsilon_1 &= \underset{t \leftarrow D}{Pr}[k \leq k_0(s) | u(t, s) \geq 0]\\
    \varepsilon_2 &= \underset{t \leftarrow D}{Pr}[k_0(s) < k \leq 1 | u(t, s) \geq 0]\\
    \varepsilon_3 &= \underset{t \leftarrow D}{Pr}[k > k_h(s) | u(t, s) \geq 0]
\end{align*}
The key idea behind our approximation mechanism, \Cref{alg:approx}, consists of defining three transformations of $s$. The cost policy $\COST^1$, \Cref{eq:cost-1} in \Cref{sec:approx-low-efficiency}, guarantees a constant approximation proportional to $\varepsilon_1$ and $\varepsilon_2$. \Cref{alg:skill}, \Cref{sec:approx-high-efficiency}, denoted by $\FINE$ outputs a simple policy that guarantees a constant approximation proportional to $\varepsilon_3$.

\begin{algorithm}
\caption{$\APPROX$}
\label{alg:approx}
\hspace*{\algorithmicindent} \textbf{Input: $s = (y, c, p)$, $D$} \\
\begin{algorithmic}[1]
\STATE Let $\beta = \frac{1}{2}$
\IF {$\varepsilon_1 \geq \frac{1}{8}$}
	\STATE Output $\COST^1(s, 1)$
\ENDIF
\IF {$\varepsilon_2 \geq \frac{1}{8}$}
	\STATE Output $\COST^1(s, \varepsilon_1 + \varepsilon_2)$
\ELSE
	\STATE Output $\FINE(s)$
\ENDIF
\end{algorithmic}
\end{algorithm}
\subsection{Approximation with Low Efficiency Buyers}\label{sec:approx-low-efficiency}
Next, we define the tools to proof \Cref{thm:approx} for the case where $\varepsilon_1 + \varepsilon_2 = O(1)$ and postpone the proof to \Cref{sec:proof-approx-low-efficiency}. The following policy clearly meets the profit guarantees when $\varepsilon = O(1)$.

\begin{equation}\label{eq:cost-1}
\COST^1(s, \varepsilon) := (0, c + \ell(\sigma, s), p + \ell(\sigma, s))
\end{equation}
where $\varepsilon \in [0, 1]$, and we choose $\sigma$ such that $Pr_{t \leftarrow D}[v \geq \ell(\sigma, s) + p] = \varepsilon Pr_{t \leftarrow D}[v \geq \ell(k, s) + p]$.

\paragraph{Zero efficiency case} For $\COST^1(s, 1)$, we get a good approximation to the externalities whenever a constant fraction $\varepsilon_1$ of the consumers who purchase have $h^*(t,s) = 0$. This is because the externalities are at least $\varepsilon_1 \cdot e^{-c}$ under $s$, and our new externalities are just $e^{-c}$. So if $\varepsilon_1$ is big enough, we get our desired approximation, \Cref{corollary:eps-1}.

\paragraph{Non-Zero efficiency case} If $\varepsilon_2$ is big, then $\varepsilon_2 \frac{1}{y}$ is a good lower bound for $\ext_D(s)$. This implies we must target a cost proportional to $\ln y$. For $\COST^1(s, \varepsilon_2)$, we can argue $\ell(\sigma, s)$ is at least $1 + \ln y$ which implies good externality bounds, \Cref{corollary:eps-2}.

\subsubsection{Proof of Approximation with Low Efficiency Buyers}\label{sec:proof-approx-low-efficiency}

\begin{lemma}\label{lemma:cost-1}
Let $\varepsilon \in [0, 1]$, then
\[\rev_D(\COST^1(s, \varepsilon)) = \varepsilon\rev_D(s, \varepsilon)\]
\[\ext_D(\COST^1(s, \varepsilon)) = \frac{e^{-c - \ell(\sigma, s)}}{E_{t\leftarrow D}[\risk(t, s)|u(t, s) \geq 0]}\ext_D(s)\]
\end{lemma}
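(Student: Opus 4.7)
\textbf{Proof proposal for Lemma~\ref{lemma:cost-1}.} The plan is to compute both quantities directly from the definitions, exploiting the fact that the transformed policy $\COST^1(s,\varepsilon)=(0,c+\ell(\sigma,s),p+\ell(\sigma,s))$ has $y=0$, which dramatically simplifies a buyer's behavior.

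First, for the revenue equation I would observe that because $y=0$, the optimal effort satisfies $h^*(t,\COST^1(s,\varepsilon))=0$ for every type (the $\max$ in \eqref{eq:buyer-effort} is attained by $0$), which in turn gives $\ell(t,\COST^1(s,\varepsilon))=0$ for every type. Therefore a buyer $(v,k)$ purchases iff $v \geq p+\ell(\sigma,s)$, and the per-sale profit is $(p+\ell(\sigma,s))-(c+\ell(\sigma,s))=p-c$, matching that of $s$. Using the definition of $\sigma$, namely $\Pr_{t\leftarrow D}[v\geq p+\ell(\sigma,s)]=\varepsilon\cdot\Pr_{t\leftarrow D}[v\geq p+\ell(k,s)]=\varepsilon\cdot\Pr_{t\leftarrow D}[u(t,s)\geq 0]$, I can conclude
\[
  \rev_D(\COST^1(s,\varepsilon)) = (p-c)\cdot\varepsilon\cdot\Pr_{t\leftarrow D}[u(t,s)\geq 0] = \varepsilon\cdot\rev_D(s),
\]
which is the claimed revenue identity (reading the right-hand side as $\varepsilon\,\rev_D(s)$).

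Second, for the externality equation I would again use that $y=0$ makes $\risk(t,\COST^1(s,\varepsilon))=e^{-(c+\ell(\sigma,s))}$ a constant that does not depend on $t$. Plugging this into the definition \eqref{eq:ext} and pulling the constant out of both the numerator expectation and the denominator probability gives
\[
  \ext_D(\COST^1(s,\varepsilon)) \;=\; e^{-c-\ell(\sigma,s)}.
\]
The stated form of the lemma is just a rewriting of this constant: by definition \eqref{eq:ext} we have $\ext_D(s)=\mathbb{E}_{t\leftarrow D}[\risk(t,s)\mid u(t,s)\geq 0]$, so the ratio $\ext_D(s)\big/\mathbb{E}_{t\leftarrow D}[\risk(t,s)\mid u(t,s)\geq 0]$ equals $1$ and the advertised right-hand side indeed equals $e^{-c-\ell(\sigma,s)}$.

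No step here looks technically hard; the only thing to be careful about is the edge case of the definition of $\sigma$ when $D_v$ has point masses and no value of $\ell(\sigma,s)$ exactly realizes the target probability, handled in the same way as the footnote in the proof of Corollary~\ref{cor:transfer} (splitting the indifferent mass). The main reason to write the externality bound in the redundant-looking ratio form is presumably to make it convenient to combine with a bound of the type $\ext_D(s)\geq$ (something involving $\mathbb{E}[\risk(t,s)\mid u(t,s)\geq 0]$) in the downstream application (\Cref{corollary:eps-1,corollary:eps-2}), so I would not try to simplify it further in the statement.
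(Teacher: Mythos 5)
Your proof is correct and follows essentially the same route as the paper's: compute the per-sale profit (unchanged, $p-c$) and invoke the defining property of $\sigma$ to get the $\varepsilon$-scaling of the sale probability; then observe that a cost policy induces a type-independent risk $e^{-c-\ell(\sigma,s)}$ and rewrite that constant as the stated ratio against $\ext_D(s)=\mathbb{E}_{t\leftarrow D}[\risk(t,s)\mid u(t,s)\geq 0]$. Your closing remark that the ratio form is redundant as written but convenient for the downstream bounds in \Cref{corollary:eps-1,corollary:eps-2} is accurate, and the caveat about realizability of $\sigma$ under point masses is a legitimate but standard technicality that the paper likewise leaves implicit.
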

\begin{proof}
By our choice of $\sigma$,
$$Pr_{t \leftarrow D}[v \geq p + \ell(\sigma, s)] = \varepsilon Pr_{t \leftarrow D}[v \geq p + \ell(k, s)]$$
This implies the following bounds in the profit,
\begin{align*}
\rev_D(\COST^1(s, \varepsilon)) &= (p + \ell(\sigma, s) - c - \ell(\sigma, s))Pr_{t \leftarrow D}[v \geq p + \ell(\sigma, s)]\\
&= (p - c) \varepsilon Pr_{t \leftarrow D}[u(t, s) \geq 0]\\
&= \varepsilon \rev_D(s)
\end{align*}
For the externality,
\begin{align*}
    \frac{\ext_D(\COST^1(s, \varepsilon))}{\ext_D(s)} = \frac{e^{-c - \ell(\sigma, s)}}{E_{t\leftarrow D}[\risk(t, s) | u(t, s) \geq 0]}
\end{align*}
which concludes the proof.
\end{proof}

\begin{corollary}\label{corollary:eps-1}
    \begin{align*}
    \rev_D(\COST^1(s, 1)) = \rev_D(s)
\end{align*}
\begin{align*}
    \rev_D(\COST^1(s, 1)) \leq \frac{1}{\varepsilon_1} \ext_D(s)
\end{align*}
\end{corollary}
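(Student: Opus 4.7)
The plan is to derive both assertions by specializing Lemma~\ref{lemma:cost-1} at $\varepsilon = 1$, supplemented by a structural lower bound on $\ext_D(s)$ that comes from the definition of $\varepsilon_1$. For the first assertion, the lemma's profit identity $\rev_D(\COST^1(s,\varepsilon)) = \varepsilon \rev_D(s)$ becomes $\rev_D(\COST^1(s,1)) = \rev_D(s)$ upon setting $\varepsilon = 1$, with no further work required.

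For the second assertion, I would first invoke the equality just established to rewrite the left-hand side as $\rev_D(s) = (p-c)\,\Pr_{t \leftarrow D}[u(t,s) \geq 0]$. I would then produce the key lower bound $\ext_D(s) \geq \varepsilon_1 e^{-c}$. By definition, $\varepsilon_1$ is the probability, conditional on purchase under $s$, that a type has $k \leq k_0(s)$; for any such type $h^*(t,s) = 0$ and hence $\risk(t,s) = e^{-c}$ exactly. Restricting the conditional expectation in the definition of $\ext_D(s)$ to this sub-event therefore yields $\ext_D(s) \geq \varepsilon_1 \cdot e^{-c}$, which rearranges to $\tfrac{1}{\varepsilon_1}\ext_D(s) \geq e^{-c}$.

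In parallel, I would record the externality formula from Lemma~\ref{lemma:cost-1}, namely $\ext_D(\COST^1(s,1)) = e^{-c-\ell(\sigma,s)} \leq e^{-c}$ (using $\ell(\sigma,s) \geq 0$). Chaining this against the lower bound above gives $\ext_D(\COST^1(s,1)) \leq \tfrac{1}{\varepsilon_1}\ext_D(s)$, and combining with the first assertion is the route by which I would attempt to derive the stated inequality between $\rev_D(\COST^1(s,1))$ and $\tfrac{1}{\varepsilon_1}\ext_D(s)$.

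The step I expect to be the main obstacle is precisely this last link: as written, the inequality compares a revenue quantity on the left to an externality quantity on the right, whereas the tools in Lemma~\ref{lemma:cost-1} and the $\varepsilon_1$-mass argument most directly compare $\ext_D(\COST^1(s,1))$ to $\tfrac{1}{\varepsilon_1}\ext_D(s)$. My plan is to write out both sides explicitly in terms of $(p,c,\sigma)$ using Lemma~\ref{lemma:cost-1}, record the $\varepsilon_1 e^{-c}$ lower bound, and then carefully trace which of these quantities the corollary's second line actually intends to bound, so that the inequality produced by the argument above matches the stated form.
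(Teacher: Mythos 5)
Your proposal is correct and follows essentially the same route as the paper: the profit identity is read off from Lemma~\ref{lemma:cost-1} at $\varepsilon=1$, and the externality bound comes from chaining $\ext_D(\COST^1(s,1)) = e^{-c-\ell(\sigma,s)} \leq e^{-c}$ against the lower bound $\ext_D(s) \geq \varepsilon_1 e^{-c}$. The ``obstacle'' you flag is in fact a typo in the corollary's statement---the left-hand side of the second line should read $\ext_D(\COST^1(s,1))$ rather than $\rev_D(\COST^1(s,1))$, which is exactly what the paper's own proof establishes and what is later used in the proof of Theorem~\ref{thm:approx}.
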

\begin{proof}
The profit bound follows directly from \Cref{lemma:cost-1}. For the externality,
\begin{align*}
    \ext_D(s) &\geq \varepsilon_1 E_{t\leftarrow D}[\risk(t, s) | u(t, s) \geq 0, k \leq k_0(s)]\\
    &= \varepsilon_1 e^{-c}
\end{align*}
then,
\begin{align*}
    \frac{\ext_D(\COST^1(s, 1))}{\ext_D(s)} &= \frac{e^{-c - \ell(\sigma, s)}}{\ext_D(s)}\\
    &\leq \frac{e^{-c}}{\varepsilon_1 e^{-c}}
\end{align*}
which concludes the proof.
\end{proof}

\begin{corollary}\label{corollary:eps-2}
Assume $\varepsilon_2 > 0$, then
\begin{align*}
    \rev_D(\COST^1(s, \varepsilon_2)) \geq (\varepsilon_1 + \varepsilon_2) \rev_D(s)
\end{align*}
\begin{align*}
    \ext_D(\COST^1(s, \varepsilon_2)) \leq \frac{1}{\varepsilon_2 e} \ext_D(s)
\end{align*}
\end{corollary}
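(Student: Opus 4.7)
The plan is to derive both bounds from Lemma~\ref{lemma:cost-1} after pinning down $\ell(\sigma,s)$ sharply enough. Lemma~\ref{lemma:cost-1} already gives $\rev_D(\COST^1(s,\varepsilon)) = \varepsilon \rev_D(s)$, so instantiating with $\varepsilon = \varepsilon_1 + \varepsilon_2$ (the value actually used in Algorithm~\ref{alg:approx}) directly delivers the profit bound; the entire content of the corollary therefore lives in the externality estimate. Using $\ext_D(s) = E[\risk(t,s) \mid u(t,s) \geq 0]$, Lemma~\ref{lemma:cost-1} also reduces the externality inequality to the single claim $c + \ell(\sigma,s) \geq 1 + \ln(\varepsilon_2/\ext_D(s))$, which I will split into (i) $\ext_D(s) \geq \varepsilon_2/y$, giving $\ln y \geq \ln(\varepsilon_2/\ext_D(s))$, and (ii) $c + \ell(\sigma,s) \geq 1 + \ln y$.

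Step (i) is immediate from the definition of $\varepsilon_2$: conditional on purchase, a $\varepsilon_2$ fraction of buyers have $k_0(s) < k \leq 1$, and any such buyer exerts positive effort so $\risk(t,s) = 1/(yk) \geq 1/y$. Averaging gives $\ext_D(s) \geq \varepsilon_2 / y$.

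Step (ii) is the main work, and the key sub-claim is $\sigma \leq 1$. Using independence of $v$ and $k$, the right-hand side of the defining equation for $\sigma$ becomes $\Pr[A(s),\, k_0(s) < k \leq 1]$. On that event $k \leq 1$, so Observation~\ref{obs:monotone3} gives $\ell(k,s) \geq \ell(1,s)$ and hence $\Pr_v[v \geq p+\ell(k,s)] \leq \Pr_v[v \geq p+\ell(1,s)]$. Therefore $\varepsilon_2 \cdot \Pr[A(s)] \leq \Pr_v[v \geq p+\ell(1,s)]$, and since $\sigma \mapsto \Pr_v[v \geq p+\ell(\sigma,s)]$ is non-decreasing in $\sigma$ (as $\ell(\cdot,s)$ is non-increasing), the equation for $\sigma$ admits a solution with $\sigma \leq 1$. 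With $\sigma \leq 1$ in hand I split into two sub-regimes: if $\sigma > k_0(s)$, an elementary derivative calculation shows $\sigma \mapsto (\ln(y\sigma)-c+1)/\sigma$ is decreasing on $(k_0(s),\infty)$, so $\ell(\sigma,s) \geq \ell(1,s) = \ln y - c + 1$; if instead $\sigma \leq k_0(s) = e^c/y$, then $\ell(\sigma,s) = ye^{-c} = 1/k_0(s)$ and the target inequality $c + 1/k_0(s) \geq \ln y + 1$ rearranges to $1/u + \ln u \geq 1$ for $u = k_0(s)$, which holds for all $u>0$ since $g(u) := 1/u + \ln u - 1$ is convex with minimum $g(1)=0$. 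Either way $c + \ell(\sigma,s) \geq \ln y + 1$, which combined with step (i) closes the argument.

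The main obstacle will be the $\sigma \leq 1$ argument and the accompanying sub-case analysis: both rely crucially on the independence $D = D_v \times D_k$, on monotonicity of $\ell(\cdot,s)$, and on the auxiliary inequality $1/u + \ln u \geq 1$. Once $\sigma \leq 1$ is in place, the remaining manipulations are direct substitutions into Lemma~\ref{lemma:cost-1}.
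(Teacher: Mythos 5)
Your proposal is correct and follows essentially the same route as the paper: establish $\sigma \leq 1$ (equivalently $\ell(\sigma,s) \geq \ell(1,s)$) by exploiting independence of $D_v$ and $D_k$ together with monotonicity of $\ell$, use $\varepsilon_2>0$ to get $\risk(t,s)=1/(yk)\geq 1/y$ on the $\varepsilon_2$-mass so $\ext_D(s)\geq \varepsilon_2/y$ and so that $\ell(1,s)+c=\ln y+1$, then plug into Lemma~\ref{lemma:cost-1}. Two small cleanups: since you instantiate $\COST^1$ with $\varepsilon=\varepsilon_1+\varepsilon_2$ (correctly, matching Algorithm~\ref{alg:approx}), the defining equation for $\sigma$ has $(\varepsilon_1+\varepsilon_2)\Pr[A(s)]=\Pr[A(s),\,k\leq 1]$ on the right-hand side, not $\varepsilon_2\Pr[A(s)]$, and the same averaging argument over the event $\{k\leq 1\}$ then gives $\sigma\leq 1$; and the sub-case split on whether $\sigma\gtrless k_0(s)$ is superfluous, since $\ell(\cdot,s)$ is already globally non-increasing (Observation~\ref{obs:monotone3}), so $\sigma\leq 1$ alone yields $\ell(\sigma,s)\geq\ell(1,s)=\ln y-c+1$.
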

\begin{proof}
We will first claim $\ell(\sigma, s) \geq \ell(1, s)$. Assume for contradiction $\ell(\sigma, s) < \ell(1, s)$, then it must be
\begin{align*}
    (\varepsilon_1 + \varepsilon_2) Pr_{t \leftarrow D}[v \geq p + \ell(k, s)] &= Pr_{t \leftarrow D}[v \geq p + \ell(\sigma, s)]\\
    &>  Pr_{t \leftarrow D}[v \geq p + \ell(1, s)]\\
    &\geq Pr_{t \leftarrow D}[v \geq p + \ell(k, s) | k \leq 1]\\
    &= \frac{Pr_{t \leftarrow D}[k \leq 1 | u(t, s) \geq 0]Pr_{t \leftarrow D}[u(t, s) \geq 0]}{Pr_{t \leftarrow D}[k \leq 1]}\\
    &\geq Pr_{t \leftarrow D}[k \leq 1 | u(t, s) \geq 0]Pr_{t \leftarrow D}[u(t, s) \geq 0]\\
    &= (\varepsilon_1 + \varepsilon_2) Pr_{t \leftarrow D}[v \geq p + \ell(k, s)] \Rightarrow\!\Leftarrow
\end{align*}
where the first equality follows by the choice of $\sigma$ in \Cref{eq:cost-1} and the second equality follows by Bayes' theorem.

Next, we bound the externality with \Cref{lemma:cost-1}. We use the fact  $\sigma \leq 1$ and since $\varepsilon_2 > 0$, it must be $\hygiene(1, s) > 0$.
\begin{align*}
\frac{\ext_D(\COST^1(s, \varepsilon_2))}{\ext_D(s)} &= \frac{e^{-c - \ell(\sigma, s)}}{E_{t\leftarrow D}[\risk(t, s)|u(t, s) \geq 0]}\\
&\leq \frac{e^{-c -\ell(1, s)}}{\varepsilon_2 E_{t\leftarrow D}[1/yk|u(t, s) \geq 0, k_0(s) < k \leq 1]}\\
&= \frac{ye^{-c - 1 - \ln y + c}}{\varepsilon_2 E_{t\leftarrow D}[1/k|u(t, s) \geq 0, k_0(s) < k \leq 1]}\\
&\leq \frac{1}{e \varepsilon_2}
\end{align*}
The profit bound follow directly from \Cref{lemma:cost-1} which concludes the proof.
\end{proof}

\subsection{Approximation with High Efficiency Buyers}\label{sec:approx-high-efficiency}

In this section, we define the tools to proof \Cref{thm:approx} for the case where $\varepsilon_3 = O(1)$. We will construct $\FINE(s)$, \Cref{alg:skill}, which targets the population that purchases under $s$ and have high efficiency $k > k_h(s)$. To construct $\FINE(s)$, we will further define three additional transformations described in this section.

\begin{algorithm}
\caption{$\FINE(s)$}
\label{alg:skill}
\hspace*{\algorithmicindent} \textbf{Input: $\beta \in [0, 1], s = (y, c, p)$, $D$} \\

\begin{algorithmic}[1]
\STATE $\varepsilon = Pr_{t \leftarrow D}[k > k_h(s) | \pvalue(t, s) \geq p(s)]$
\IF {$c \leq 1$}
    \STATE Output $\INV(s, p, p/(p - c))$
\ENDIF
\IF {$\BLOWUP(s)$ is good} 
        \STATE Output $\BLOWUP \circ \INV(s, 1/2)$
        \label{alg-alpha-good}
\ELSE
        \IF {$\sigma \geq 2$}
                \STATE Output $\HEAVY(s)$ \label{alg-tail}
        \ELSE
        		\IF {$\exists x \in [e^{-c}, 1], \COST^3_x(s)$ is good}
            		\IF {$ye^{-c} < 2$}
            		\STATE Output $\COST^1(s, 1)$   \label{alg-cost}
                \ELSE
				    \STATE Output $\COST^3_x(s)$ \label{alg-x-good}
				\ENDIF
			\ELSE
				\STATE Output $\BLOWUP \circ \INV(s, 1/2)$ \label{alg-x-bad}
			\ENDIF
        \ENDIF
\ENDIF 
\end{algorithmic}
\end{algorithm}

As motivation consider \Cref{ex:lower-bound}. Ideally, we would like to make the inefficient buyer to stop to purchase. $\FINE(s)$ will first consider a blowup of the fines, $\BLOWUP(s)$ in \Cref{def:skillful-transformation}, dependent on the population that purchase under $s$. The blowup cannot be too high; otherwise, the utility of the efficient buyer decreases too much, hurting profit.

If the inefficient buyer is still willing to purchase after the blowup, it must be because the buyer value distribution has a heavy tail. In that case, we can derive the transformation $\HEAVY(s)$, \Cref{def:heavy-transformation}, that leverages the tail of the distribution to impose high security regulation directly on the product. 

\subsubsection{Preliminaries}\label{sec:skill-extra-transformation}

Let $G(x | \mathcal E) := Pr[k \leq x | \mathcal E]$ denote the cumulative distribution function of efficiency conditioned on event $\mathcal E$.  We will assume $\varepsilon = Pr_{t \leftarrow D}[k > k_h(s) | \pvalue(t, s) \geq p(s)]$. The parameter $\beta \in [0, 1]$ parameterize the fraction of the profit we are willing to compromise.

\subsubsection{Blowup}\label{sec:blowup}
\begin{definition}[Blowup transformation]\label{def:skillful-transformation}
We define,
\begin{equation}\label{eq:sigma}
    \sigma := \max \bigg \{1, G^{-1}(1-\beta | B(s))\bigg\}
\end{equation}
\begin{equation}
    \BLOWUP(s) := (qye^{c(\sigma - 1)}, 0, p-c)
\end{equation}
where
$$q := \inf\{x \geq 1 | Pr_{t \leftarrow D}[A(x y e^{\sigma - 1}, 0, p - c)] \leq Pr_{t \leftarrow D}[\pvalue(t, s) \geq p]\}$$
We define $y_{sk}(s)$ as the fine of $\BLOWUP(s)$,
\begin{align*}
    y_{sk}(s) := qye^{c(\sigma - 1)}
\end{align*}
\end{definition}

We define $\bar k(s)$ as the efficiency $k$ such that $\risk(\bar k(s), \BLOWUP(s)) = e^{-c}$,
$$\bar k(s) := \frac{e^{c}}{y_{sk}(s)}$$

By construction, if $\beta$ is big, then $\BLOWUP(s)$ will always provide good profit guarantees, \Cref{claim:blowup-profit}. For the externalities, we can also ensure $\BLOWUP(s)$ will provide good externality guarantees for the population that used to purchase under $s$ and had efficiency $k \geq \bar k(s)$, \Cref{claim:blowup-high-k}. This is because $y_{sk}(s) \geq y$; therefore, if $h^*(k, s) > 0$, we must have $\risk(k, \BLOWUP(s)) \leq \risk(k, s)$. If $h^*(k, s) = 0$, then we can have $\risk(k, \BLOWUP(s)) > \risk(k, s)$ if $k$ is small. $\bar k(s)$ precisely captures this phase change such that if $k \geq \bar k(s)$, then $\risk(k, \BLOWUP(s)) \leq \risk(k, s)$ and if $k < \bar k(s)$, $\risk(k, \BLOWUP(s)) > \risk(k, s)$.

To proof approximation bounds for $\BLOWUP(s)$ for the population where $k < \bar k(s)$, we first discuss the case where $\sigma \geq 2$. We discuss the case where $\sigma < 2$ on \Cref{sec:small-blowup}.

Assuming $\sigma$ is sufficiently large, $\BLOWUP(s)$ would still fail to provide good externality guarantees if the probability that a buyer with efficiency $k < \bar k(s)$ purchase under $\BLOWUP(s)$ is high. 

Unfortunately, for arbitrary distributions, we should not expect buyers with efficiency $k < \bar k(s)$ will have a small contribution to externalities under $\BLOWUP(s)$. However, we can define an upper bound on their externalities that would be sufficient to proof externality guarantees for $\BLOWUP(s)$, \Cref{claim:blowup-good}.

\begin{definition}[Good Blowup]\label{def:good-blowup}
$\BLOWUP(s)$ is good if buyers with efficiency at most $\bar k(s)$ give a small contribution to externalities:
\begin{equation}
\begin{split}
E_{t \leftarrow D}[&\risk(t, \BLOWUP(s)) \cdot \mathbb I(k \leq \bar k(s))| A(\BLOWUP(s))] \leq \useequation{good-blowup}
\end{split}
\end{equation}
\end{definition}

\paragraph{Composition}
Observe the dependence on $p - c$ on the externality bound which can be arbitrarily large. This can be easily be solved by applying a composition of $\BLOWUP$ with $\INV$. This is because \Cref{lemma:inv-transform} states the probability space of $s$ and $\INV(s, \cdot)$ is the same and by sacrificing a constant fraction of the profit we reduce externalities by a factor of $O\big(\frac{1}{p - c}\big)$.

\paragraph{Heavy Tail}
If $\BLOWUP(s)$ fails to push inefficient buyers out of the market (buyers with efficiency $k < \bar k(s)$ cause high externalities), it must be because the value distribution $D_v$ has a heavy tail. In that scenario, we define the policy $\HEAVY(s)$ that impose high regulation in the product and ensure constant approximation ratio, \Cref{claim:blowup-bad}.

\begin{equation}\label{def:heavy-transformation}
    \HEAVY(s) := \bigg(0, \frac{1}{2}p_{\HEAVY}, p_{\HEAVY}\bigg)
\end{equation}
where
\begin{equation}\label{eq:price-heavy}
    p_{\HEAVY} := \ell(\bar k(s), \BLOWUP(s)) = \bigg(\frac{(1+c)qye^{c\sigma}}{e^{2c}}\bigg)
\end{equation}

\subsubsection{When the Blowup is Small}\label{sec:small-blowup}
For the case where $\sigma < 2$, we might hope cost policies can still provide a good approximation since it suggests $D_k$ has a short tail. Next, we construct a family of cost policies $\mathcal H(s)$ such that if no policy $s \in \mathcal H(s)$ gives good profit guarantees, we can proof externality guarantees for $\BLOWUP(s)$, \Cref{claim:cost-3-bad-blowup}.

Under $\BLOWUP(s)$, in order for a buyer to have risk $x \in [e^{-c}, 1]$, she must draw efficiency $k = \frac{1}{y_{sk}(s) x}$ and observe $k \leq \bar k(s)$.

We define an alternative expression for the loss of $k$ under fine $y$ and cost $c = 0$ as function of its risk $x$:
\begin{equation}\label{eq:loss-risk}
\ell^\risk(x, y) := (\ln 1/x + 1)xy
\end{equation}
Let $H(x)$ be the probability a value is greater or equal to the loss $\ell^\risk(x, y_{sk}(s))$.
\begin{equation}\label{eq-sale-inv}
    H(x) := Pr_{t \leftarrow D}\bigg[v \geq \bigg(\ln \frac{1}{x} + 1\bigg)x y_{sk}(s)\bigg]
\end{equation}

We define the cost policy $\COST^3_x(s)$ where the probability of sale under $\COST^3_x(s)$ is equivalent to $H(x)$,
$$\COST^3_x(s) := \bigg(0, \ln y, \bigg(\ln\frac{1}{x} + 1\bigg)xy_{sk}(s)\bigg)$$
We define the family of cost policies $\mathcal H(s)$,
$$\mathcal H(s) := \{\COST^3_x(s) : x \in [e^{-c}, 1]\}$$
We can verify $Pr_{t \leftarrow D}[A(\COST^3_x(s))] = H(x)$,
\begin{align*}
    Pr_{t \leftarrow D}\bigg[&\pvalue(t, \COST^3_x(s)) \geq \bigg(\ln\frac{1}{x} + 1\bigg)xy_{sk}(s)\bigg] =  H(x)
\end{align*}
because $\ell(k, \COST^3_x(s)) = 0$ for all $k$.

Since $D_k$ has a short tail ($\sigma$ is small), all policies in $s \in \mathcal H(s)$ would give good externality guarantees when compared to $s$, \Cref{claim:cost-3-good}; however, the probability of sale might be too small to get good profit guarantees. Bellow, we define a sufficient condition to proof a policy in $\mathcal H$ would give good profit guarantees and the formal statement is given in \Cref{claim:cost-3-good}.
\begin{definition}[Good $\COST^3_x(s)$]\label{def:good-sale-invariant}
We define $\COST^3_x(s)$ as good if its probability of sale is bigger than
$$H(x) \geq \useequation{good-H(x)}$$
\end{definition}

If \Cref{def:good-sale-invariant} is not satisfied for any policy in $\mathcal H$, it implies the probability of sale to buyers with efficiency $k < \bar k(s)$ under $\BLOWUP(s)$ is bounded. This implies, we can directly bound the externalities contributed by $k < k(s)$ and proof externality guarantees for $\BLOWUP(s)$, \Cref{claim:cost-3-bad}.

\subsubsection{Proof of Approximation for High Efficiency Buyers}\label{sec:proof-approx-high-efficiency}
In this section, we bound the profit and the externality for the output of \Cref{alg:skill}.

\begin{claim}\label{claim:blowup-profit}
\[Pr_{t \leftarrow D}[A(\BLOWUP(s))] \geq \beta \varepsilon Pr_{t \leftarrow D}[u(t, s) \geq 0]\]
\[\rev_D(\BLOWUP(s)) \geq  \beta \varepsilon \rev_D(s)\]
\end{claim}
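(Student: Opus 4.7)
\textbf{Proof plan for Claim~\ref{claim:blowup-profit}.} The plan is to show that a large enough subpopulation that purchased under $s$ continues to purchase under $\BLOWUP(s)$. Concretely, I will (i) identify a subset of mass $\beta \varepsilon \cdot \Pr[A(s)]$ in $B(s)$, (ii) show that every type in this subset still purchases under the intermediate policy $s':=(ye^{c(\sigma-1)},0,p-c)$ (corresponding to $q=1$), and (iii) transfer this to $\BLOWUP(s)$ using the definition of $q$. Finally I read off the profit bound from the fact that the per-sale margin is $p-c$ for both $s$ and $\BLOWUP(s)$.

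\textbf{Step (i).} By definition of $\sigma$ as the $(1-\beta)$-quantile of $G(\cdot \mid B(s))$, we have $\Pr[k \geq \sigma \mid B(s)] \geq \beta$. Since $\Pr[B(s)] = \varepsilon \cdot \Pr[A(s)]$ by the definition of $\varepsilon$, this gives
\[
\Pr[B(s)\cap\{k\geq \sigma\}] \;\geq\; \beta\varepsilon\cdot \Pr[A(s)].
\]

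\textbf{Step (ii).} For any type $t=(v,k)$ in $B(s)\cap\{k\geq \sigma\}$, I claim $u(t,s')\geq 0$. By Observation~\ref{obs:monotone}, since $p'-c'=p-c$, it suffices to verify $\ell(k,s)+c \geq \ell(k,s')+0$ for such $k$. Since $k\geq k_h(s)\geq k_0(s)$ we have $h^*(k,s)>0$ and so $\ell(k,s)=(\ln(yk)-c+1)/k$; because $\sigma\geq 1$ and $yk\geq e^c$, also $ye^{c(\sigma-1)}k\geq 1$ and so $\ell(k,s')=(\ln(yk)+c(\sigma-1)+1)/k$. A direct subtraction yields
\[
\ell(k,s)+c-\ell(k,s') \;=\; \frac{c(k-\sigma)}{k}\;\geq\; 0,
\]
using $k\geq \sigma$. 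Thus every type in the subset from Step (i) continues to purchase under $s'$, giving $\Pr[A(s')]\geq \beta\varepsilon\Pr[A(s)]$.

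\textbf{Step (iii) and conclusion.} By the definition of $q$ as the infimum of $x\geq 1$ making $\Pr[A(xye^{c(\sigma-1)},0,p-c)]\leq \Pr[A(s)]$, there are two cases. If $q=1$, then $\BLOWUP(s)=s'$ and the previous step directly gives $\Pr[A(\BLOWUP(s))]\geq \beta\varepsilon\Pr[A(s)]$. If $q>1$, then for every $x<q$ the probability of purchase strictly exceeds $\Pr[A(s)]$, so by taking $x\uparrow q$ we get $\Pr[A(\BLOWUP(s))]\geq \Pr[A(s)] \geq \beta\varepsilon\Pr[A(s)]$ (since $\beta,\varepsilon\in[0,1]$). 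Because $\BLOWUP(s)$ has price $p-c$ and cost $0$, its per-sale margin equals that of $s$, so
\[
\rev_D(\BLOWUP(s)) = (p-c)\Pr[A(\BLOWUP(s))] \geq \beta\varepsilon(p-c)\Pr[A(s)] = \beta\varepsilon\rev_D(s).
\]
The main potential obstacle is the handling of atoms in $D$ when $q>1$: if the purchase probability jumps across the threshold, one must interpret the infimum by a limit argument (or by the assumed tie-breaking already invoked in the proof of Corollary~\ref{cor:transfer}) so that $\Pr[A(\BLOWUP(s))]$ coincides with the limit from below. Modulo this standard issue the inequalities in both bullets of the claim follow directly.
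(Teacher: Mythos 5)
Your proof is correct and takes essentially the same route as the paper's: isolate the sub-population $B(s)\cap\{k\geq\sigma\}$, which has mass at least $\beta\varepsilon\Pr_{t\leftarrow D}[u(t,s)\geq 0]$, show that every such type still has nonnegative utility after the blowup, and then read off the probability-of-sale and profit bounds from the fact that the per-sale margin is $p-c$ in both policies. The one place where you are more careful than the paper is the role of $q$: the paper's displayed computation of $\pvalue(t,\BLOWUP(s))$ works out to $v-\tfrac{1+\ln y+\ln k - c}{k}-\tfrac{c\sigma}{k}$, which is exactly the value under fine $ye^{c(\sigma-1)}$ (the $q=1$ instantiation) and silently omits the additional $-\tfrac{\ln q}{k}$ term in $\ell(k,\BLOWUP(s))$, so it does not by itself cover $q>1$. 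Your step~(iii) supplies precisely the missing argument: if $q>1$, the infimum in the definition of $q$ forces $\Pr[A(\BLOWUP(s))]\geq\Pr[A(s)]\geq\beta\varepsilon\Pr[A(s)]$, up to the standard tie-breaking at atoms. A minor note on step~(ii): citing Observation~\ref{obs:monotone} is a slight overreach since that statement concerns population-level profits, but the computation you actually carry out is a self-contained pointwise verification of $u(t,s')\geq u(t,s)$, and it is correct.
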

\begin{proof}
We will first bound the probability that a type $t$ that purchase under policy $s$ and has efficiency at least $k_h(s)$, also purchase under policy $s' = \BLOWUP(s)$. Given that $t$ purchase under policy $s$, we must have $p \leq \pvalue(t, s)$, then
\begin{align*}
Pr_{t\leftarrow D}&[\pvalue(t, s') \geq p - c| B(s)] \geq Pr_{t \leftarrow D}[\pvalue(t, s')\geq \pvalue(t, s) - c | B(s)]\\
&=Pr_{t \leftarrow D}[v - \frac{1 + \ln y + \ln k - c}{k} - \frac{c\sigma }{k}\geq v - \frac{1 + \ln y + \ln k - c}{k} - c | B(s)]\\
&= Pr_{t \leftarrow D}[k \geq \sigma | B(s)] = \beta
\end{align*}
In the last step, we use the fact $\sigma = G^{-1}(1 - \beta | B(s))$.

Next, we lower bound the probability of sale and the profit under policy $s'$.
\begin{align*}
Pr_{t \leftarrow D}[\pvalue(t, s') \geq p - c] &\geq Pr_{t \leftarrow D}[u(t, s) \geq 0]\\
&\cdot Pr_{t \leftarrow D}[k > k_h(s)|u(t, s) \geq 0]\\
&\cdot Pr_{t \leftarrow D}[\pvalue(t, s') \geq p - c | B(s)]\\
&\geq \beta\varepsilon Pr_{t \leftarrow D}[u(t, s) \geq 0]
\end{align*}
\begin{align*}
\rev_D(s') &=  (p - c) Pr_{t \leftarrow D}[\pvalue(t, s') \geq p - c]\\
&\geq \beta \varepsilon (p - c)Pr_{t \leftarrow D}[u(t, s) \geq 0]\\
&=  \beta \varepsilon \rev_D(s)
\end{align*}
which concludes the proof.
\end{proof}

\begin{claim}\label{claim:blowup-high-k}
\begin{align*}
E_{t \leftarrow D}[\risk(t, \BLOWUP(s)) \cdot \mathbb I(k > \bar k(s))| A(\BLOWUP(s))]\leq \frac{1}{\varepsilon \beta}E_{t \leftarrow D}[\risk(t, s) \cdot \mathbb I(k > \bar k(s)) | u(t, s) \geq 0]
\end{align*}
\end{claim}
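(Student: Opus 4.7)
The plan is to chain three ingredients: a pointwise risk-dominance step, the denominator bound of \Cref{claim:blowup-profit}, and a careful comparison of the remaining unconditional expectations.

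First I would establish the pointwise dominance $\risk(k, \BLOWUP(s)) \leq \risk(k, s)$ for every $k > \bar k(s)$. Since $q \geq 1$ and $\sigma \geq 1$ (the latter by the $\max$ in \eqref{eq:sigma}), we get $y_{sk}(s) = q y e^{c(\sigma - 1)} \geq y$. The definition $\bar k(s) = e^c / y_{sk}(s)$ then gives $\risk(k, \BLOWUP(s)) = 1/(k y_{sk}(s)) < e^{-c}$, and from $y_{sk}(s) \geq y$ we get $1/(k y_{sk}(s)) \leq 1/(yk)$, so $\risk(k, \BLOWUP(s)) \leq \min\{e^{-c}, 1/(yk)\} = \risk(k, s)$.

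Combining this with the denominator bound $\Pr[A(\BLOWUP(s))] \geq \beta \varepsilon \Pr[u(t,s) \geq 0]$ from \Cref{claim:blowup-profit} reduces the claim to showing
\[
E[\risk(t, s)\,\mathbb{I}(k > \bar k(s))\,\mathbb{I}(A(\BLOWUP(s)))] \;\leq\; E[\risk(t, s)\,\mathbb{I}(k > \bar k(s))\,\mathbb{I}(u(t,s) \geq 0)],
\]
after which dividing by $\beta \varepsilon \Pr[u(t,s) \geq 0]$ yields the desired inequality.

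This last step is the hard part, since $A(\BLOWUP(s)) \not\subseteq \{u(t,s) \geq 0\}$: a direct calculation gives $\ell(k, \BLOWUP(s)) - c - \ell(k, s) = (\ln q + c\sigma - ck)/k$, which is nonnegative for $k \leq \sigma + (\ln q)/c$ (so every $\BLOWUP(s)$-purchaser is also an $s$-purchaser) but negative for $k > \sigma + (\ln q)/c$ (where the lower price of $\BLOWUP(s)$ attracts ``new'' buyers absent from the $s$-market). To resolve this, I would decompose each side over $k$ using independence of $v$ and $k$, writing $\phi(k) := \Pr_v[v \geq p - c + \ell(k, \BLOWUP(s))]$ and $\psi(k) := \Pr_v[v \geq p + \ell(k, s)]$, and then exploit (i) the global inequality $E_k[\phi(k) - \psi(k)] \leq 0$ built into the infimum definition of $q$ (so that the total $\BLOWUP$-purchase mass does not exceed the $s$-purchase mass), and (ii) the monotone non-increase of $\risk(k, s)$ in $k$. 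Using $k_* := \sigma + (\ln q)/c$ as a pivot, the $\risk$-weighted ``new'' contribution (supported where $k > k_*$ and $\risk(k, s) \leq \risk(k_*, s)$) should be dominated by the $\risk$-weighted ``lost'' contribution (supported where $\bar k(s) < k \leq k_*$ and $\risk(k, s) \geq \risk(k_*, s)$), giving the bound.
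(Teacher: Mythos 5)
Your proposal assembles exactly the three ingredients the paper uses: (i) pointwise dominance $\risk(k,\BLOWUP(s)) \le \risk(k,s)$ for $k > \bar k(s)$, (ii) the denominator bound from \Cref{claim:blowup-profit}, and (iii) an exchange argument resting on the monotonicity of $g_{s,\BLOWUP(s)}$ (a consequence of \Cref{lemma:transfer-function-monotonicity}, since $y_{sk}(s)e^{-0} \ge y e^{-c}$) together with the fact that, by the choice of $q$, $\Pr_t[A(\BLOWUP(s))] \le \Pr_t[u(t,s)\ge 0]$. The paper writes this as a direct appeal to ``a similar argument'' to \Cref{cor:transfer}; your version, which first replaces $\risk(\cdot,\BLOWUP(s))$ by $\risk(\cdot,s)$ and then compares the purchase measures, is the same chain of inequalities in a slightly different order, so this is essentially the paper's proof.

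One caution about the step you yourself flag as hard, which is also the one place where the paper is terse. After the reduction you need
\begin{equation*}
E_k\bigl[\risk(k,s)\,\mathbb I(k>\bar k(s))\,(\phi(k)-\psi(k))\bigr]\;\le\;0,
\end{equation*}
and the exchange you sketch bounds the ``new'' contribution at $k\ge k_*$ by $\risk(k_*,s)\cdot\Pr[N]$ and the ``lost'' contribution on $(\bar k(s),k_*)$ from below by $\risk(k_*,s)\cdot\Pr\bigl[L\cap\{k>\bar k(s)\}\bigr]$, so you would need $\Pr[N]\le\Pr\bigl[L\cap\{k>\bar k(s)\}\bigr]$. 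Your cited fact, however, only gives $\Pr[N]\le\Pr[L]$, and $L$ can carry mass at $k\le\bar k(s)$ (buyers whose value falls in the widened window $[p+\ell(k,s),\,p-c+\ell(k,\BLOWUP(s)))$ at small $k$), which the $\mathbb I(k>\bar k(s))$ factor discards from the right-hand side. The paper's invocation of the \Cref{cor:transfer}-style exchange does not explain how this restricted comparison survives either, so any issue here is inherited rather than introduced by your proposal; but if you were to flesh this sketch into a full proof, this is the spot that needs an argument beyond the two facts you list.
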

\begin{proof}
Let $s' = \BLOWUP(s)$. If $k > \bar k(s)$, we can have either $h^*(k, s) > 0$ or $h^*(k, s) = 0$. In the first case, by definition of $\BLOWUP(s)$, $y' \geq y$, and together with $h^*(k, s) > 0$ implies $\risk(k, s') \leq \risk(k, s)$. If instead, $h^*(k, s) = 0$, $r(k, s) = e^{-c}$, but since $k > \bar k(s)$, $r(k, \BLOWUP(s)) \leq e^{-c}$. So in all cases, $r(k, s') \leq r(k, s)$.

Still by construction of $\BLOWUP(s)$, the probability $\pvalue(t, \BLOWUP(s)) \geq p - c$ is at most the probability $u(t, s) \geq 0$. By a similar argument in \Cref{cor:transfer}, $g_{s, s'}(k)$ is monotone decreasing which implies the set of types that start to purchase under $s'$ can only be more efficient than the set of types that stops to purchase. We can conclude
\begin{align*}
    E_{t \leftarrow D}[\risk(t, s') \cdot \mathbb I(k > \bar k(s), u(t, s') \geq 0)] \leq E_{t \leftarrow D}[\risk(t, s) \cdot \mathbb I(k > \bar k(s), u(t, s) \geq 0)]
\end{align*}
We can re-write the expectations as
\begin{align*}
E_{t \leftarrow D}[&\risk(t, s') \cdot \mathbb I(k > \bar k(s), u(t, s') \geq 0)] = E_{t \leftarrow D}[\risk(t, s')\cdot \mathbb I(k > \bar k(s)) | u(t, s') \geq 0] Pr_{t \leftarrow D}[u(t, s') \geq 0]
\end{align*}
By \Cref{claim:blowup-profit}, $Pr_{t \leftarrow D}[u(t, s') \geq 0] \geq \beta \varepsilon Pr_{t \leftarrow D}[u(t, s) \geq 0]$, then
\begin{align*}
    E_{t \leftarrow D}[\risk(t, s')\cdot \mathbb I(k > \bar k(s)) | u(t, s') \geq 0] \leq E_{t \leftarrow D}[\risk(t, s) \cdot \mathbb I(k > \bar k(s)) | u(t, s) \geq 0] \frac{Pr_{t \leftarrow D}[u(t, s) \geq 0]}{\beta \varepsilon Pr_{t \leftarrow D}[u(t, s) \geq 0]}
\end{align*}
which concludes the proof.
\end{proof}

\begin{claim}\label{claim:blowup-good}
If $\BLOWUP(s)$ is good, then
\begin{equation}
\ext_D(\BLOWUP(s)) \leq \bigg ( \frac{4 (p - c)}{(1-\beta)(1+c)\varepsilon} + \frac{1}{\beta\varepsilon} \bigg)\ext_D(s)
\end{equation}
\end{claim}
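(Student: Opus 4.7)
My plan is to decompose the externalities contribution of $\BLOWUP(s)$ according to whether the buyer's efficiency is below or above the threshold $\bar k(s)$, and then dispatch each piece with a result that is already available.

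Concretely, write $s' = \BLOWUP(s)$ and split
\begin{align*}
\ext_D(s') &= E_{t \leftarrow D}[\risk(t, s') \cdot \mathbb I(k \leq \bar k(s))\mid A(s')] \\
&\quad + E_{t \leftarrow D}[\risk(t, s') \cdot \mathbb I(k > \bar k(s))\mid A(s')].
\end{align*}
The first summand is exactly the quantity that the hypothesis ``$\BLOWUP(s)$ is good'' (\Cref{def:good-blowup}) bounds by $\useequation{good-blowup}$, so this term contributes at most $\frac{4(p-c)}{(1-\beta)(1+c)\varepsilon}\ext_D(s)$.

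For the second summand, I would invoke \Cref{claim:blowup-high-k} directly: it gives
\[
E_{t \leftarrow D}[\risk(t, s') \cdot \mathbb I(k > \bar k(s))\mid A(s')] \leq \frac{1}{\beta\varepsilon}\, E_{t \leftarrow D}[\risk(t, s) \cdot \mathbb I(k > \bar k(s))\mid u(t,s) \geq 0].
\]
Since $\risk(\cdot,s)\cdot\mathbb I(k>\bar k(s)) \leq \risk(\cdot,s)$ pointwise, the conditional expectation on the right is at most $E_{t\leftarrow D}[\risk(t,s)\mid u(t,s) \geq 0] = \ext_D(s)$. This yields a $\frac{1}{\beta\varepsilon}\ext_D(s)$ bound for the second summand. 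Summing the two bounds reproduces the claimed inequality.

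There is essentially no obstacle here, since both of the key ingredients (\Cref{claim:blowup-high-k} and \Cref{def:good-blowup}) have been set up precisely to make this split clean; the only small thing to be careful about is that the ``good'' condition is already phrased as a bound on a conditional expectation under event $A(\BLOWUP(s))$, which matches exactly what appears in the decomposition, so no renormalization factor is lost. The whole argument is thus a short two-line calculation after the decomposition.
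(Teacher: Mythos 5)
Your proposal is correct and matches the paper's own proof: it too combines \Cref{def:good-blowup} for the $k \leq \bar k(s)$ contribution with \Cref{claim:blowup-high-k} for the $k > \bar k(s)$ contribution, after decomposing $\ext_D(\BLOWUP(s))$ by the indicator on $\bar k(s)$. You simply spell out the (easy) final step $E_{t\leftarrow D}[\risk(t,s)\cdot\mathbb I(k>\bar k(s))\mid u(t,s)\geq 0]\leq \ext_D(s)$, which the paper leaves implicit.
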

\begin{proof}
By combining Definition \ref{def:good-blowup}, Claim \ref{claim:blowup-high-k} that bounds the expected risk of buyers with efficiency in the intervals $[0, \bar k(s)]$, $(\bar k(s), \infty)$ respectively, the result follows.
\end{proof}

\begin{claim}\label{claim:blowup-bad}
If $\BLOWUP(s)$ is bad and $\sigma \geq 2$, $c \geq 1$, then 
$$\rev_D(\HEAVY(s)) \geq  \beta \varepsilon \rev_D(s)$$
$$\ext_D(\HEAVY(s)) \leq \frac{4}{(1-\beta)(c + 1)\varepsilon}\ext_D(s)$$
\end{claim}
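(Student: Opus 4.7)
The policy $\HEAVY(s) = (0, p_{\HEAVY}/2, p_{\HEAVY})$ is a cost policy, so every buyer has $\ell(k, \HEAVY(s)) = 0$ and $\risk(k, \HEAVY(s)) = e^{-p_{\HEAVY}/2}$; a buyer purchases iff $v \geq p_{\HEAVY}$, and thus $\ext_D(\HEAVY(s)) = e^{-p_{\HEAVY}/2}$. The strategy is to use the bad-blowup hypothesis to produce enough high-value buyers for the profit bound, and to exploit the hypotheses $c \geq 1$ and $\sigma \geq 2$ to make $p_{\HEAVY}$ large enough for the externality bound.

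\textbf{Profit.} The key structural observation is that by Observation~\ref{obs:monotone3}, $\ell(\cdot, \BLOWUP(s))$ is nonincreasing in $k$. Hence any buyer $(v, k)$ with $k \leq \bar k(s)$ who purchases under $\BLOWUP(s)$ satisfies $v \geq (p - c) + \ell(k, \BLOWUP(s)) \geq (p - c) + \ell(\bar k(s), \BLOWUP(s)) = (p - c) + p_{\HEAVY} \geq p_{\HEAVY}$ (using $p \geq c$), and therefore also purchases under $\HEAVY(s)$. This gives $\Pr[v \geq p_{\HEAVY}] \geq \Pr[k \leq \bar k(s),\, A(\BLOWUP(s))]$. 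Since $\risk(\cdot, \BLOWUP(s)) \leq 1$ pointwise, the bad-blowup hypothesis forces
\[
\Pr[k \leq \bar k(s) \mid A(\BLOWUP(s))] \;\geq\; E_{t \leftarrow D}[\risk(t, \BLOWUP(s)) \cdot \mathbb I(k \leq \bar k(s)) \mid A(\BLOWUP(s))] \;>\; \frac{4(p-c)}{(1-\beta)(1+c)\varepsilon}\ext_D(s).
\]
Multiplying by the bound $\Pr[A(\BLOWUP(s))] \geq \beta\varepsilon \Pr[u(t,s) \geq 0]$ from Claim~\ref{claim:blowup-profit} and by the per-sale profit $p_{\HEAVY}/2$ yields
\[
\rev_D(\HEAVY(s)) \;\geq\; \frac{2\beta\, p_{\HEAVY}\, \ext_D(s)}{(1-\beta)(1+c)} \cdot \rev_D(s),
\]
so the bound $\rev_D(\HEAVY(s)) \geq \beta\varepsilon\rev_D(s)$ reduces to $p_{\HEAVY}\,\ext_D(s) \geq (1-\beta)(1+c)\varepsilon/2$.

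\textbf{Externality.} We must show $e^{-p_{\HEAVY}/2} \leq \frac{4}{(1-\beta)(c+1)\varepsilon}\ext_D(s)$. To lower-bound $\ext_D(s)$, observe that by the definition of $\sigma$ the purchasers with $k_h(s) < k < \sigma$ have conditional mass exactly $(1-\beta)\varepsilon$, and for each such buyer $\risk(k, s) = 1/(yk) \geq 1/(y\sigma)$, giving $\ext_D(s) \geq (1-\beta)\varepsilon/(y\sigma)$. Substituting reduces the desired inequality to $p_{\HEAVY}/2 \geq \ln((c+1)y\sigma/4)$, and since $p_{\HEAVY} = (1+c)qye^{c(\sigma-2)}$ grows at least linearly in $y$ and exponentially in $\sigma$ while the logarithmic right-hand side grows far more slowly, this holds for $c \geq 1, \sigma \geq 2, q \geq 1$. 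The same lower bound on $p_{\HEAVY}$ also delivers the profit requirement $p_{\HEAVY}\ext_D(s) \geq (1-\beta)(1+c)\varepsilon/2$.

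\textbf{Main obstacle.} The delicate step is reconciling both inequalities simultaneously: the bad-blowup hypothesis itself forces $\ext_D(s) \leq \frac{(1-\beta)(1+c)\varepsilon}{4(p-c)}$ (since the expectation in its definition is at most $1$), so $\ext_D(s)$ can be quite small, which in turn forces $p_{\HEAVY}$ to be correspondingly large. The hypotheses $c \geq 1$ and $\sigma \geq 2$ provide the needed slack in $p_{\HEAVY} = (1+c)qye^{c(\sigma-2)}$, but carefully tracking the constants through both chains of inequalities---and handling edge cases where $\sigma$ nearly coincides with $k_h(s)$ or where $y$ is very small---is the most technically involved part.
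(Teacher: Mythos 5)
Your structure matches the paper's: both use the bad--blowup hypothesis (plus $\risk \leq 1$) to lower-bound the mass of purchasers with $k \leq \bar k(s)$, both observe that this entire mass purchases under $\HEAVY(s)$, both invoke the bound $\ext_D(s) \geq \frac{(1-\beta)\varepsilon}{y\sigma}$ (the paper's Proposition~\ref{prop:heavy}), and both deploy $c \geq 1$, $\sigma \geq 2$, $q \geq 1$ to make $p_{\HEAVY}$ large. Your derivation of the sale probability via monotonicity of $\ell(\cdot, \BLOWUP(s))$ in $k$ is a cleaner version of the paper's Bayes argument, and the reduction of both bounds to a single inequality on $p_{\HEAVY}$ is a nice reorganization.

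The gap is in the last stretch, where you wave at growth rates instead of computing. The claim that $p_{\HEAVY}/2 \geq \ln\bigl((c+1)y\sigma/4\bigr)$ ``because the LHS is linear in $y$ and the RHS logarithmic'' is not an argument: for a fixed slope $a$, the function $ay - \ln y$ has a positive minimum at $y = 1/a$, and you must check that this minimum value, $1 + \ln a$, dominates the additive constant in the logarithm. Tracing through, this is exactly where $e^\sigma/\sigma \geq e^2/2$ (valid for $\sigma \geq 2$) enters, and it is precisely the step the paper's proof carries out explicitly. Worse, the sentence ``the same lower bound on $p_{\HEAVY}$ also delivers the profit requirement'' is a non sequitur as written: the externality side reduced to a \emph{logarithmic} lower bound on $p_{\HEAVY}$, while the profit side needs the \emph{linear} bound $p_{\HEAVY} \geq (1+c)y\sigma/2$; the first does not imply the second. (It happens that $(1+c)qye^{c(\sigma-2)} \geq (1+c)y\sigma/2$ follows from $e^{\sigma-2} \geq \sigma/2$ for $\sigma \geq 2$, so your conclusion is true, but the stated implication is not.) To make this proof complete you need to state and verify both target inequalities on $p_{\HEAVY}$ separately, each using $c \geq 1$ and $e^{\sigma-2}/\sigma \geq 1/2$ for $\sigma \geq 2$.
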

\begin{proof}
Let $n = \frac{4 (p - c)}{(1-\beta)(1+c)\varepsilon}$, $s' = \BLOWUP(s)$. We first bound the probability of sale of $\HEAVY(s)$.

By \Cref{eq:price-heavy}, $p_{\HEAVY} = \ell(\bar k(s), s')$ which implies
\begin{align*}
Pr_{t \leftarrow D}[v \geq \ell(\bar k(s), s') + p(s')] &\leq Pr_{t \leftarrow D}[v \geq \ell(\bar k(s), s')]\\
&\leq Pr_{t \leftarrow D}[u(t, \HEAVY(s)) \geq 0]
\end{align*}
where the last inequality follows from $\pvalue(t, \HEAVY(s)) = v$ since $\HEAVY(s)$ is a cost policy.

For a fixed price $p$, the probability of sale must decrease as we decrease efficiency, then
\begin{align*}
Pr_{t \leftarrow D}[A(\HEAVY(s))] &\geq Pr_{t \leftarrow D}[v \geq \ell(\bar k(s), s') + p(s')]\\
&\geq Pr_{t \leftarrow D}[u(t, s') \geq 0| k \leq \bar k(s)]&\\
&= \frac{Pr_{t \leftarrow D}[k \leq \bar k(s) | u(t, s') \geq 0]Pr_{t \leftarrow D}[u(t, s') \geq 0]}{Pr_{t \leftarrow D}[k \leq \bar k(s)]}
\end{align*}
where the equality comes from Bayes' theorem.

Next, we bound $Pr_{t \leftarrow D}[k \leq \bar k(s) | u(t, s') \geq 0]$. The inequality comes from the fact $\risk(t, s') \leq 1$,
\begin{align*}
E_{t \leftarrow D}[\risk(t, s') \cdot \mathbb I(k \leq \bar k(s)) | u(t, s') \geq 0] &\leq E_{t \leftarrow D}[\mathbb I(k \leq \bar k(s)) | u(t, s') \geq 0]\\
&= Pr_{t \leftarrow D}[k \leq \bar k(s) | u(t, s') \geq 0]
\end{align*}
This implies,
\begin{align*}
    Pr_{t \leftarrow D}[A(\HEAVY(s))] &\geq Pr_{t \leftarrow D}[k \leq \bar k(s) | u(t, s') \geq 0]Pr_{t \leftarrow D}[u(t, s') \geq 0]\\
    &\geq E_{t \leftarrow D}[\risk(t, s') \cdot \mathbb I(k \leq \bar k(s)) | u(t, s') \geq 0]Pr_{t \leftarrow D}[u(t, s') \geq 0]
\end{align*}
Using the assumption that $s'$ is bad, and \Cref{claim:blowup-profit} to lower bound $Pr_{t \leftarrow D}[u(t, s') \geq 0]$, we get \Cref{eq:lower-bound-heavy-sale}
\begin{equation}\label{eq:lower-bound-heavy-sale}
\begin{split}
    Pr_{t \leftarrow D}[u(t, \HEAVY(s)) \geq 0] \geq \beta\varepsilon n Pr_{t \leftarrow D}[u(t, s) \geq 0] \ext_D(s)
\end{split}
\end{equation}
Next, we lower bound the profit of $\HEAVY(s)$. In the first equality, we use the definition of $p_\HEAVY$, \Cref{eq:price-heavy}. In the first inequality, we lower bound the probabily of sale of $\HEAVY(s)$, \Cref{eq:lower-bound-heavy-sale}. In the second inequality, we use the fact $\rev_D(s) = (p - c) Pr_{t \leftarrow D}[u(t, s) \geq 0]$.
\begin{align*}
\rev(\HEAVY(s)) &=  \frac{p_{\HEAVY}}{2} Pr_{t \leftarrow D}[v \geq p_{\HEAVY}]\\
&= \frac{(1+c)qye^{c\sigma}}{2e^{2c}}Pr_{t \leftarrow D}[v \geq p_{\HEAVY}]\\
&\geq \frac{(1+c)ye^{c\sigma}}{2e^{2c}}\beta\varepsilon n Pr_{t \leftarrow D}[u(t, s) \geq 0] \ext_D(s)\\
&\geq \beta\varepsilon n\bigg(\frac{(1+c)ye^{c\sigma}}{2(p - c)e^{2c}} \bigg)\ext_D(s)\rev_D(s)
\end{align*}
Next, we bound $E_{t \leftarrow D}[1/k|B(s)]$ and $\ext_D(s)$,
\begin{proposition}\label{prop:heavy}
\begin{align*}
E_{t \leftarrow D}[1/k | u(t, s) \geq 0, k > k_h(s)] \geq \frac{1-\beta}{\sigma}
\end{align*}
\begin{align*}
\ext_D(s) \geq \frac{\varepsilon (1-\beta)}{y\sigma}
\end{align*}
\end{proposition}
\begin{proof}
\begin{align*}
E_{t \leftarrow D}[1/k | B(s)] &\geq Pr_{t \leftarrow D}[k \leq \sigma | B(s)] E_{t \leftarrow D}[1 / k | u(t, s) \geq 0, k \in (k_h(s), \sigma]]\\
&\geq \frac{(1-\beta)}{\sigma}
\end{align*}
By definition of $\ext_D(s)$, and the previous bound
\begin{align*}
    \ext_D(s) &\geq \varepsilon E_{t \leftarrow D}[1/yk | B(s)]\\
    &\geq \frac{\varepsilon(1-\beta)}{y\sigma}
\end{align*}
\end{proof}
By \Cref{prop:heavy},
\begin{align*}
\rev(\HEAVY(s))&\geq \frac{\beta\varepsilon^2 n }{2}\bigg(\frac{(1-\beta)(1+c)e^{c\sigma}}{\sigma(p-c)e^{2c}}\bigg)\rev_D(s)\\
&= \frac{\beta\varepsilon^2 n}{2}\bigg(\frac{(1-\beta)(1+c)e^{2c + (c-1)(\sigma-2) +(\sigma -2 )}}{\sigma(p - c)e^{2c}}\bigg)\rev_D(s)
\end{align*}
Using the fact $c \geq 1$, $\sigma \geq 2$, $e^\sigma/\sigma \geq e^2/2$ and substituting for $n$,
\begin{align*}
\rev(\HEAVY(s)) &\geq \frac{(1-\beta)\beta \varepsilon^2 n (1+c)e^{\sigma}}{\sigma 2 e^2 (p-c)}\rev_D(s)\\
&\geq \frac{\beta \varepsilon^2 n (1+c)}{2 e^2 (p-c) }\frac{(1-\beta)e^\sigma}{\sigma}\rev_D(s)\\
&\geq \beta\varepsilon \rev_D(s)
\end{align*}

Similarly, we bound the externalities,
\begin{align*}
\ext_D(\HEAVY(s)) &= e^{-p_{\HEAVY}/2}\\
&\leq e^{-\frac{(1+c)ye^{c\sigma}}{2e^{2c}}}\\
&= e^{-\frac{(1+c)ye^{2c + c(\sigma-2)}}{2e^{2c}}}\\
&\leq e^{-\frac{(1+c)ye^{c(\sigma-2)}}{2}}\\
&< \frac{2}{(1+c)ye^{c(\sigma-2)}}\\
&\leq \frac{2}{(1+c)ye^{(c-1)(\sigma - 2) + (\sigma - 2) }}
\end{align*}
Using the fact $c \geq 1$, $\sigma \geq 2$, $e^\sigma/\sigma \geq e^2/2$ and dividing and multiplying by $E_{t \leftarrow D}[1/k|B(s)]$,
\begin{align*}
\ext_D(\HEAVY(s)) &\leq \frac{2 e^2 \varepsilon E_{t \leftarrow D}[1/k|B(s)]}{(c + 1)y E_{t \leftarrow D}[e^{\sigma}/k|B(s)]}\\
&\leq \frac{4e^2}{(1-\beta)(c + 1)\varepsilon e^2}\ext_D(s)
\end{align*}
\end{proof}
where the second inequality follows from \Cref{prop:heavy}.

\begin{claim}\label{claim:fine-cost-1}
If $ye^{-c} < 2$, $\sigma < 2$ then
\[\rev_D(\COST^1(s, 1-\delta)) = (1-\delta)\rev_D(s)\]
\[\ext_D(\COST^1(s, 1 - \delta)) \leq \frac{4}{\varepsilon(1-\beta)} \ext_D(s)\]
\end{claim}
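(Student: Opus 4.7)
The profit equality is immediate from Lemma~\ref{lemma:cost-1}: applied with parameter $1-\delta$, the transformation $\COST^1$ preserves the per-sale profit $p - c$ and scales the probability of sale by exactly $1-\delta$, giving $\rev_D(\COST^1(s, 1-\delta)) = (1-\delta)\rev_D(s)$. So essentially all of the work lies in the externality bound.

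For the externality, I would begin by invoking the second clause of Lemma~\ref{lemma:cost-1} to write
\[
    \ext_D(\COST^1(s, 1-\delta)) \;=\; \frac{e^{-c - \ell(\sigma', s)}}{E_{t \leftarrow D}[\risk(t,s) \mid u(t,s) \geq 0]} \cdot \ext_D(s),
\]
where $\sigma'$ denotes the efficiency chosen in the definition of $\COST^1(s, 1-\delta)$ in \eqref{eq:cost-1} (not to be confused with the $\sigma$ from $\BLOWUP$). Because $\ell(\sigma',s) \geq 0$ always, the hypothesis $ye^{-c} < 2$ immediately gives the crude upper bound $e^{-c - \ell(\sigma',s)} \leq e^{-c} < 2/y$ on the numerator. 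The rest of the proof is therefore to lower bound the denominator, which equals $\ext_D(s)$, by $\varepsilon(1-\beta)/(2y)$.

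To obtain this lower bound I would use the second hypothesis $\sigma < 2$, where $\sigma = \max\{1, G^{-1}(1-\beta \mid B(s))\}$. This forces $G^{-1}(1-\beta \mid B(s)) < 2$, hence $\Pr[k \leq 2 \mid B(s)] \geq 1-\beta$. On the event $B(s)$ a buyer has $k > k_h(s) \geq \max\{1, k_0(s)\}$, so $h^\ast(k,s) > 0$ and $\risk(k,s) = 1/(yk)$; in particular, whenever $k \in (k_h(s), 2]$ we have $\risk(k,s) \geq 1/(2y)$. Combining these two facts yields $E[\risk(t,s) \mid B(s)] \geq (1-\beta)/(2y)$, and then averaging against the definition $\varepsilon = \Pr[B(s) \mid u(t,s) \geq 0]$ gives $\ext_D(s) \geq \varepsilon(1-\beta)/(2y)$.

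Dividing the two bounds produces
\[
    \frac{\ext_D(\COST^1(s, 1-\delta))}{\ext_D(s)} \;\leq\; \frac{2/y}{\varepsilon(1-\beta)/(2y)} \;=\; \frac{4}{\varepsilon(1-\beta)},
\]
as claimed. The main obstacle is conceptual rather than technical: one needs to see why the two hypotheses $ye^{-c} < 2$ and $\sigma < 2$ dovetail so that the factor $1/y$ cancels. The first hypothesis caps the new default-security externality $e^{-c}$ at $O(1/y)$; the second forces a nontrivial $(1-\beta)$-fraction of the high-efficiency purchasers in $B(s)$ to contribute individual risk $\Omega(1/y)$ under $s$. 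Either hypothesis alone would leave $y$ uncontrolled in the ratio, but in tandem they collapse to the desired $O(1/(\varepsilon(1-\beta)))$ approximation.
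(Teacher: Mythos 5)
Your proof is correct and follows essentially the same route as the paper's. The paper packages the denominator bound as $E_{t \leftarrow D}[1/yk \cdot \mathbb{I}(k > k_h) \mid u(t,s) \geq 0] > \frac{\varepsilon e^{-c}}{2} \cdot \frac{1-\beta}{\sigma} > \frac{\varepsilon e^{-c}(1-\beta)}{4}$ by invoking Proposition~\ref{prop:heavy} directly, whereas you unfold that proposition (observing $\Pr[k \leq 2 \mid B(s)] \geq 1-\beta$ from $\sigma < 2$ and using $1/(yk) \geq 1/(2y)$ on that event), but the two chains of inequalities are the same and both isolate the same $2 \times 2 = 4$ factor from the two hypotheses.
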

\begin{proof}
We will proof $ye^{-c} < 2$ implies $E_{t \leftarrow D}[1/yk \cdot \mathbb I(k > k_h)| u(t, s) \geq 0] \geq \frac{e^{-c}\varepsilon(1-\beta)}{4}$, and the statement follows directly by \Cref{lemma:cost-1}.

\begin{align*}
    E_{t \leftarrow D}[1/yk \cdot \mathbb I(k > k_h)| u(t, s) \geq 0] &> \frac{\varepsilon e^{-c}}{2}E_{t \leftarrow D}[1/k| u(t, s) \geq 0, k > k_h]\\
    &\geq \frac{\varepsilon e^{-c} (1 - \beta)}{2 \sigma}\\
    &> \frac{\varepsilon e^{-c} (1 - \beta)}{4}
\end{align*}
where the second inequality follows by \Cref{prop:heavy}.
\end{proof}

\begin{claim}\label{claim:cost-3-good}
Assume $ye^{-c} \geq 2$, and $\sigma \leq 2$. If for some $x \in [e^{-c}, 1]$, $\COST^3_x(s)$ is good, then
\[\rev_D(\COST^3_x(s)) \geq  \beta \varepsilon \rev_D(s)\]
\[\ext_D(\COST^3_x(s)) \leq \frac{2}{\varepsilon (1-\beta)} \ext_D(s)\]
\end{claim}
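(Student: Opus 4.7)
The plan is to handle the two bounds separately: the externality bound will follow directly from Proposition~\ref{prop:heavy}, and the profit bound will follow by invoking the goodness condition after an auxiliary calculation controlling the production cost of $\COST^3_x(s)$ against its nominal price.

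For the externality bound, I would begin by noting that $\COST^3_x(s)$ is a cost policy with cost parameter $\ln y$, so $\risk(k, \COST^3_x(s)) = e^{-\ln y} = 1/y$ uniformly across $k$, hence $\ext_D(\COST^3_x(s)) = 1/y$. Proposition~\ref{prop:heavy} yields $\ext_D(s) \geq \varepsilon(1-\beta)/(y\sigma)$, which combined with the hypothesis $\sigma \leq 2$ gives $\ext_D(s) \geq \varepsilon(1-\beta)/(2y)$. Dividing these two yields $\ext_D(\COST^3_x(s))/\ext_D(s) \leq 2/(\varepsilon(1-\beta))$ as required.

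For the profit bound, I would start from the closed form
\begin{align*}
\rev_D(\COST^3_x(s)) = \bigl((\ln(1/x)+1)\,x\, y_{sk}(s) - \ln y\bigr)\cdot H(x),
\end{align*}
and then apply the goodness lower bound $H(x) \geq 2\beta\varepsilon\rev_D(s)/((\ln(1/x)+1)\,x\,y_{sk}(s))$. This reduces the desired profit bound to the single inequality $(\ln(1/x)+1)\,x\,y_{sk}(s) \geq 2\ln y$, which guarantees the cost term $\ln y$ absorbs at most half of the margin provided by the goodness condition. To prove this inequality, I would combine three observations: (i)~the function $f(x) = (\ln(1/x)+1)x$ has derivative $\ln(1/x) \geq 0$ on $(0,1]$, so it is minimized over $[e^{-c},1]$ at $x = e^{-c}$ with value $(c+1)e^{-c}$; (ii)~$y_{sk}(s) = qye^{c(\sigma-1)} \geq y$ since $q \geq 1$ and $\sigma \geq 1$; and (iii)~parametrizing $y = 2\lambda e^c$ with $\lambda \geq 1$ (which is precisely $ye^{-c}\geq 2$), the desired inequality reduces to $\lambda(c+1) \geq \ln 2 + \ln\lambda + c$. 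This last inequality holds at $\lambda = 1$ (since $1 > \ln 2$), and its LHS has $\lambda$-derivative $c+1 \geq 1 \geq 1/\lambda$, which dominates the $\lambda$-derivative of the RHS for all $\lambda \geq 1$ and $c \geq 0$.

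The main subtlety is the mismatch between the goodness condition (which bounds $H(x)$ in terms of the nominal price $(\ln(1/x)+1)\,x\,y_{sk}(s)$) and the profit-per-sale (which is the nominal price minus $\ln y$). The hypothesis $ye^{-c} \geq 2$ is exactly what ensures the nominal price dominates $2\ln y$ uniformly over the feasible range $x \in [e^{-c}, 1]$; without this hypothesis, at the minimum-value choice $x = e^{-c}$ the nominal price $(c+1)e^{-c}y$ could be smaller than $2\ln y$ for moderately large $c$, and the reduction would fail. This is presumably the reason the preceding case (the $ye^{-c}<2$ branch handled in Claim~\ref{claim:fine-cost-1}) is necessary as a separate argument.
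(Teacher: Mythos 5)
Your proposal is correct and follows essentially the same line as the paper's proof: identical externality bound via Proposition~\ref{prop:heavy} and $\sigma\le 2$, and the same profit reduction to showing $(\ln(1/x)+1)x\,y_{sk}(s)\ge 2\ln y$ before feeding in the goodness bound. The only difference is cosmetic: you verify $(c+1)e^{-c}y\ge 2\ln y$ by the $\lambda=ye^{-c}/2$ parametrization and a one-variable monotonicity check, whereas the paper passes through the intermediate $(c+1)ye^{-c}\ge 2c+ye^{-c}\ge 2\ln y$; both are equivalent elementary estimates.
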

\begin{proof}
To bound $\bigg(\ln \frac{1}{x} + 1\bigg)x$ when $x \in [e^{-c}, 1]$, note it achieves its minimum value when $x = e^{-c}$ which implies $\bigg (\ln \frac 1 x + 1\bigg )x \geq (c+1)e^{-c}$. We can then bound the net profit of policy $\COST^3_x(s)$. More precisely, we claim $p_{\COST^3_x(s)} - \ln y \geq \frac{\bigg( \ln \frac{1}{x} + 1 \bigg)x y_{sk}(s)}{2}$. By definition $p_{\COST^3_x(s)} = \bigg (\ln \frac 1 x + 1\bigg )xy_{sk}(s)$ and we will show $p_{\COST^3_x(s)} \geq 2 \ln y$.
\begin{align*}
    \bigg (\ln \frac 1 x + 1\bigg )xy_{sk}(s) &\geq (c + 1)e^{-c} ye^{-c}e^{c\sigma}\\
    &\geq (c + 1)ye^{-c}\\
    &\geq 2c + ye^{-c}&\text{by the fact $ye^{-c} \geq 2$}
\end{align*}
Write $\ln y = \ln ye^{-c} + c$ and observe $ye^{-c} \geq 2 \ln ye^{-c}$ by the fact $ye^{-c} \geq 2$. We can conclude $2c + ye^{-c} \geq 2 (c + \ln ye^{-c}) = 2 \ln y$ which implies $p_{\COST^3_x(s)} - \ln y \geq \frac{\bigg(\ln \frac 1 x + 1 \bigg)xy_{sk}(s)}{2}$. 

By the fact $\COST^3_x(s)$ is good, $H(x) \geq \useequation{good-H(x)}$, then
\begin{align*}
\rev_D(\COST^3_x(s)) &= (p_{\COST^3_x(s)} - \ln y) H(x)\\
&\geq \frac{\bigg(\ln \frac 1 x + 1 \bigg)xy_{sk}(s)}{2} \frac{2\beta\varepsilon \rev_D(s)}{\bigg(\ln \frac 1 x + 1\bigg )xy_{sk}(s)}\\
&\geq \beta \varepsilon \rev_D(s)
\end{align*}
Next, we bound the externalities,
\begin{align*}
\frac{\ext_D(\COST^3_x(s))}{\ext_D(s)} &= \frac{1}{y \ext_D(s)}\\
&\leq \frac{y \sigma}{y \varepsilon (1-\beta)}\\
&< \frac{2}{\varepsilon (1-\beta)}
\end{align*}
where the first inequality follows from \Cref{prop:heavy} and the second inequality follows from $\sigma < 2$.
\end{proof}

\begin{claim}\label{claim:cost-3-bad}
If $\sigma \leq 2$, and $\forall x \in [e^{-c}, 1]$, $\COST^3_x(s)$ is bad, then
\begin{align*}
    E_{t \leftarrow D}[\risk(t, \BLOWUP(s)) \cdot \mathbb I(k \leq \bar k(s))|A(\BLOWUP(s))] \leq \frac{4 (p - c) }{\varepsilon(1-\beta)}\ext_D(s)
\end{align*}
\end{claim}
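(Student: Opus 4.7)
The plan is to decompose the unconditional expectation $E_{t \leftarrow D}[\risk(t, \BLOWUP(s)) \mathbb{I}(k \leq \bar k(s)) \mathbb{I}(A(\BLOWUP(s)))]$ into two ranges of $k$: the zero-effort range $k \in [0, 1/y_{sk}(s)]$, where $\risk(k, \BLOWUP(s)) = 1$ and the loss equals $y_{sk}(s) = \ell^\risk(1, y_{sk}(s))$; and the positive-effort range $k \in (1/y_{sk}(s), \bar k(s)]$, where the risk equals $x := 1/(y_{sk}(s) k) \in [e^{-c}, 1)$ and the loss is exactly $\ell^\risk(x, y_{sk}(s))$. The key observation is that in both ranges, the loss under $\BLOWUP(s)$ coincides with the price that $\COST^3_x(s)$ charges every buyer, so by the independence of $v$ and $k$ the conditional probability of purchasing under $\BLOWUP(s)$ given $k$ is at most $\Pr[v \geq \ell^\risk(x, y_{sk}(s))] = H(x)$.

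Next I would apply the hypothesis that every $\COST^3_x(s)$ with $x \in [e^{-c}, 1]$ is bad, i.e., $H(x) < \frac{2\beta\varepsilon \rev_D(s)}{(\ln(1/x) + 1)\, x\, y_{sk}(s)}$, pointwise in $k$. The central algebraic cancellation happens in the positive-effort range: the integrand $\risk(k, \BLOWUP(s)) \cdot H(x) = \frac{H(x)}{y_{sk}(s)\, k}$, combined with the identity $y_{sk}(s)\, x\, k = 1$ (which is immediate from $x = 1/(y_{sk}(s) k)$), makes the $y_{sk}(s)\, x$ in the denominator of the $H$ bound cancel with $y_{sk}(s)\, k$ in the risk, leaving the $k$-independent bound $\frac{2\beta\varepsilon \rev_D(s)}{y_{sk}(s)(\ln(1/x)+1)} \leq \frac{2\beta\varepsilon \rev_D(s)}{y_{sk}(s)}$. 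The zero-effort range gives the same pointwise bound via $H(1) \leq 2\beta\varepsilon \rev_D(s)/y_{sk}(s)$. Integrating both ranges against $dF_k$ over $k \leq \bar k(s)$ produces the unconditional bound $\frac{2\beta\varepsilon \rev_D(s)}{y_{sk}(s)}$.

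Finally, dividing by $\Pr[A(\BLOWUP(s))] \geq \beta\varepsilon \Pr[u(t, s) \geq 0]$ (Claim~\ref{claim:blowup-profit}) cancels the $\beta\varepsilon$ factor and, using $\rev_D(s) = (p-c)\Pr[u(t, s) \geq 0]$, gives the conditional bound $\frac{2(p-c)}{y_{sk}(s)} \leq \frac{2(p-c)}{y}$, where the last step uses $y_{sk}(s) = qye^{c(\sigma-1)} \geq y$ (since $q \geq 1$, $\sigma \geq 1$, and $c \geq 0$). The hypothesis $\sigma \leq 2$ then combines with Proposition~\ref{prop:heavy} to yield $\ext_D(s) \geq \frac{\varepsilon(1-\beta)}{y\sigma} \geq \frac{\varepsilon(1-\beta)}{2y}$, so $\frac{1}{y} \leq \frac{2\ext_D(s)}{\varepsilon(1-\beta)}$, producing the target $\frac{4(p-c)\ext_D(s)}{\varepsilon(1-\beta)}$.

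The main obstacle is recognizing the clean pointwise cancellation in the positive-effort integrand: it is not obvious a priori that the bad-$\COST^3_x$ hypothesis combined with the loss-risk relationship under $\BLOWUP(s)$ would yield a $k$-independent integrand. The cancellation is exactly what the family $\mathcal{H}(s)$ was designed to support: the loss under $\BLOWUP(s)$ at efficiency $k$ matches the price of $\COST^3_x(s)$ at $x = \risk(k, \BLOWUP(s))$, so the ``bad'' hypothesis bounds $H(x)$ in exactly the form needed to cancel the $k$-dependence. Once this is spotted, the rest of the proof is pointwise inequalities, bookkeeping, and direct appeal to Claim~\ref{claim:blowup-profit} and Proposition~\ref{prop:heavy}.
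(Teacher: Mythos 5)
Your proof is correct and follows essentially the same route as the paper's: both exploit the identity $\ell(k,\BLOWUP(s)) = \ell^{\risk}(\risk(k,\BLOWUP(s)), y_{sk}(s))$ to bound the conditional purchase probability by $H(x)$, apply the ``bad'' hypothesis so the $x$-dependence cancels pointwise, divide by $\Pr[A(\BLOWUP(s))]$ via Claim~\ref{claim:blowup-profit}, and finish with Proposition~\ref{prop:heavy}. The only cosmetic difference is that you split into the zero-effort and positive-effort ranges of $k$, whereas the paper phrases the same accounting as a tower-rule integral against the pushforward measure of risk.
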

\begin{proof}
Define $\mu_{\risk}^s(x) = Pr_{k \leftarrow D_k}[\text{\risk(k, s) = x}]$ and observe we can sample $k$ by sampling a risk $x$ from the distribution $\mu_{\risk}^s$ and computing $k$. Let's compute $E_{t \leftarrow D}[\risk(t, s') \cdot \indicator{k \leq \bar k(s), u(t, s') \geq 0}]$ where below, in the first equality, we apply the tower rule by sampling a risk $x$ from $\mu_{\risk}^{\BLOWUP(s)}$. In the second equality, if type $k$ has risk $x < e^{-c}$, by definition of $\bar k(s)$, we must have $k > \bar k(s)$ which implies $\indicator{k \leq \bar k(s)} = 0$. In the first inequality, we use the fact $H(x)$ upper bounds the probability of sale for a buyer with risk $x$.
\begin{align*}
    E_{t \leftarrow D}[&\risk(t, \BLOWUP(s)) \cdot \indicator{k \leq \bar k(s), A(\BLOWUP(s))}]\\
    &= E_{x \leftarrow \mu_{\risk}^{\BLOWUP(s)}}[E_{t \leftarrow D}[\risk(t, \BLOWUP(s)) \cdot \indicator{k \leq \bar k(s), A(\BLOWUP(s))}|\risk(k, \BLOWUP(s)) = x]] \\
    &= E_{x \leftarrow \mu_{\risk}^{\BLOWUP(s)}}[x \cdot  \indicator{x \in [e^{-c}, 1]} \cdot E_{t \leftarrow D}[\indicator{A(\BLOWUP(s))}|\risk(k, \BLOWUP(s)) = x]]\\
    &= E_{x \leftarrow \mu_{\risk}^{\BLOWUP(s)}}[x \cdot \indicator{x \in [e^{-c}, 1]} \cdot Pr_{t \leftarrow D}[v \geq x y_{sk}(s)(1 + \ln 1/x) + p(\BLOWUP(s))]]\\
    &\leq E_{x \leftarrow \mu_{\risk}^{\BLOWUP(s)}}[x \cdot \indicator{x \in [e^{-c}, 1]} \cdot H(x)]
\end{align*}
For all $x \in [e^{-c}, 1]$, because $COST^3_x(s)$ is bad, we must have
\begin{align*}
H(x) &< \useequation{good-H(x)}\\
&\leq \frac{2 \beta \varepsilon (p - c) Pr_{t \leftarrow D}[u(t, s) \geq 0]}{xy_{sk}(s)}
\end{align*}
This implies,
\begin{align*}
    E_{t \leftarrow D}[&\risk(t, \BLOWUP(s)) \cdot \indicator{k \leq \bar k(s), A(\BLOWUP(s))}]\\
    &< E_{x \leftarrow \mu_{\risk}^{\BLOWUP(s)}}[x \cdot \indicator{x \in [e^{-c}, 1]} \cdot \frac{2\beta \varepsilon(p-c)Pr_{t \leftarrow D}[u(t, s) \geq 0]}{x y_{sk}(s)}]\\
    &= \frac{2\beta \varepsilon (p - c) Pr_{t \leftarrow D}[u(t, s) \geq 0]}{y_{sk}(s)}
\end{align*}
Bellow, we use the fact $Pr_{t \leftarrow D}[A(\BLOWUP(s))] \geq \beta \varepsilon Pr_{t \leftarrow D}[u(t, s) \geq 0]$, \Cref{claim:blowup-profit}.
\begin{align*}
    E_{t \leftarrow D}[&\risk(t, \BLOWUP(s)) \cdot \indicator{k \leq \bar k(s), A(\BLOWUP(s))}]\\
    &= E_{t \leftarrow D}[\risk(t, \BLOWUP(s)) \cdot \indicator{k \leq \bar k(s)}|A(\BLOWUP(s))] Pr_{t \leftarrow D}[A(\BLOWUP(s))]\\
    &\geq \beta \varepsilon Pr_{t \leftarrow D}[u(t, s) \geq 0]E_{t \leftarrow D}[\risk(t, \BLOWUP(s)) \cdot \indicator{k \leq \bar k(s)}|A(\BLOWUP(s))]
\end{align*}
Combining the previous bounds, we have
\begin{align*}
    \frac{E_{t \leftarrow D}[\risk(t, \BLOWUP(s)) \cdot \indicator{k \leq \bar k(s)}|A(\BLOWUP(s))]}{\ext_D(s)} &\leq
    \frac{2 (p - c)}{y_{sk}(s) \ext_D(s)}\\
    &\leq \frac{2 y \sigma (p - c)}{y_{sk}(s)(1-\beta)\varepsilon} \leq \frac{4(p-c)}{(1-\beta)\varepsilon}
\end{align*}
\end{proof}
where the last inequality follows from \Cref{prop:heavy} which completes the proof.

\begin{claim}\label{claim:cost-3-bad-blowup}
If $\sigma \leq 2$, and $\forall x \in [e^{-c}, 1]$, $\COST^3_x(s)$ is bad then
\begin{equation}
\ext_D(\BLOWUP(s)) \leq \bigg ( \frac{4 (p - c) }{\varepsilon(1-\beta)} + \frac{1}{\beta\varepsilon} \bigg)\ext_D(s)
\end{equation}
\end{claim}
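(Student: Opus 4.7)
The plan is to decompose $\ext_D(\BLOWUP(s))$ by conditioning on whether the realized efficiency $k$ lies in $[0,\bar k(s)]$ or in $(\bar k(s), \infty)$, and then bound each contribution using a claim already proved. This is essentially the same proof skeleton as Claim~\ref{claim:blowup-good}, except that in that claim the $k \le \bar k(s)$ contribution is controlled by Definition~\ref{def:good-blowup} (which is exactly what ``good blowup'' asserts), whereas here the hypothesis gives us no such direct statement, so we substitute the bound from Claim~\ref{claim:cost-3-bad} instead.

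First, I would write
\[
\ext_D(\BLOWUP(s)) = E_{t \leftarrow D}[\risk(t,\BLOWUP(s)) \cdot \mathbb I(k \le \bar k(s)) \mid A(\BLOWUP(s))] + E_{t \leftarrow D}[\risk(t,\BLOWUP(s)) \cdot \mathbb I(k > \bar k(s)) \mid A(\BLOWUP(s))].
\]
Then I would bound the two pieces separately. For the ``high-$k$'' piece $k > \bar k(s)$, Claim~\ref{claim:blowup-high-k} already gives
\[
E_{t \leftarrow D}[\risk(t,\BLOWUP(s)) \cdot \mathbb I(k > \bar k(s)) \mid A(\BLOWUP(s))] \le \frac{1}{\beta\varepsilon}\, E_{t \leftarrow D}[\risk(t, s) \cdot \mathbb I(k > \bar k(s)) \mid u(t,s) \ge 0] \le \frac{1}{\beta\varepsilon}\,\ext_D(s),
\]
using that $E_{t \leftarrow D}[\risk(t,s)\cdot \mathbb I(k > \bar k(s)) \mid u(t,s) \ge 0] \le \ext_D(s)$ by the definition of $\ext_D$ (one drops the indicator and notes all terms are nonnegative).

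For the ``low-$k$'' piece $k \le \bar k(s)$, this is precisely the content of Claim~\ref{claim:cost-3-bad}, which under the hypotheses $\sigma \le 2$ and ``every $\COST^3_x(s)$ for $x \in [e^{-c},1]$ is bad'' gives
\[
E_{t \leftarrow D}[\risk(t,\BLOWUP(s)) \cdot \mathbb I(k \le \bar k(s)) \mid A(\BLOWUP(s))] \le \frac{4(p-c)}{\varepsilon(1-\beta)}\,\ext_D(s).
\]
Summing the two bounds yields exactly the claimed inequality. There is essentially no obstacle: the two nontrivial inputs (Claim~\ref{claim:blowup-high-k} and Claim~\ref{claim:cost-3-bad}) have already been proved in the correct form, and the hypothesis $\sigma \le 2$ together with the badness of every $\COST^3_x(s)$ is needed only to invoke Claim~\ref{claim:cost-3-bad}; the high-$k$ bound is hypothesis-free. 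The only care required is to make sure the decomposition is done on the \emph{conditional} expectation $\mid A(\BLOWUP(s))$ and matches the statements of the two sub-claims exactly, which it does.
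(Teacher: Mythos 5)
Your proof is correct and matches the paper's proof of this claim essentially verbatim: the paper also says ``By combining Claim~\ref{claim:cost-3-bad} and \ref{claim:blowup-high-k}, that bound the expected risk of buyers with efficiency in the intervals $[0, \bar k(s)]$, $(\bar k(s), \infty)$ respectively, the result follows.'' You spell out the decomposition and the step of dropping the indicator to get $\ext_D(s)$, which the paper leaves implicit, but the approach is identical.
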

\begin{proof}
By combining \Cref{claim:cost-3-bad} and \ref{claim:blowup-high-k}, that bound the expected risk of buyers with efficiency in the intervals $[0, \bar k(s)]$, $(\bar k(s), \infty)$ respectively, the result follows.
\end{proof}

\subsection{Proof of \lowercase{\Cref{thm:approx}}}
\label{sec:proof-of-approx}

\begin{proof}[Proof of \Cref{thm:approx}]
If $\APPROX(s)$ outputs $\COST^1(s, 1)$, then by \Cref{lemma:cost-1}, the profit is
$$\rev_D(\APPROX(s)) = \rev_D(s)$$
Still by \Cref{lemma:cost-1}, the externalities are at most
\begin{align*}
    \frac{\ext_D(\APPROX(s))}{\ext_D(s)} &\leq \frac{e^{-c}}{E_{t \leftarrow D}[\risk(t, s) | \pvalue(t, s) \geq p(s)]}\\
    &\leq \frac{e^{-c}}{E_{t \leftarrow D}[\risk(t, s) \cdot \indicator{k < k_0(s)}| \pvalue(t, s) \geq p(s)]}\\
    &= \frac{e^{-c}}{\varepsilon_1 e^{-c}} = \frac{1}{\varepsilon_1} \leq 8
\end{align*}

If $\APPROX(s)$ outputs $\COST^1(s, \varepsilon_1 + \varepsilon_2)$, then by \Cref{corollary:eps-2},
\begin{align*}
    \rev_D(\APPROX(s)) \geq (\varepsilon_1 + \varepsilon_2) \rev_D(s) \geq \frac{1}{8}\rev_D(s)
\end{align*}
\begin{align*}
    \ext_D(\APPROX(s)) \leq \frac{1}{\varepsilon_2 e} \ext_D(s) \leq \frac 8 e \ext_D(s)
\end{align*}
Next, \Cref{claim:fine-approx} proves approximation guarantees when $\APPROX(s)$ outpus $\FINE(s)$.

\begin{claim}\label{claim:fine-approx}
\begin{align*}
    \rev_D(\FINE(s)) \geq \frac{3}{16} \rev_D(s)
\end{align*}
\begin{align*}
    \ext_D(\FINE(s)) \leq 24 \ext_D(s)
\end{align*}
\end{claim}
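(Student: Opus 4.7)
The plan is to perform case analysis on which branch of Algorithm~\ref{alg:skill} produces $\FINE(s)$. Since $\APPROX$ invokes $\FINE(s)$ only when $\varepsilon_1,\varepsilon_2 < 1/8$, we have $\varepsilon := \varepsilon_3 \geq 3/4$, and since the algorithm fixes $\beta = 1/2$, both $\beta\varepsilon$ and $(1-\beta)\varepsilon$ are at least $3/8$. Crucially, Lemma~\ref{lemma:inv-transform} shows that $\INV$ preserves every type's utility, and hence preserves the partition $(\varepsilon_1,\varepsilon_2,\varepsilon_3)$; this guarantees that applying any claim from Section~\ref{sec:approx-high-efficiency} after precomposing with $\INV$ is still valid with $\varepsilon \geq 3/4$.

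In the branch $c\leq 1$, the output $\INV(s,p/(p-c))$ is a fine policy (its cost collapses to $\alpha c + (1-\alpha)p = 0$ at $\alpha=p/(p-c)$), with profit $\tfrac{p}{p-c}\rev_D(s)\geq \rev_D(s)$ and externality $e^{c}\ext_D(s)\leq e\cdot \ext_D(s)$, directly from Lemma~\ref{lemma:inv-transform}. When $\BLOWUP(s)$ is good, I set $s_0 := \INV(s,1/2)$, so $\rev_D(s_0) = \rev_D(s)/2$, $\ext_D(s_0) = e^{-(p-c)/2}\ext_D(s)$, the new profit-per-sale is $p_0 - c_0 = (p-c)/2$, and (since we are past the first branch) $c_0 = (c+p)/2 \geq c \geq 1$. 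Claim~\ref{claim:blowup-profit} applied to $s_0$ then yields $\rev_D(\BLOWUP(s_0)) \geq \beta\varepsilon\cdot \rev_D(s)/2 = (3/16)\rev_D(s)$, and Claim~\ref{claim:blowup-good} bounds
\[\ext_D(\BLOWUP(s_0)) \leq \Bigl(\tfrac{4(p_0-c_0)}{(1-\beta)(1+c_0)\varepsilon} + \tfrac{1}{\beta\varepsilon}\Bigr) e^{-(p-c)/2}\ext_D(s).\]
The estimate $(p-c)e^{-(p-c)/2}\leq 2/e$ (elementary) and $1+c_0\geq 2$ collapses this to a small absolute constant. When $\BLOWUP(s)$ is bad and $\sigma\geq 2$, Claim~\ref{claim:blowup-bad} handles $\HEAVY(s)$ directly (again using $c>1$). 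When $\sigma<2$, the algorithm splits on whether some $\COST^3_x(s)$ is good: if yes, Claim~\ref{claim:fine-cost-1} (for $ye^{-c}<2$) or Claim~\ref{claim:cost-3-good} (for $ye^{-c}\geq 2$) gives the bound; if no, $\BLOWUP\circ\INV(s,1/2)$ is output and Claim~\ref{claim:cost-3-bad-blowup} is used with the same $\INV$ trick as above.

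The main obstacle is the two $\BLOWUP$-based branches: the raw bounds of Claims~\ref{claim:blowup-good} and~\ref{claim:cost-3-bad-blowup} both carry an unbounded linear factor $(p-c)$. The key insight is that precomposing with $\INV(s,1/2)$ does two things simultaneously: it halves the $(p-c)$ appearing inside those bounds, and it multiplies the externality by the compensating factor $e^{-(p-c)/2}$. The product $xe^{-x/2}\leq 2/e$ then absorbs the remaining linear factor into a constant, at the cost of only halving the profit. Plugging $\beta=1/2$ and $\varepsilon\geq 3/4$ into each arithmetic bound, every externality ratio lands well below $24$ and every profit ratio at least $3/16$; verifying this arithmetic in each of the (at most) six sub-cases is routine but is what fixes the explicit constants $3/16$ and $24$.
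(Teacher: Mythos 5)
Your proposal follows the same case analysis on the branches of \Cref{alg:skill} as the paper, invoking the same claims (\Cref{claim:blowup-profit}, \Cref{claim:blowup-good}, \Cref{claim:blowup-bad}, \Cref{claim:fine-cost-1}, \Cref{claim:cost-3-good}, \Cref{claim:cost-3-bad-blowup}) and the same $\INV$-precomposition trick, whose purpose (absorbing the linear $(p-c)$ factor via a bound of the form $xe^{-x/2}=O(1)$) the paper uses implicitly in the step $e^{-(p-c)/2}\cdot 2(p-c)\leq 4$. Your explicit observation that \Cref{lemma:inv-transform} preserves utilities and hence the partition $(\varepsilon_1,\varepsilon_2,\varepsilon_3)$ is a helpful clarification that the paper leaves unstated, and your sharper estimate $(p-c)e^{-(p-c)/2}\leq 2/e$ lands comfortably below the stated constant $24$ (indeed the paper itself derives $40/3$), so the argument is correct and matches the paper's route.
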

\begin{proof}
If $\APPROX(s)$ outputs $\FINE(s)$ then $\varepsilon_3 \geq 3/4$. Let's first bound the profit of $\FINE(s)$. If $\FINE(s)$ outputs $\INV(s, p, p/(p-c))$, then it must be $c = 1$. By \Cref{lemma:inv-transform},
\begin{align*}
    \rev_D(\FINE(s)) = \frac{p}{p-c}\rev_D(s) \geq \rev_D(s)
\end{align*}
If $\FINE(s)$ outputs $\BLOWUP \circ \INV(s, p, 1/2)$, by \Cref{claim:blowup-profit}, and \Cref{lemma:inv-transform},
\begin{align*}
    \rev_D(\FINE(s)) \geq \frac{1}{2}\beta \varepsilon_3 \rev_D(s) \geq \frac{3}{16} \rev_D(s)
\end{align*}
If $\FINE(s)$ outputs $\HEAVY(s)$, $\COST^1(s, 1)$ or $\COST^x(s)$, then by \Cref{claim:fine-cost-1}, \ref{claim:cost-3-good}, and \Cref{lemma:cost-1},
\begin{align*}
    \rev_D(\FINE(s)) \geq \beta \varepsilon_3 \rev_D(s) \geq \frac{3}{8} \rev_D(s)
\end{align*}
Let's now bound the externality of $\FINE(s)$. If $\FINE(s)$ outputs $\COST^1(s, 1)$, then $E_{t \leftarrow D}[1/yk \cdot \indicator{k > k_h(s)} | A(s)] \geq \frac{e^{-c}\varepsilon_3(1-\beta)}{4} $ and by \Cref{lemma:cost-1} 
\begin{align*}
    \frac{\ext_D(\FINE(s))}{\ext_D(s)} &\leq \frac{e^{-c}}{E_{t \leftarrow D}[\risk(t, s) | \pvalue(t, s) \geq p(s)]}\\
    &\leq \frac{e^{-c}}{E_{t \leftarrow D}[1/yk \cdot \indicator{k > k_h(s)} | A(s)]}\\
    &\leq \frac{4}{\varepsilon_3(1-\beta)} \leq \frac{32}{3}
\end{align*}
If $\FINE(s)$ outputs $\INV(s, p, p/(p-c))$, then it must be $c \leq 1$ and by \Cref{lemma:inv-transform},
\begin{align*}
    \frac{\ext_D(\FINE(s))}{\ext_D(s)} = e^{c} \leq e
\end{align*}
If $\FINE(s)$ outputs $\HEAVY(s)$, by \Cref{claim:blowup-bad},
\begin{align*}
    \frac{\ext_D(\FINE(s))}{\ext_D(s)} \leq \frac{4}{(1-\beta)(c + 1)\varepsilon_3} \leq \frac 8 3
\end{align*}
If $\FINE(s)$ outputs $\COST^3_x(s)$, by \Cref{claim:cost-3-good},
\begin{align*}
    \frac{\ext_D(\FINE(s))}{\ext_D(s)} \leq \frac{2}{(1-\beta)\varepsilon_3} = \frac{16}{3}
\end{align*}
If $\FINE(s)$ outputs $\BLOWUP \circ \INV(s, p, 1/2)$, by \Cref{claim:blowup-good}, \ref{claim:cost-3-bad-blowup} and \Cref{lemma:inv-transform},
\begin{align*}
    \frac{\ext_D(\FINE(s))}{\ext_D(s)} &\leq e^{-1/2(p-c)}\max\bigg\{\frac{4 (p - \frac{1}{2}c - \frac{1}{2}p)}{(1-\beta)(1+\frac{1}{2}c + \frac{1}{2} p)\varepsilon_3} + \frac{1}{\beta\varepsilon_3}\\
    &, \frac{4 (p - \frac{1}{2}c - \frac{1}{2}p) }{(1-\beta)\varepsilon_3} + \frac{1}{\beta\varepsilon_3} \bigg\}\\
    &\leq \frac{4 (p - c)}{(p-c)(1-\beta)\varepsilon_3} + \frac{1}{\beta\varepsilon_3}\\
    &=\frac{4}{(1-\beta)\varepsilon_3} + \frac{1}{\beta \varepsilon_3} \leq 40/3
\end{align*}
In the worst case, the profit is $3/16$ of $\rev_D(s)$ and the externality is $40/3$ times higher than $\ext_D(s)$ which concludes the proofs.
\end{proof}

To conclude, for the profit, in the worst case, $\APPROX(s)$ outputs $\COST^2(s)$ at a compromise of at most $1/8$ of the profit. For the externalities, $\FINE(s)$ will have $40/3$ times more externalities than $s$ which completes the proof.
\end{proof}
\section{Profits-Maximizing Seller}\label{sec:profits-maximizing-seller}
In this section, we consider the a variant of our main model. Here, the seller always selects the profits-maximizing price given $y$ and $c$, and we also compute externalities differently. Specifically, we consider the three stage game where the regulator commits on a fine $y$ and cost $c$ and in sequence the seller is free to select the profit optimal price $p$. In the last step, the buyer decides to purchase or not. Finally, we will assume externalities are measured as the expected risk conditioned on a buyer to purchase times the probability of purchase (i.e. the total probability of compromise, versus compromise conditioned on purchase). One purpose of this section is to explore a variant of our model. The other purpose is to highlight that results do not significantly change in related models.

\begin{definition}[Externality]
Given a policy $s=(y, c, p)$, we define
\begin{align*}
\ext_D(s) := E_{t \leftarrow D}[\risk(t, s) \cdot \indicator{\pvalue(t, s) \geq p}]
\end{align*}
\end{definition}

Observe that in \Cref{ex:lower-bound}, the impossibility result is shown by deriving a distribution $D$ and profit constraint $R$ that can only be satisfied if everyone purchase; therefore, the impossibility follows to the profits-maximizing case.

Next, we will proof \Cref{prop:newexample} for the profits-maximizing seller.

\begin{proposition}\label{prop:homogeneous-counterexample-2}
For a profits-maximizing seller, there exist distributions $D_v, D_k$, and profits constraint $R$ such that:
\begin{itemize}
\item $T$ is such that for all $k \geq T$, the externality-minimizing policy for $D_v\times\{k\}$ subject to profits constraints $R$ is a fine policy. 
\item $D_k$ is supported on $[T,\infty)$.
\item \emph{No} fine policy is externality-minimizing policy for $D_v \times D_k$ subject to profits constraints $R$.
\end{itemize}
\end{proposition}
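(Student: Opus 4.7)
The plan is to construct $(D_v, D_k, R)$ explicitly by adapting the construction of \Cref{sec:newexample}. The new obstacle is precisely the one already flagged in \Cref{claim:newexample6}: the mixed policy that beat all fine policies in the original proof was \emph{not} profits-maximizing, so in the present setting the seller would deviate to a different price than the one the regulator wanted. I would fix this by widening the value gap in $D_v$ so that the intended sale pattern---sell to all buyers except $(1,3)$---becomes the unique profits-maximizing best response of the seller, rather than a suboptimal deviation.

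Concretely, I would take $D_v$ uniform on $\{1,2\}$, $D_k$ uniform on $\{3,x\}$ with $x\to\infty$, and $R=3/5$. For the first bullet, I would compute $c^*=4/5$: on $D_v\times\{k\}$ the profits-maximizing seller always prefers $p=2$ (selling only to the $v=2$ buyer), yielding profit $(2-c)/2$ and feasibility iff $c\leq 4/5$. Hence $T=1+1/c^*=9/4$. For $k\geq T$, the best fine policy on $D_v\times\{k\}$ induces the same sell-to-$v_2$ pattern and achieves externality $(1/2)\,e^{1-4k/5}$, which equals the cost-policy optimum $(1/2)\,e^{-4/5}$ at $k=T$ and is strictly smaller for $k>T$, so fine is optimal on $[T,\infty)$. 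The second bullet is immediate since $3>9/4$.

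For the third bullet, I would enumerate the profits-maximizing seller's response to a fine policy $(y,0)$ in terms of $\ell(3)$: sell all four buyers for $\ell(3)<1/4$, sell only the two $v=2$ buyers for $\ell(3)\in(1/4,1/2)$, sell the three buyers other than $(1,3)$ at price $p=1$ for $\ell(3)\in(1/2,1)$, and sell only $(2,x)$ for $\ell(3)>1$ (the last regime has max profit $1/2<R$ and is infeasible). A direct minimization over the three feasible regimes shows the minimum externality among fine policies is $e^{-2}/4$, attained at $y=e^2/3$ in the sell-three-types regime. I would then exhibit the non-fine policy $(y,c)=(e^{11/5}/3,\,1/5)$: here $\ell(3)=(1+\ln 3y-c)/3=1$ by direct calculation, and a four-case comparison of candidate prices verifies the seller still picks $p=1$ and sells to the same three types, with profit exactly $(3/4)(1-1/5)=3/5=R$. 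Its externality equals $(1/4)\,\risk(3)=e^{-11/5}/4$, strictly less than $e^{-2}/4$.

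The underlying mechanism is identical to that in \Cref{prop:newexample}: at a fixed $\ell(3)$, turning on $c>0$ permits a larger $y$ (since $\ell(3)=(1+\ln(3y)-c)/3$ decreases in $c$), and hence a smaller $\risk(3)=1/(3y)$. The hard part is not this one-line calculation but verifying the profits-maximizing step throughout the feasibility range $c\in[0,1/5]$, since the seller will deviate whenever some alternative pattern becomes strictly more profitable as we turn on $c$; the widened value gap $v_2=2$ (versus the $v_2=16/15$ of the original example) is precisely what keeps the sell-three-types pattern as the unique seller optimum in this range.
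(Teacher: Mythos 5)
Your plan is in spirit close to the paper's: adapt \Cref{sec:newexample}, widen the value gap in $D_v$ so the intended ``sell to three types'' pattern becomes profit-maximizing for the seller. But the construction does not work, for two reasons.

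First, the regime analysis for fine policies is wrong, and the error is not cosmetic. With $D_v$ uniform on $\{1,2\}$ and $c=0$, the price $2-\ell(3)$ yields profit $1-\ell(3)/2$, which \emph{always} strictly dominates the sell-all-four price $1-\ell(3)$; there is no regime where the seller sells all four. More importantly, you miss a third sell-three regime: once $\ell(3)>1$ the buyer $(2,3)$ becomes the lowest-valued of the three buyers $\{(2,3),(1,x),(2,x)\}$, and the seller continues to sell those three at price $2-\ell(3)$, with profit $(2-\ell(3))\cdot\tfrac{3}{4}$. This is feasible whenever $\ell(3)\le 6/5$. The fine policy at $\ell(3)=6/5$ (that is, $y=e^{13/5}/3$) sells these three at price $4/5$, achieves profit exactly $3/5=R$, and has $\risk(3)=e^{-13/5}$, so externality $\approx \tfrac14 e^{-13/5}$. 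That is \emph{strictly smaller} than the $\tfrac14 e^{-11/5}$ of your proposed non-fine policy $(e^{11/5}/3,\,1/5)$. So bullet three fails outright: with your $(D_v,D_k,R)$ a fine policy wins.

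The deeper issue is that with the second efficiency point mass at $k_2=x\to\infty$, the $k=x$ buyers contribute vanishing risk, so including or excluding them from the sale does not move the externality. The only risk that matters is $\risk(3)$, and once you fix which buyers purchase the optimization collapses to the point-mass problem at $k=3$: minimize $\risk(3)=e^{1-3\ell(3)-c}$ subject to $\ell(3)+c\le 6/5$, which is minimized at $c=0$ exactly as in \Cref{lem:hardestcase} with $k=3>T$. To make bullet three true you need a mechanism beyond this, and the paper's proof uses a genuinely different one: with both $k$-masses finite (the paper uses $k_2=9$), raising the fine past a threshold causes the profit-maximizing seller to discontinuously add a buyer with \emph{non-negligible} risk, so the externality jumps \emph{upward}; hence a smaller fine with profit strictly exceeding $R$ achieves lower externality, and \Cref{corollary:profit-constraint-tight} (every optimal policy has profit exactly $R$) then shows no fine policy is optimal. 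This jump is absent when $k_2\to\infty$.

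A smaller gap: even at $\ell(3)=1$, the seller's best price against your policy is $1-\ell(x,s)$, not $1$, since the $(1,x)$ buyer's modified value is $1-\ell(x,s)<1$; the resulting profit is $3/5-\tfrac34\ell(x,s)<R$ for every finite $x$, so your policy is strictly infeasible and only meets the constraint in the limit.
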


Under profit constraint $R$, we can show that the optimal policy gives profits $R$, \Cref{corollary:profit-constraint-tight}. Before we proof this result, we will extend the invariant property, \Cref{lemma:inv-transform}, to the profits-maximizing setting.

\begin{lemma}[Augmented Invariant Property]\label{lemma:invariant-improvement}
Given policy $s$, let $s' = \INV(s, \alpha)$, $\alpha \in [0, 1]$, then $\rev_D(s') \geq \alpha \rev_D(s)$, and the optimal price of $s'$ is $p' \geq p$ which implies $\ext_D(s') \leq e^{-(p-c)(1-\alpha)} \ext_D(s)$.
\end{lemma}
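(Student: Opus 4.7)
The plan is to leverage the invariant property (\Cref{lemma:inv-transform}), which shows the modified values $v - \ell(k, s)$ are identical under $s$ and $s' = \INV(s, \alpha)$; only the cost term changes. This will force the profit-maximizing price to weakly increase. Concretely, writing $s = (y, c, p)$, $s' = \INV(s, \alpha) = (y', c', p)$ with $c' - c = (1-\alpha)(p - c) \geq 0$, and $G(q) := \Pr_{t \leftarrow D}[v - \ell(k, s) \geq q]$, the seller's profit functions at price $q$ are $R_s(q) = (q - c)G(q)$ and $R_{s'}(q) = (q - c')G(q) = R_s(q) - (c' - c)G(q)$. The hypothesis that $s$ is profits-maximizing gives $p \in \arg\max_q R_s(q)$.

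I would then establish $p' := \arg\max_q R_{s'}(q) \geq p$ directly from monotonicity of $G$: for every $q < p$ the gap $G(q) - G(p) \geq 0$, so
\begin{align*}
R_{s'}(p) - R_{s'}(q) &= [R_s(p) - R_s(q)] + (c' - c)[G(q) - G(p)] \geq 0,
\end{align*}
with both bracketed terms non-negative. Hence no price strictly below $p$ beats $p$ under $s'$, and I may choose $p' \geq p$ (breaking ties in favor of the largest maximizer if necessary). Given this, the profit bound is immediate: $\rev_D(s') = R_{s'}(p') \geq R_{s'}(p) = \alpha \rev_D(s)$, where the final equality is the profit part of \Cref{lemma:inv-transform}.

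For the externality bound, \Cref{lemma:inv-transform} also gives $\risk(t, s') = e^{-(1-\alpha)(p-c)}\risk(t, s)$ and $\pvalue(t, s') = \pvalue(t, s)$, so $p' \geq p$ implies the pointwise inequality $\indicator{\pvalue(t, s') \geq p'} \leq \indicator{\pvalue(t, s) \geq p}$, yielding
\begin{align*}
\ext_D(s') &= \mathbb{E}_{t \leftarrow D}\bigl[e^{-(1-\alpha)(p-c)}\risk(t, s)\cdot \indicator{\pvalue(t, s') \geq p'}\bigr] \\
&\leq e^{-(1-\alpha)(p-c)}\ext_D(s).
\end{align*}

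The main (and only) obstacle is verifying $p' \geq p$ without a smoothness assumption on $D$, since $G$ can have atoms. The argument above sidesteps this by using only monotonicity of $G$: once one observes that $R_{s'}$ differs from $R_s$ by subtracting the non-increasing function $(c' - c)G(\cdot)$, any price below $p$ is weakly dominated by $p$ itself. Everything else is a direct substitution into \Cref{lemma:inv-transform}.
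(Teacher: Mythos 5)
Your proof is correct, and it reaches the central claim $p' \geq p$ by a slightly different algebraic route than the paper. The paper writes down the two revealed-preference inequalities — that $p(s)$ is optimal for $R_s$ and $p(s')$ is optimal for $R_{s'}$ — adds them, and cancels using $\Pr[\pvalue(t,s)\geq q]=\Pr[\pvalue(t,s')\geq q]$ to arrive at $G(p(s))\geq G(p(s'))$ and hence $p(s)\leq p(s')$. You instead observe that $R_{s'}(q)=R_s(q)-(c'-c)G(q)$ and compare $p$ to any $q<p$ directly, decomposing $R_{s'}(p)-R_{s'}(q)$ into two non-negative pieces. Both are instances of the same monotone-comparative-statics idea (subtracting a non-negative multiple of a non-increasing function pushes the argmax rightward), and once $p'\geq p$ is in hand the profit and externality bounds follow identically via \Cref{lemma:inv-transform}. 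A minor advantage of your version is that it is explicit about tie-breaking: the paper's step from $G(p(s))\geq G(p(s'))$ to $p(s)\leq p(s')$ implicitly assumes the seller breaks ties in favor of a larger price on any plateau of $G$, whereas you state that convention outright.
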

\begin{proof}
Let $p(s)$ denote the optimal price under policy $s$. Let $c(s') = \alpha c + (1-\alpha) p$. It follows,
\begin{align*}
(p(s) - c) Pr_{t \leftarrow D}[\pvalue(t, s) \geq p(s)] &\geq (p(s') - c) Pr_{t \leftarrow D}[\pvalue(t, s) \geq p(s')]\\
(p(s') - c(s')) Pr_{t \leftarrow D}[\pvalue(t, s') \geq p(s')] &\geq (p(s) - c(s')) Pr_{t \leftarrow D}[\pvalue(t, s') \geq p(s)]
\end{align*}
By adding the inequalities and observing that by \Cref{lemma:inv-transform}, for all price $p$, $Pr_{t \leftarrow D}[\pvalue(t, s) \geq p] = Pr_{t \leftarrow D}[\pvalue(t, s') \geq p]$, we have
\begin{align*}
&(p(s) - p(s) + c(s') - c) Pr_{t \leftarrow D}[\pvalue(t, s) \geq p(s)]
\geq (p(s') - p(s') + c(s') - c) Pr_{t \leftarrow D}[\pvalue(t, s) \geq p(s')]\\
&\iff (c(s') - c) Pr_{t \leftarrow D}[\pvalue(t, s) \geq p(s)] \geq (c(s') - c) Pr_{t \leftarrow D}[\pvalue(t, s) \geq p(s')]\\
&\iff Pr_{t \leftarrow D}[\pvalue(t, s) \geq p(s)]  \geq Pr_{t \leftarrow D}[\pvalue(t, s) \geq p(s')]
\end{align*}
In the last step, if the profit is non-zero then $p > c$ which implies $c(s') > c$ because $c(s')$ is a convex combination of $p$ and $c$. We conclude $p(s) \leq p(s')$. Since $p(s')$ gives more profit than $p(s)$, we must have $\rev_D(s') \geq \alpha \rev_D(s)$. In addition, the fact the distribution of values is identical and we only sell with lower probability implies $\ext_D(s') \leq e^{-(p-c)(1-\alpha)}\ext_D(s)$ which completes the proof.
\end{proof}

\begin{corollary}\label{corollary:profit-constraint-tight}
If policy $s$ is optimal, then $\rev_D(s) = R$.
\end{corollary}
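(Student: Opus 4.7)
The plan is a short contradiction argument that leverages the Augmented Invariant Property (\Cref{lemma:invariant-improvement}) just proved. Suppose for contradiction that $s = (y,c,p)$ is an optimal policy for profits constraint $R$, yet $\rev_D(s) > R$. The idea is to shrink $s$ slightly via the invariant transformation: decreasing $\alpha$ shaves off profit but leaves the consumers' effort, loss, and utilities unchanged, so the seller's price response will tend to push the price up, and this must strictly reduce externalities.

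Concretely, I would choose $\alpha \in \bigl[R/\rev_D(s),\,1\bigr)$ and let $s' := \INV(s,\alpha)$. Note $p > c$ necessarily, since $\rev_D(s) > R \geq 0$ rules out $p \leq c$. By \Cref{lemma:invariant-improvement},
\[
\rev_D(s') \;\geq\; \alpha\,\rev_D(s) \;\geq\; R,
\]
so $s'$ is feasible. The same lemma gives
\[
\ext_D(s') \;\leq\; e^{-(p-c)(1-\alpha)}\,\ext_D(s).
\]
Because $p - c > 0$ and $\alpha < 1$, the exponential factor is strictly less than $1$, so $\ext_D(s') < \ext_D(s)$ whenever $\ext_D(s) > 0$. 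This contradicts optimality of $s$ and forces $\rev_D(s) = R$.

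The one subtlety to address is the degenerate case $\ext_D(s) = 0$, in which the strict contradiction above breaks down. But there $s'$ also achieves $\ext_D(s') = 0$ (both are zero) while satisfying the profit constraint, so $s'$ is an equally-optimal policy whose profit equals $R$ exactly (by taking $\alpha = R/\rev_D(s)$). Hence among optimal policies one always has one with $\rev_D(s) = R$; combined with the main argument, every optimal policy must in fact have $\rev_D(s) = R$. I do not anticipate any real obstacle: once \Cref{lemma:invariant-improvement} is in hand, the entire corollary is essentially a one-line application of it, and the only care needed is to verify $p > c$ (immediate from $\rev_D(s) > 0$) and to handle the boundary case where externalities vanish.
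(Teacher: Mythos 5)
Your proof takes essentially the same approach as the paper: both apply \Cref{lemma:invariant-improvement} with $\alpha < 1$ to shrink profit slightly while strictly shrinking externalities, contradicting optimality of $s$. The only difference is that you add a discussion of the degenerate case $\ext_D(s)=0$, which the paper silently ignores (and, indeed, your argument there overreaches: if $\ext_D(s)=0$ and $\rev_D(s)>R$, then $s$ itself would be an optimal policy violating the corollary, so the conclusion ``every optimal policy must have $\rev_D(s)=R$'' does not follow from ``some optimal policy has $\rev_D(s)=R$''; the real reason this case is harmless is that $\ext_D(s)>0$ for every finite policy in the model, since $\risk(t,s)=\min\{e^{-c},1/(yk)\}>0$).
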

\begin{proof}
Suppose $\rev_D(s) > R$, then there is a sufficiently small $\varepsilon > 0$ such that $s' =\INV(s, (1-\varepsilon))$ and $\rev_D(s') \geq R$ by \Cref{lemma:invariant-improvement}. Yet by \Cref{lemma:invariant-improvement} $\ext_D(s') < \ext_D(s)$ contradicting $s$ was optimal.
\end{proof}

\paragraph{Notation}
For arbitrary distribution $D$, let
$$\supp(D) := \{x_1, ..., x_n\}$$
then for $i \in [|\supp(D_v)|]$, $j \in [|\supp(D_k|]$, let $t_{ij} := (v_i, k_j)$,
$$r_{ij}(s) := \pvalue(t_{ij}, s) Pr_{t \leftarrow D}[\pvalue(t, s) \geq \pvalue(t_{ij}, s)]$$
In the discrete case, we have $\rev_D(s) := \max_{i, j} r_{ij}(s)$. To compute $r_{ij}(s)$, we must compute if for some $i' \in [|\supp(D_v)|]$, $j' \in [|\supp(D_k|]$, $\pvalue(t_{ij}, s) < \pvalue(t_{i'j'}, s)$ or $\pvalue(t_{ij}, s) \geq \pvalue(t_{i'j'}, s)$.

\begin{definition}[Type Total Order]
We define a type total order $<_s$ where $t_{ij} <_s t_{i'j'}$ denote $\pvalue(t_{ij}, s) \leq \pvalue(t_{i'j'}, s)$. Observe that for all regulation $s$, fix indexes $i$ and $j$, then for all $i' \geq i$, $j' \geq j$, $t_{ij} <_s t_{i'j}$, $t_{ij} <_s t_{ij'}$.
\end{definition}

Before we proof \Cref{prop:homogeneous-counterexample-2}, let's get intuition why one might expect it to be true. Assume $D_v$ and $D_k$ have each support of cardinality 2. When there is no regulation $s_0 = (0, 0)$, we have the total order of types
\begin{align*}
t_{11} <_{s_0} t_{12}<_{s_0} t_{21} <_{s_0} t_{22}
\end{align*}
As we increase fines, for some $y > 0$ and policy $s = (y, 0)$, we will have $\pvalue(t_{12}, s) = \pvalue(t_{21}, s)$ since $t_{12}$ is more efficient than $t_{21}$. However, assume at the highest possible fine, the seller prefers to sell to $\{t_{12}, t_{21}, t_{22}\}$. When fines are slightly smaller, the seller might prefer to sell only to $t_{21}$, and $t_{22}$ if $v_2$ is sufficiently larger than $v_1$. If that is the case, selling at lower probability might be better to decrease externalities than increasing fines. If that is the case, the optimal fine policy will give more profit than the profit constraint $R$, and by \Cref{corollary:profit-constraint-tight} such policy cannot be optimal.

\begin{proof}[Proof of Proposition \ref{prop:homogeneous-counterexample-2}]
Define the value distribution $D_v$, and efficiency distribution $D_k$,
\begin{align*}
D_v = \begin{cases}
v_1 = 1 & \text{w. p. $\frac{1}{2}$}\\
v_2 = 1.58 & \text{w. p. $\frac{1}{2}$}
\end{cases}
\end{align*}
\begin{align*}
D_k = \begin{cases}
k_1 = 3 & \text{w. p. $\frac{1}{2}$}\\
k_2 = 9 & \text{w. p. $\frac{1}{2}$}
\end{cases}
\end{align*}
Consider fines of the form $y(k) := \frac{e^{\frac{k}{k-1}}}{ek}$.

Let the profit constraint be $R = r_{12}(1.2 \cdot y(3), 0)$. We will have $R \in (0.51, 0.52)$, but to compute $R$, we must first compute the typs total order when fines are $1.2 \cdot y(3)$, \Cref{claim:example-total-order}. We will then show that setting externalities to $y(3)$ yields strictly lower externalities then setting fines to $1.2 \cdot y(3)$, \Cref{claim:profits-maximizing-example}. Further we argue that when fines are $1.2 \cdot y(x)$, $D_v \times \{3\}$ has optimal fine policy, \Cref{claim:threshold-assumption}.

\begin{claim}\label{claim:example-total-order}
If $s = (y, 0)$, $y \leq 1.2 \cdot y(3)$ then $t_{11} <_s t_{12}<_s t_{21} <_s t_{22}$.
\end{claim}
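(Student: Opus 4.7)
The plan is to establish the four types' order by proving the three consecutive inequalities. The two inequalities $t_{11} <_s t_{12}$ and $t_{21} <_s t_{22}$ are immediate from Observation~\ref{obs:monotone3}: within each pair the two types share the same value, and $\ell(k,s)$ is non-increasing in $k$, so the more efficient buyer has the larger \pvalue. The only substantive comparison is the middle one, $t_{12} <_s t_{21}$, which, after canceling the common price and writing out \pvalue, unpacks to
\[\ell(3,s) - \ell(9,s) \leq v_2 - v_1 = 0.58.\]

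I would prove this bound by case analysis on $y$, split according to which consumers exert nonzero effort under the fine policy $(y,0)$. When $y < 1/9$, both $yk_1$ and $yk_2$ are below $1$, so \eqref{eq:loss} gives $\ell(3,s) = \ell(9,s) = y$ and the difference is $0$. When $1/9 \leq y < 1/3$, the more efficient buyer now has positive effort, but $\ell(3,s) = y < 1/3$, so the difference is bounded by $1/3 < 0.58$. The substantive range is $y \in [1/3,\; 1.2 \cdot y(3)]$, where both buyers put in positive effort and
\[f(y) := \ell(3,s) - \ell(9,s) = \frac{\ln(3y)+1}{3} - \frac{\ln(9y)+1}{9}.\]

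The main step here is the observation that $f'(y) = \frac{1}{3y} - \frac{1}{9y} = \frac{2}{9y} > 0$, so $f$ is strictly increasing on this interval. It therefore suffices to check the endpoint. Plugging in $y = 1.2 \cdot y(3) = \frac{1.2\, e^{1/2}}{3} = 0.4\sqrt{e}$ gives
\[f(0.4\sqrt{e}) = \frac{\ln(1.2) + 3/2}{3} - \frac{\ln(3.6) + 3/2}{9} \approx 0.252,\]
which is comfortably below $0.58$. The only potential obstacle is that the numerical slack is only moderate rather than asymptotic, so one needs to carry enough precision in the arithmetic; but the monotonicity reduces the verification to this one endpoint computation, which completes the proof.
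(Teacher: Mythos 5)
Your proposal is correct and follows essentially the same approach as the paper: the outer comparisons are dispatched by monotonicity of $\ell$ in $k$, and the middle one $t_{12} <_s t_{21}$ is reduced to an endpoint check at $y = 1.2\,y(3)$ by showing the relevant quantity is monotone in $y$. The paper states this monotonicity as $\frac{\partial^2 \pvalue}{\partial y \partial k} \geq 0$, whereas you work directly with $f(y) = \ell(3,s) - \ell(9,s)$ and compute $f'(y) = \frac{2}{9y} > 0$ on the effortful range; these are the same fact, though your version is more explicit about the regime boundaries (the three cases) where the closed form of $\ell$ changes. One small point worth flagging: you correctly unpack the target inequality as $\ell(3,s) - \ell(9,s) \leq v_2 - v_1 = 0.58$, whereas the paper's displayed expansion $\pvalue(t_{21},s)-\pvalue(t_{12},s) = 1.58 - 1 + \ell(3,s) - \ell(9,s)$ carries the wrong sign on the loss terms (the correct value at $y=1.2\,y(3)$ is about $0.33$, not $>0.83$); this does not change the conclusion, and your version is the accurate one.
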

\begin{proof}
Clearly $t_{11} <_s t_{12}$ and $t_{21} <_s t_{22}$. It is sufficient to show for $y = 1.2 \cdot y(3)$, $t_{12} <_s t_{21}$ and the claim follows for all fines smaller than $1.2 \cdot y(3)$ because $\frac{\partial^2 \pvalue(t, s)}{\partial y \partial k} \geq 0$. Observe,
\begin{align*}
\pvalue(t_{21}, s) - \pvalue(t_{12}, s) &= 1.58 - 1 + \ell(3, s) - \ell(9, s)\\
&> 0.83 > 0
\end{align*}
which completes the proof.
\end{proof}

Let's now proof when the fines are in the range $[y(3), 1.2 \cdot y(3)]$, the seller never sells with probability 1.

\begin{claim}
For $y \in [y(3), 1.2 \cdot y(3)]$, then
\begin{align*}
r_{21}(y, 0) > r_{11}(y, 0)
\end{align*}
\end{claim}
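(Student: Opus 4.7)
The plan is to reduce the inequality $r_{21}(y,0) > r_{11}(y,0)$ to a simple condition on the loss $\ell(3,(y,0))$, then verify that condition numerically using the closed form for $\ell$.

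First, I will invoke \Cref{claim:example-total-order}, which for any $y \leq 1.2 \cdot y(3)$ gives the total order $t_{11}<_s t_{12}<_s t_{21}<_s t_{22}$ on the four types under $s=(y,0)$. This determines the sale probabilities at the relevant thresholds: setting the price at $\pvalue(t_{11},s)$ sells with probability $1$, while setting the price at $\pvalue(t_{21},s)$ sells with probability $1/2$ (only $t_{21}$ and $t_{22}$ survive). Therefore
\[ r_{11}(y,0) = 1\cdot(v_1 - \ell(3,(y,0))), \qquad r_{21}(y,0) = \tfrac{1}{2}\cdot(v_2 - \ell(3,(y,0))).\]
The desired inequality $r_{21}(y,0) > r_{11}(y,0)$ is therefore equivalent to
\[\ell(3,(y,0)) > 2v_1 - v_2 = 2 - 1.58 = 0.42.\]

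Next, I will compute $\ell(3,(y,0))$ explicitly. For $y \geq y(3) = \tfrac{\sqrt e}{3}$ we have $yk_1 = 3y \geq \sqrt e \geq e^0$, so we are in the regime of Equation~\eqref{eq:loss} where $\ell(3,(y,0)) = \tfrac{\ln(3y)+1}{3}$. This expression is monotone increasing in $y$, so its minimum over $y \in [y(3),1.2\cdot y(3)]$ is attained at $y=y(3)$, where $\ln(3y) = \tfrac12$ and thus $\ell(3,(y(3),0)) = \tfrac{1/2+1}{3} = \tfrac{1}{2}$.

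Since $\tfrac12 > 0.42$, the condition $\ell(3,(y,0)) > 0.42$ holds for all $y$ in the interval, completing the proof. The only computational step requiring any care is verifying $y(3) = \sqrt e/3$ from the definition $y(k)=e^{k/(k-1)}/(ek)$, which is a routine substitution; everything else is immediate from the total-order claim and the closed form for $\ell$.
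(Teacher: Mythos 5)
Your proposal is correct and takes essentially the same route as the paper: invoke \Cref{claim:example-total-order} to pin down the sale probabilities, write $r_{11}=v_1-\ell(3,(y,0))$ and $r_{21}=\tfrac12(v_2-\ell(3,(y,0)))$, reduce to a lower bound on $\ell$, and use monotonicity of $\ell$ in $y$ to anchor at $y(3)$ where $\ell(3,(y(3),0))=\tfrac12$. The only difference is cosmetic---you make the threshold $\ell>0.42$ explicit, whereas the paper uses a slightly slacker intermediate bound---so there is nothing substantive to flag.
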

\begin{proof}
By \Cref{claim:example-total-order}, if $y \leq 1.2 \cdot y(3)$, then $t_{11} <_s t_{12}<_s t_{21} <_s t_{22}$. We can then compute the profit,
\begin{align*}
r_{21}(y, 0) - r_{11}(y, 0) &= \frac{1}{2}(1.58 - \ell(3, y)) - 1 + \ell(3, y)\\
&> -0.25 + \frac{1}{2}\ell(3, y)\\
&\geq -0.25 + \frac{1}{2}\ell(3, y(3))\\
&= 0
\end{align*}
which completes the proof.
\end{proof}

Next, we claim if the fine is $1.2\cdot y(3)$, the seller prefers to sell to $t_{12}$, $t_{21}$, $t_{22}$ with probability $\frac 3 4$ and when the fine is $y(3)$, the seller sells only to $t_{21}$ and $t_{22}$.

\begin{claim}\label{claim:seller-threshold}
$r_{21}(y(3), 0) > r_{12}(y(3), 0)$ and $r_{12}(1.2\cdot y(3), 0) > r_{21}(1.2\cdot y(3), 0)$
\end{claim}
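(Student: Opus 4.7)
The plan is to collapse both inequalities into a single closed-form computation of $r_{21}(y,0) - r_{12}(y,0)$ as a function of $y$, and then evaluate its sign at the two specific fines. First I would invoke Claim~\ref{claim:example-total-order} to fix the total order $t_{11}<_s t_{12}<_s t_{21}<_s t_{22}$ throughout the interval $y \in [y(3), 1.2\cdot y(3)]$. Under this order, the ``rank'' of $t_{12}$ is $3/4$ and the rank of $t_{21}$ is $1/2$, so
\[r_{12}(y,0) = \tfrac{3}{4}\bigl(v_1 - \ell(9,y)\bigr), \qquad r_{21}(y,0) = \tfrac{1}{2}\bigl(v_2 - \ell(3,y)\bigr).\]

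Next I would observe that $yk \geq 1$ for every relevant $(k,y)$ in $\{3,9\}\times\{y(3),1.2\cdot y(3)\}$ (since $3\cdot y(3)=\sqrt{e}>1$), so the non-trivial branch $\ell(k,y) = (\ln(yk)+1)/k$ of Equation~\eqref{eq:loss} applies. Substituting this in and using the logarithm identity $\ln(9y) - 2\ln(3y) = -\ln y$ to combine the two loss terms gives, after a short simplification exploiting $2v_2 - 3v_1 = 0.16$,
\[r_{21}(y,0) - r_{12}(y,0) \;=\; -\,\frac{\ln y + 0.52}{12}.\]
Consequently the sign of $r_{21}(y,0) - r_{12}(y,0)$ is governed entirely by whether $\ln y$ sits below or above $-0.52$.

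The two halves of the claim then follow by evaluating at the endpoints. For $y = y(3) = \sqrt{e}/3$, we have $\ln y(3) = \tfrac{1}{2} - \ln 3 < -0.52$ (since $\ln 3 > 1.02$), so $r_{21}(y(3),0) > r_{12}(y(3),0)$. For $y = 1.2\cdot y(3) = 0.4\sqrt{e}$, we have $\ln y = \ln(2/5) + \tfrac{1}{2} > -0.52$ (since $\ln(5/2) < 1.02$), so the inequality reverses and $r_{12}(1.2\cdot y(3),0) > r_{21}(1.2\cdot y(3),0)$. There is no real technical obstacle: the only point requiring care is verifying $yk\geq 1$ before using the closed-form loss, and noticing that the specific values $v_1=1$, $v_2=1.58$ and the fine profile $y(k)=e^{k/(k-1)}/(ek)$ are precisely tuned so that the threshold $-0.52$ lies strictly between $\ln y(3)$ and $\ln(1.2\cdot y(3))$.
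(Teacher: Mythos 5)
Your proof is correct and follows essentially the same approach as the paper: both set up the difference $r_{21}(y,0)-r_{12}(y,0) = \tfrac{1}{2}(1.58-\ell(3,y)) - \tfrac{3}{4}(1-\ell(9,y))$ using the total order from Claim~\ref{claim:example-total-order} and then check the sign at the two fines. The paper simply says ``by computing''; you went further and derived the closed form $-\tfrac{\ln y + 0.52}{12}$ (which I verified: $\tfrac{1}{4}(2v_2-3v_1)=0.04$ and $\tfrac{3}{4}\ell(9,y)-\tfrac{1}{2}\ell(3,y)=-\tfrac{\ln y+1}{12}$ via $\ln(9y)-2\ln(3y)=-\ln y$), which makes the sign change at $\ln y = -0.52$ transparent and cleanly separates the two halves of the claim. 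The only additional care you rightly flagged and checked is that $yk\geq 1$ throughout so the non-trivial branch of Equation~\eqref{eq:loss} applies; everything checks out.
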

\begin{proof}
For arbitrary $y$,
\begin{align*}
r_{21}(y, 0) - r_{12}(y, 0) = \frac{1}{2}(1.58 - \ell(3, y)) - \frac{3}{4}(1-\ell(9, y))
\end{align*}
By computing the left hand side for $y = y(3)$ and $y = 1.2 \cdot y(3)$, we have $r_{21}(y(3), 0) - r_{12}(y(3), 0) > 0$ and $r_{21}(1.2\cdot y(3), 0) - r_{12}(1.2 \cdot y(3), 0) < 0$.

\end{proof}

Next, we show that having slightly lower fines yields lower externalities.

\begin{claim}\label{claim:profits-maximizing-example}
\begin{align*}
\ext_D(y(3), 0) < \ext_D(1.2\cdot y(3), 0)
\end{align*}
\end{claim}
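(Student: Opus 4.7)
The plan is to compute each side of the inequality in closed form, using Claim~\ref{claim:seller-threshold} to identify the seller's profit-maximizing selling set under each fine. Under $s = (y(3), 0)$ the seller's optimal price sells to exactly $\{t_{21}, t_{22}\}$, while under $s = (1.2\,y(3), 0)$ it sells to $\{t_{12}, t_{21}, t_{22}\}$; these are precisely the "top-$2$" and "top-$3$" selling sets justified by Claim~\ref{claim:seller-threshold}.

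I would first observe that $y(3) = e^{1/2}/3$, so for every $k \in \{3,9\}$ and $y \in \{y(3),\,1.2\,y(3)\}$ we have $yk \geq 3\,y(3) = e^{1/2} > 1 = e^{0}$. Hence $h^*(k,s) > 0$ and $\risk(k,s) = 1/(yk)$ for every purchasing type under either policy, so there is no need for a case split on whether effort is zero.

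Next, since each type has probability $\tfrac14$, plugging into the definition $\ext_D(s) = E_{t \leftarrow D}[\risk(t,s)\,\indicator{\pvalue(t,s)\geq p}]$ gives
\[
\ext_D(y(3),0) \;=\; \tfrac{1}{4}\!\left(\tfrac{1}{3\,y(3)} + \tfrac{1}{9\,y(3)}\right) \;=\; \tfrac{1}{3\,e^{1/2}},
\]
\[
\ext_D(1.2\,y(3),0) \;=\; \tfrac{1}{4}\!\left(\tfrac{2}{9\cdot 1.2\,y(3)} + \tfrac{1}{3\cdot 1.2\,y(3)}\right) \;=\; \tfrac{5}{14.4\,e^{1/2}}.
\]
The claim then reduces to the numerical inequality $\tfrac{1}{3} = \tfrac{4.8}{14.4} < \tfrac{5}{14.4}$, which is immediate.

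The main obstacle is pure bookkeeping: correctly reading off the optimal selling set from Claim~\ref{claim:seller-threshold} (remembering that the seller picks the profit-maximizing price given $y$, so lowering $y$ from $1.2\,y(3)$ to $y(3)$ actually \emph{shrinks} the sold-to population from three types to two), and verifying the positive-effort condition so that the clean formula $\risk(k,s) = 1/(yk)$ applies throughout. Once those are in place, the claim collapses to a one-line arithmetic comparison.
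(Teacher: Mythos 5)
Your proof is correct and follows essentially the same route as the paper: identify the seller's optimal selling set under each fine, compute the externalities as a weighted sum of $1/(yk)$ over the purchasing types, and compare. Your simplification to the clean comparison $4.8/14.4 < 5/14.4$ is tidier than the paper's numerical evaluation, and you explicitly verify $yk > e^0$ so that $\risk = 1/(yk)$ applies, a step the paper leaves implicit. One small caveat: Claim~\ref{claim:seller-threshold} alone only compares the top-$2$ and top-$3$ selling sets against each other; to fully pin down that these are the globally profit-maximizing choices, the paper additionally uses the preceding (unnamed) claim ruling out selling to all four types and the observation that selling only to $t_{22}$ yields profit below $1/2$, so you should cite those as well.
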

\begin{proof}
For all fines $y$, if the seller sells at price $\pvalue(t_{22}, y, 0)$, the profit is at most $1/4$ which is smaller than the profit constraint $R > 1/2$ . We have shown that for $y \in [y(3), 1.3\cdot y(3)]$ the seller will not prefers to sell to everyone; therefore, the seller either sells to $\{t_{12}, t_{21}, t_{22}\}$ or to $\{t_{21}, t_{22}\}$. In particular, when $y = y(3)$, the seller sells to $t_{21}$ and $t_{22}$ only. The externalities are
\begin{align*}
Ext_D(y(3), 0) &= \frac{1}{4 y(3)}\bigg(\frac{1}{3} + \frac{1}{9}\bigg)\\
&< 0.203
\end{align*} 
When $y = 1.2 \cdot y(3)$, the seller sells to $t_{12}$, $t_{21}$, $t_{22}$ and the externalities are
\begin{align*}
Ext_D(1.2 \cdot y(3), 0) &= \frac{1}{4 \cdot 1.2 \cdot y(3)}\bigg(\frac{1}{3} + \frac{2}{9}\bigg)\\
&> 0.21
\end{align*}
\end{proof}

Observe $1.2 \cdot y(3)$ is the highest fine of any feasible fine policy by our profit constraint $R = r_{12}(1.2 \cdot y(3), 0)$. This is because $r_{12}(1.2 \cdot y(3), 0)$ is the highest profit we can get under policy $(1.2 \cdot y(3), 0)$. For fines $y$ strictly larger than $1.2 \cdot y(3)$, $\pvalue(t, 1.2 \cdot y(3), 0)$ strictly stochastic dominate $\pvalue(t, y, 0)$ implying the profit is strictly smaller than $R$. By the same argument, policy $(y(3), 0)$ gets profit $> R$, and by \Cref{corollary:profit-constraint-tight}, there is a non-simple policy that yields lower externalities than $(y(3), 0)$ which implies a non-fine policy is optimal. We conclude by claiming the distribution $D_v \times \{3\}$ with profit constraint $R$ has optimal fine policy.

\begin{claim}\label{claim:threshold-assumption}
A fine policy is optimal on $D_v \times \{3\}$, $R$.
\end{claim}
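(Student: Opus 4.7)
The plan is to directly apply the warm-up result (Theorem~\ref{thm:opt-deterministic}, via Lemma~\ref{lem:hardestcase}), which asserts that a fine policy is optimal for $D_v \times \{k\}$ subject to profit constraint $R$ whenever $k \geq 1 + 1/c^*(D_v, R)$, where $c^*(D_v, R)$ is the largest $c$ such that there exists a price $p$ with $\rev_{D_v \times \{0\}}(0, c, p) \geq R$. So the task reduces to computing $c^*(D_v, R)$ for the given $D_v$ and $R$, and verifying $c^* \geq 1/2$.

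First I would write out $R$ explicitly. Since $y(3) = \sqrt{e}/3$, we have $1.2 \cdot y(3) = (2/5)\sqrt{e}$ and hence
\[
\ell(9, 1.2\cdot y(3), 0) = \frac{\ln(9 \cdot 1.2 \cdot y(3)) + 1}{9} = \frac{\ln(18/5) + 3/2}{9}.
\]
By Claim~\ref{claim:example-total-order}, at fine $1.2 \cdot y(3)$ the price $p = \pvalue(t_{12}, (1.2\cdot y(3), 0)) = 1 - \ell(9, \cdot)$ sells with probability exactly $3/4$, so
\[
R \;=\; \frac{3}{4}\Bigl(1 - \tfrac{\ln(18/5) + 3/2}{9}\Bigr) \;=\; \frac{7.5 - \ln(3.6)}{12}.
\]
Numerically $R \approx 0.5183 < 0.52$.

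Next I would compute $c^*$. Under any cost policy $(0,c,p)$ on $D_v \times \{3\}$, the loss $\ell(3,(0,c)) = 0$, so the buyer's post-regulation value is simply $v - c$. With only two value atoms, only two prices are worth considering: $p = 1 - c$ (selling to both types, profit $1-c$) and $p = 1.58 - c$ (selling only to the high type, profit $(1.58-c)/2$). Profits $\geq R$ are achievable precisely when $c \leq 1 - R$ or $c \leq (1.58 - 2R)/1$, i.e., $c \leq 1.58 - 2R$. So
\[
c^*(D_v, R) \;=\; 1.58 - 2R \;=\; 1.58 - \frac{7.5 - \ln(3.6)}{6}.
\]
Numerically $c^* \approx 0.5434$, so $1 + 1/c^* \approx 2.840 \leq 3$.

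Finally, I would conclude by invoking Lemma~\ref{lem:hardestcase}: since $3 \geq 1 + 1/c^*(D_v, R)$, a fine policy is optimal for $D_v \times \{3\}$ subject to the profit constraint $R$. There is no real obstacle here beyond the arithmetic; the crucial check is simply that $c^* \geq 1/2$, which a short numerical bound (e.g. $\ln(3.6) < 1.29$ gives $R < 6.22/12 < 0.52$, hence $c^* > 0.54 > 1/2$) confirms rigorously without appealing to a calculator.
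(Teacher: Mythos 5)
Your approach is essentially the same as the paper's: both arguments reduce to showing that $c^*(D_v,R)\geq 1/2$ (equivalently $T = 1+1/c^* \leq 3$) and then invoking Lemma~\ref{lem:hardestcase}. The difference is in how you certify $c^* \geq 1/2$: you compute $c^*$ directly from its definition by checking which cost policies on $D_v\times\{0\}$ clear the profit threshold, arriving at $c^*=1.58-2R\approx 0.543$; the paper instead checks that the particular fine policy $(y(3),0)$ is feasible (charging $1.58-\ell(3,y(3),0)=1.08$ and selling with probability $1/2$ yields profit $0.54 > R$), noting $\ell(3,y(3),0)=1/2$ by the choice of $y(k)$. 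These are two ways of exhibiting that a uniform loss of $1/2$ is compatible with profit $\geq R$. Your route has the small pedagogical advantage of pinning down $c^*$ exactly and making the inequality $1+1/c^*\leq 3$ fully transparent.

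One small slip in the closing parenthetical: to bound $R = (7.5-\ln 3.6)/12$ \emph{from above}, you need a \emph{lower} bound on $\ln 3.6$, not the upper bound ``$\ln(3.6)<1.29$'' you wrote. For instance $\ln 3.6 > 1$ (since $3.6 > e$) already gives $R < 6.5/12 \approx 0.542$, hence $c^* = 1.58 - 2R > 0.496$, which is borderline; a slightly sharper $\ln 3.6 > 1.02$ (clear since $e^{1.02}<3$) gives $R<0.54$ and $c^*>0.5$. The exact numerics you computed earlier are right, so the proof is correct; only that last sanity-check inequality points the wrong way.
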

\begin{proof}
Another way to state \Cref{lem:hardestcase} is to observe that given a profit constraint $R$, if $\frac{1}{T - 1}$ is the highest loss we can have, then $T$ is the threshold where for all $D_v \times \{k\}$, $k \geq T$, the optimal policy is a fine policy. Observe that $\ell(k, y(k), 0) = \frac{1}{k-1}$. This implies, if $y(3)$ is feasible on $(D_v \times \{3\}, R)$ then a fine policy is optimal.

Let $s = (y(3), 0)$ and suppose the seller sells at price $1.58 - \ell(3, y(3), 0)$ having a sale with probability $\frac{1}{2}$ and getting profit $0.54$. We have $R < 0.52$; therefore, the policy $s$ is feasible. By definition of $y(k)$, since $(y(3), 0)$ is feasible, the threshold $T$ where fine policies are optimal is at least $3$. We can conclude, with distribution $D_v \times \{3\}$ and profit constraint $R$, we have an optimal fine policy with fine at least $y(3)$.
\end{proof}
this completes the proof of \Cref{prop:homogeneous-counterexample-2}.
\end{proof}}
\end{document}